\def\all{all}
\def\files{all}
\all \typeout{Including all files.} \else \typeout{Including only \files.} \includeonly{\files} \fi
\newcommand{\abs}[1]{\left| #1\right|}
\providecommand{\norm}[1]{\left\lVert#1\right\rVert}
\newcommand{\eps}{\varepsilon}
\newcommand{\R}{\mathcal{R}}
\newcommand{\br}[1]{\left\{#1\right\}}
\newcommand{\REAL}{\ensuremath{\mathbb{R}}}
\newcommand{\loss}{\mathrm{loss}}
\newcommand{\hlf}{\mathrm{hlf}}
\newcommand{\size}{\mathrm{size}}
\newcommand{\prob}{\mathrm{prob}}
\newcommand{\proj}{\mathrm{\pi}}
\newcommand{\lip}{\mathrm{lip}}
\newcommand{\timee}{\mathrm{time}}
\newcommand{\closest}{\mathrm{closest}}
\DeclareMathOperator*{\arginf}{arg\,inf}
\DeclareMathOperator*{\argsup}{arg\,sup}
\newtheorem{theorem}{Theorem}
\newtheorem{lemma}[theorem]{Lemma}
\newtheorem{observation}[theorem]{Observation}
\newcommand{\Q}{\mathcal{Q}}
\newcommand{\of}[1]{\left(#1\right)}
\newtheorem{definition}[theorem]{Definition}
\newtheorem{corollary}[theorem]{Corollary}
\newcommand{\dist}{D}
\newcommand{\range}{\mathrm{range}}
\newcommand{\ranges}{\mathrm{ranges}}
\newcommand{\sgn}{\mathrm{sgn}}
\newcommand{\cost}{\mathrm{cost}}
\newcommand{\coreset}{\textsc{Coreset}}
\newcommand{\bcoreset}{\textsc{Coreset-Framework}}
\newcommand{\abapprox}{$(\alpha,\beta)$-approximation}
\newcommand{\bicriteria}{\textsc{Bi-Criteria-Approximation}}
\newcommand{\constantapprox}{\textsc{Centroid-Set}}
\newcommand{\sensitivityoftranslatedlines}{\textsc{Sensitivity-Bound}}
\newcommand{\sensitivityforweightedcenters}{\textsc{Weighted-Centers-Sensitivity}}
\newcommand{\streamingcoreset}{\textsc{Streaming-Coreset}}
\renewcommand{\paragraph}[1]{\medskip\noindent\textbf{{#1} }}
\DeclareMathOperator*{\argmin}{arg\,min}
\DeclarePairedDelimiter{\ceil}{\lceil}{\rceil}
\begin{document}
\title{$k$-Clustering of Lines and Its Applications}

\author{Yair Marom}

\department{Department of Computer Science}

\degree{Master of Science in Computer Science}

\degreemonth{September,}
\degreeyear{2019}
\thesisdate{September 25, 2019}

\supervisor{Dan Feldman}{Ph.D.}

\chairman{Ronen Shaltiel}{Chairman, Department Committee on Graduate Theses}

\maketitle





\cleardoublepage

\section*{Acknowledgments}

The novel results of this work were possible due to the help and support of several people, I would like to take the opportunity to show my appreciation.\\

\noindent First and foremost, I would like to thank my supervisor, Dr. Dan Feldman. The door to Dr.Feldman's office was always open whenever I had a question about my research or writing. He consistently allowed this paper to be my own work, but steered me in the right direction whenever he thought I needed it. It was a great honor for me to work with and learn from such a great researcher.\\

\noindent Finally, I must express my very profound gratitude to my family and friends for providing me with unfailing support and continuous encouragement throughout my years of study and through the process of researching and writing this thesis. This accomplishment would not have been possible without them. Thank you.


\include{signature}
\pagestyle{plain}
\renewcommand{\cftpartleader}{\cftdotfill{\cftdotsep}} 
\renewcommand{\cftchapleader}{\cftdotfill{\cftdotsep}} 
\renewcommand{\cftsecleader}{\cftdotfill{\cftdotsep}}
\tableofcontents

\cleardoublepage
\begin{abstractpage}
\addcontentsline{toc}{chapter}{Abstract}

The input to the \emph{$k$-median for lines} problem is a set $L$ of $n$ lines in $\REAL^d$, and the goal is to compute
a set of $k$ centers (points) in $\REAL^d$ that minimizes the sum of squared distances over every line in $L$ and its nearest center. This is a straightforward generalization of the $k$-median problem where the input is a set of $n$ points instead of lines.

We suggest the first PTAS that computes a $(1+\eps)$-approximation to this problem in time $O(n \log n)$ for any constant approximation error $\eps \in (0, 1)$, and constant integers $k, d \geq 1$. This is by proving that there is always a weighted subset (called coreset) of $dk^{O(k)}\log (n)/\eps^2$ lines in $L$ that approximates the sum of squared distances from $L$ to \emph{any} given set of $k$ points. 

Using traditional merge-and-reduce technique, this coreset implies results for a streaming set (possibly infinite) of lines to $M$ machines in one pass (e.g. cloud) using memory, update time and communication that is near-logarithmic in $n$, as well as deletion of any line but using linear space. These results generalized for other distance functions such as $k$-median (sum of distances) or ignoring farthest $m$ lines from the given centers to handle outliers.

Experimental results on 10 machines on Amazon EC2 cloud show that the algorithm performs well in practice.
Open source code for all the algorithms and experiments is also provided.\\

This thesis is an extension of the following accepted paper: 
"$k$-Means Clustering of Lines for Big Data", 
by Yair Marom \& Dan Feldman, 
Proceedings of \textbf{NeurIPS 2019 conference}, to appear on December 2019.
\end{abstractpage}

\newpage
\addcontentsline{toc}{chapter}{List of Tables}
\listoftables
\newpage
\addcontentsline{toc}{chapter}{List of Figures}
\listoffigures
\clearpage
\setcounter{page}{1}
\pagenumbering{arabic}

\chapter{Introduction}
\section{Background}

Clustering is the task of partitioning the input set to subsets, where items in the same subset (cluster)  are similar to each other, compared to items in other clusters. There are many different clustering techniques, but arguably the most common in both industry and academy is the $k$-mean problem, where the input is a set $P$ of $n$ points in $\REAL^d$, and the goal is to compute a set $C$ of $k$ centers (points) in $\REAL^d$, that minimizes the sum of squared distances over each point $p \in P$ to its nearest center in $C$, i.e.
$$ 
C \in \argmin_{C' \subseteq \REAL^d, \abs{C'}=k} \sum_{p \in P} \min_{c' \in C'} \norm{p-c'}^2.
$$
A very common heuristics to solve this problem is the Lloyd's algorithm \cite{advancarefulseed,theeflloyd}, that is similar to the EM-Algorithm that is described in Section~\ref{section - experimental results}.

We consider a natural generalization of this $k$-mean problem, where the input set $P$ of $n$ points is replaced by a set $L$ of $n$ lines in $\REAL^d$; See Fig.~\ref{figure - problem statement}. Here, the distance from a line to a center $c$ is the closest Euclidean distance to $c$ over all the points on the line. Since we only assume the ``weak" triangle inequality between points, our solution can easily be generalized to sum of distances to the power of any constant $z\geq1$ as explained e.g. in~\cite{charikar,oneshot} and Section~\ref{section - problem statement}.

\textbf{Motivation} for solving the $k$-line median problem arises in many different fields, when there is some missing entry in all or some of the input vectors, or incomplete information such as a missing sensor. For example, a common problem in Computer Vision is to compute the position of a point or $k$ points in the world, based on their projections on a set of $n$ 2D-images, which turn into $n$ lines via the pinhole camera model; See Fig~\ref{figure - drone}, and~\cite{liu2001using, williams2005analytical} for surveys. 

In Data Science and matrix approximation theory, every missing entry turns a point (database's record) into a line by considering all the possible values for the missing entry. E.g., $n$ points on the plane from $k$-median clusters would turn into $n$ horizontal/vertical lines that intersect "around" the $k$-median centers. The resulting problem (under similar Maximum-Likelihood arguments) is then $k$-median for $n$ lines \cite{ren2015does, shen2016online}. One can consider also an applications to semi-supervised learning - $k$-mean for mixed points and lines. This problem arises when lines are unlabeled points (last axis is a label) and we want to add a label to the farthest lines from the points.

\begin{figure} [!ht]
	\centering
	\includegraphics[width=10cm, height=6cm]{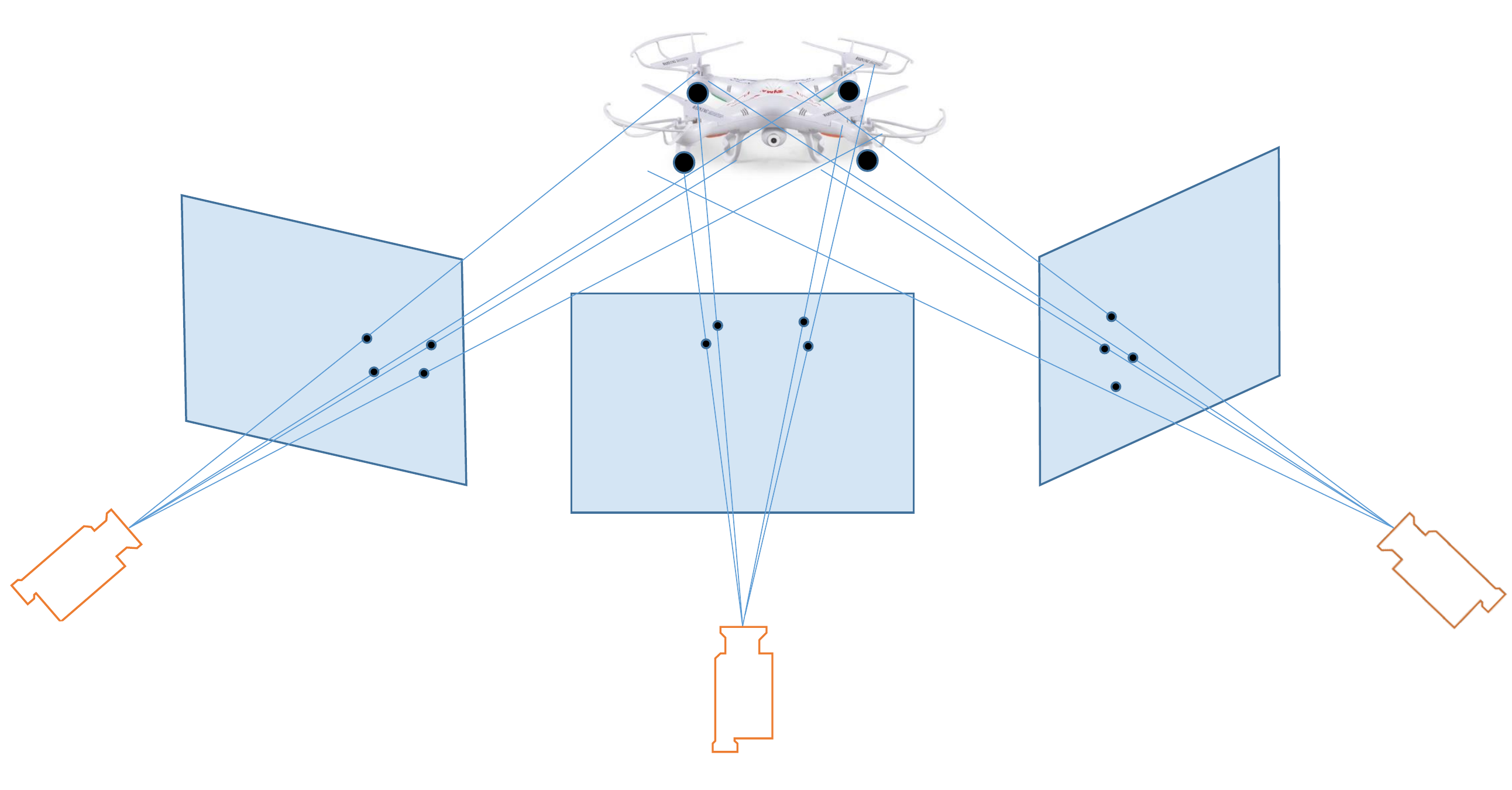}
		\caption{\textbf{Application of k-line median for computer vision.} Given a drone (or any other rigid body) that is captured by $n$ cameras - our goal is
to locate the 3 dimensional position of the drone in space by  identifying $k=4$ fixed known markers/features on the drone. Each point on each image corresponds to a line that passes through it and the pin-hole of the camera. Without noise, all the lines intersect at the same point (marker in $\REAL^3$). Otherwise their 1-median is a natural approximation.}
		\label{figure - drone}
\end{figure}

\section{Related Work}
The $k$-median problem and its variance has been researched in numerous papers over the recent decades, especially in the machine learning community~, see \cite{krause2010discriminative,nadler2006diffusion,xu2005maximum,bach2004learning} and references therein. There are also many results regarding projective clustering, when the $k$ centers are replaced by lines or $j$-dimensional subspaces instead of points.

However, significantly less results are known for the case of clustering subspaces, or even lines. A possible reason might be to the fact that the triangle inequality or its weaker version holds for a pair of points but not for lines, even in the planar case: two parallel lines can have an arbitrarily large distance from each other, but still intersect with a third line simultaneously.
Gao, Langebreg and Schulman~\cite{clusteringinhigh} used Helly's theorem \cite{hellystheorem} (intersection of convex sets) to introduce the "$k$-center problem" for lines, that aims to cover a collection of lines by the smalest $k$ balls in $\mathbb{R}^3$.

\begin{figure} 
	\centering
	\includegraphics[width=9cm, height=5cm]{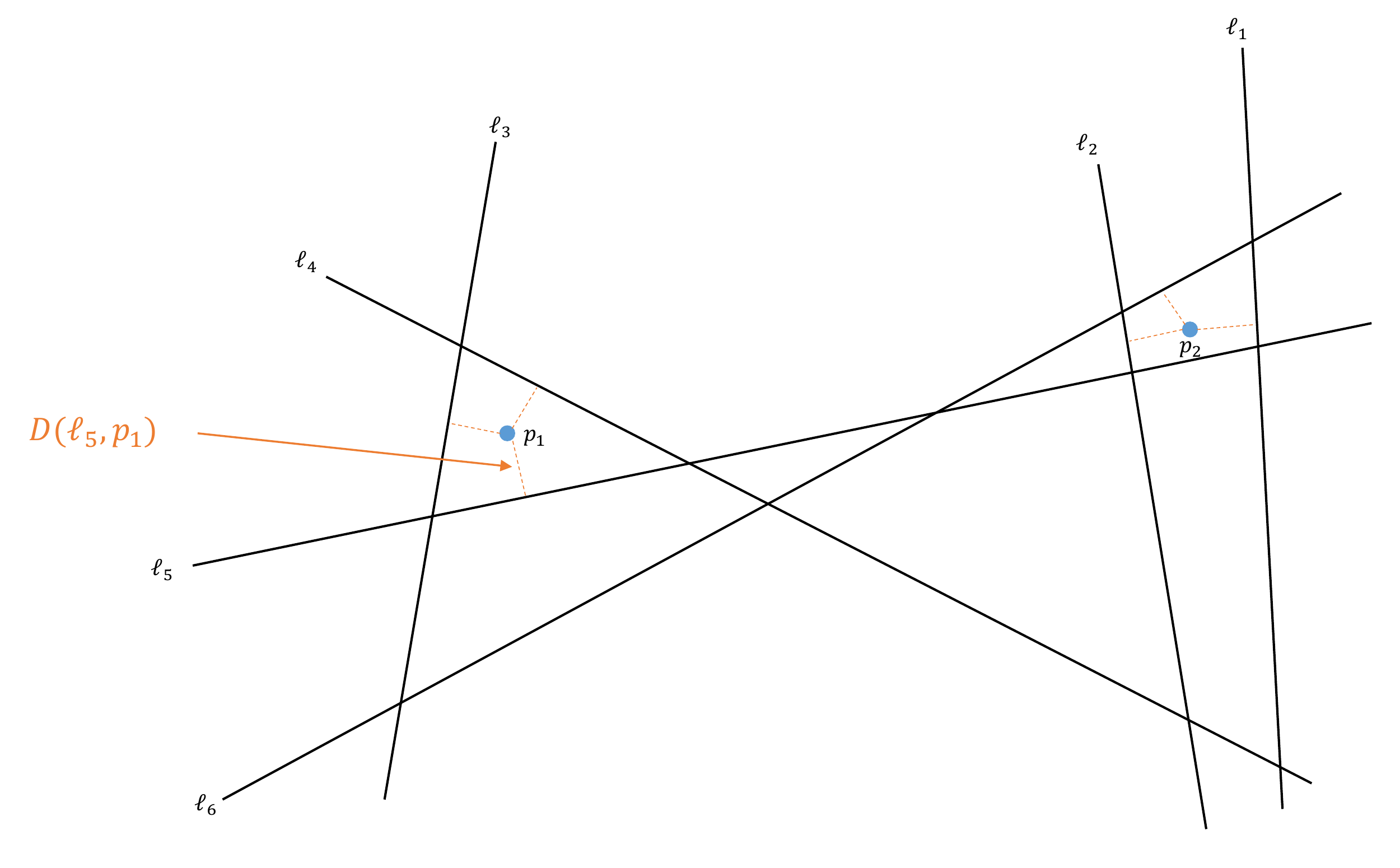}
		\caption{\textbf{Problem Statement demonstration for the planar case.} The input is a set $L=\br{\ell_1, \ldots, \ell_6}$ of $n=6$ lines in $\REAL^2$, and our goal is to find the $k=2$ points (centers) $p_1, p_2 \in \REAL^2$ that minimize the sum of Euclidean distances from each center to its nearest line in $L$. See Section~\ref{subsection - preliminaries} definition of the distance function $D$.}
		\label{figure - problem statement}
\end{figure}

Langebreg and Schulman~\cite{analysincompletedata} addressed the $1$-convex sets center problem that aims to compute a ball that intersects a given set of $n$ convex sets such as lines and $\Delta$-affine subspaces. This type of non-clustering problems $(k=1)$ is easier since it admits a convex optimization problem instead of a clustering non-convex problem.

While this thesis suggests provable solutions for clustering lines, many heuristics were provided over the years.
For example, Ommer et al.~\cite{OmmerMalik} used the Hough Transform heuristics to suggest candidates solutions in $\REAL^d$ together with other techniques such as~\cite{LeibeLeonardisSchiele} for object clustering in the transformed space. See  references therein for many other related heuristics.

Unlike the case of numerous theoretical papers that study the $k$-mean problem for points, we did not find any provable solution for the case of $k$-line median problem or even an efficient PTAS (polynomial approximation scheme). There are very few related results that we give here. A solution for the special case of $d=2$ and sum of distances was considered in~\cite{perets2011clustering}. Lee and Schulman \cite{lee2013clustering} studied the generalization of the $k$-center problem (maximum over the $n$ distances, instead of their sum), for the case where the input is a set of $n$ affine subspaces, each of dimension $\Delta$. In this case, the size of the coreset is exponential in $d$, which was proved to be unavoidable even for a coreset of $1$-center (single point) for this type of covering problems. The corresponding covering problems can then be solved using traditional computational geometry techniques.


Table \ref{table - related work} summarizes the above results.

\begin{table}[h!]
\centering
\begin{tabular}{|c|c|c|c|}
 \hline
 \textbf{Problem} & \textbf{Running Time} & \textbf{Approx. Factor} & \textbf{Papaer} \\ [1ex]
  \hline
 $1$-line center in $\REAL^d$& $nd(1/\eps)^{O(1)}$ & $1+\eps$ & \cite{analysincompletedata} \\ [1ex]
  \hline
 $2$-line centers in $\REAL^2$& $O\of{n \log\of{1/\eps}\of{d +  \log n}}$ & $2+\eps$ & \cite{clusteringinhigh} \\ [1ex]
  \hline
 $3$-line centers in $\REAL^2$& $O\of{nd \log(1/\eps)+\frac{n \log^2(n) \log(1/\eps)}{\eps}}$ & $2+\eps$ & \cite{clusteringinhigh} \\ [1ex]
  \hline
 $1$-center for convex-sets in $\REAL^d$ & $O\of{n^{\Delta+1}d (1/\eps)^{O(1)}}$ & $1+\eps$ & \cite{analysincompletedata} \\ [1ex]
  \hline
 $k$-line median in $\REAL^3$ & iterative, unbounded & unbounded & \cite{OmmerMalik} \\ [1ex]
  \hline
 $k$-line median in $\REAL^2$ & $n\of{\frac{\log n}{\eps}}^{O(k)}$ & $1+\eps$ & \cite{perets2011clustering} \\ [1ex]
 \hline
  $1$-$\Delta$-flats center in $\REAL^d$ & $O\of{\frac{nd\Delta}{\eps^2}\log\frac{\Delta}{\eps}}$ & $1+\eps$ & \cite{lee2013clustering} \\ [1ex]
 \hline
$2$-$\Delta$-flats centers in $\REAL^d$ & $O\of{dn^2\log n}$ & $O\of{\Delta^{1/4}}$ & \cite{lee2013clustering} \\ [1ex]
 \hline
$3$-$\Delta$-flats centers in $\REAL^d$ & $2^{O(\Delta(1+1/\eps^2))}nd$ & $1+\eps$ & \cite{lee2013clustering} \\ [1ex]
 \hline
$k$-$\Delta$-flats centers in $\REAL^d$ & $2^{O(\Delta k \log k(1+1/\eps^2))}nd$ & $1+\eps$ & \cite{lee2013clustering} \\ [1ex]
 \hline
$k$-line median in $\REAL^d$ & $O\of{d^2 n\log (n) k \log k} + ndk^{O(k)}$ & $1+\eps$ & \textbf{Our} \\ [1ex]
 \hline
\end{tabular}
\caption{\textbf{Summary of related results for $k$ centers of $n$ points in $\REAL^d$.} 
The dimension of the quarried subspaces is denoted by $\Delta$ and the error rate is by $\eps$.}
\label{table - related work}
\end{table}

\section{Main Contribution}
Our main technical result is an algorithm that gets a set $L$ of $n$ lines in $\REAL^d$, an integer $k \geq 1$, and computes an $\eps$-coreset (see Definition \ref{definition - coreset}) of size $dk^{O(k)}\log (n)/\eps^2$ for $L$ and every error parameter $\eps > 0$, in a near-linear running time in the number of data lines $n$, and polynomial in the dimensionality $d$ and the number $k$ of desired centers; See Theorem \ref{theorem - coreset offline main} and Theorem \ref{theorem - coreset streaming main} for details and exact bounds. 

Using this coreset with a merge-and-reduce technique, we achieve the following results:

\begin{enumerate}
  \item An algorithm that, during one pass, maintains and outputs a $(1 + \varepsilon)$-approximation to the $k$-line median of the lines seen so far; See Definition~\ref{definition - k mean}.
  
  \item A streaming algorithm that computes an $(1 + \varepsilon)$-approximation for the $k$-line median of a set $L$ of lines that may be distributed (partitioned) among $M$ machines, where each machine needs to send only $d^3k^{O(k)}\log^2n$ input lines to the main server at the end of its computation.
  
  \item Experimental results on 10 machines on Amazon EC2 Cloud \cite{amazoneec2} show that the algorithm performs well in practice, boost the performance of existing EM-heuristic \cite{Llyoodkmeans}. Moreover, open source code for all the algorithms and experiments is provided in \cite{my_code}.
\end{enumerate}

\chapter{Problem Statement} \label{section - problem statement}
\section{Preliminaries} \label{subsection - preliminaries}
From here and in the rest of the paper, the following holds:
\begin{itemize}
\item For an integer $n \geq 1$ we define $[n] = \br{1, \ldots, n}$.

\item We denote by $\Q_k=\br{Q\subseteq\REAL^d\mid |Q|=k}$ the family of all sets which are the union of $k$ points in $\REAL^d$.

\item We assume that we are given a function $D: \REAL^d \times \REAL^d \to \REAL$ and a constant $\rho>0$ such that $\dist(a,b) \leq \rho (\dist(a,c) + \dist(c,b))$ for every $a,b,c \in \REAL^d$.

\item Given a supremum $\displaystyle \sup_x \frac{f_1(x)}{f_2(x)}$, we assume that $f_2(x) \neq 0$.
\end{itemize}

\begin{definition} [weighted set]
A weighted set of lines is a pair $L'=(L,w)$ where $L$ is a set of lines in $\REAL^d$, and $w : L \to (0,\infty)$ is a function that maps every $\ell \in L$ to $w(\ell) \geq 0$, called the \emph{weight of $\ell$}. A weighted set $(L,1)$ where $1$ is the weight function $w : L \to \br{1}$ that assigns $w(\ell) = 1$ for every $\ell \in L$ may be denoted by $L$ for short.
\end{definition}

We will use the following definitions and observation due to Jubran et al. \cite{Jubran} to generalize our results for handling robust and other variants of distance functions. 

\begin{definition} [Non-decreasing funtion \cite{Jubran}]
for every pair of vectors $v = (v_1,\ldots, v_n)$ and 
$u = (u_1, \ldots ,u_n)$ in $\REAL^n$, we denote 
$v \leq u$ if $v_i \leq u_i$, for every 
$i \in [n]$. Similarly, $f : \REAL^n \to [0, \infty)$ is \emph{non-decreasing} if $f(v) \leq f(u)$ for every $v \leq u \in \REAL^d$.
\end{definition}
\begin{definition} [log-Lipschitz function \cite{Jubran}] \label{definition - log lipshitz fun}
Let $r,n \geq 1$ be two integers, $I$ be a subset of $\REAL^n$ and $h : I \to [0, \infty)$ be a non-decreasing function. Then $h(x)$ is $r$-$\mathrm{log}$-$\mathrm{Lipschitz}$ over $x \in I$, if for every 
$c \geq 1$ and $\displaystyle \ x \in I \cap \frac{I}{c}$, we have $h(cx) \leq c^rh(x)$. The parameter $r$ is called the $\mathrm{degree}$ of $h$.
\end{definition}

\begin{definition} [cost function \cite{Jubran}] \label{definition - jubran cost robust}
Let $X$ be a set called \emph{ground set}, $A=\br{a_1, \ldots, a_n} \subset X$ be a finite set and let $Q$ be a set called \emph{queries}. 
Let $\mathrm{dist}:X\times Q \to [0,\infty)$ be a function. Let $\lip:[0,\infty) \to [0,\infty)$ be an $r$-log-Lipschitz function, 
$f : [0,\infty)^n \to [0,\infty)$ be an  $s$-log-Lipschitz function. For every $q\in Q$ we define
$$\cost_{\mathrm{dist},\lip,f}(A,q)=f(\lip(\mathrm{dist}(a_1,q)), \ldots, \lip(\mathrm{dist}(a_n,q))).$$
\end{definition}

\begin{definition}[distance]
The Euclidean distance between a pair of points is denoted by the function $\dist:\REAL^d \times \REAL^d \rightarrow [0,\infty)$, s.t. for every $x,y \in \REAL^d$ we have $\dist(x,y) = \norm{x-y}_2$. For every set $X \subseteq \REAL^d$ and a point $x \in \REAL^d$, we define the distance from $X$ to $x$ by $\dist(X,x) = \inf_{q \in X} \dist(q,x)$, and for every set $Y \subseteq \REAL^d$, we denote the distance from $X$ to $Y$ by $\dist(X, Y) = \inf_{(x,y)\in X \times Y} \dist(x, y)$. 
\end{definition}

\begin{definition}[cost] \label{definition - cost}
For every set $P \subseteq \REAL^d$ of $k$ points and a weighted set of lines $L' = (L,w)$ in $\REAL^d$, we denote the sum of weighted distances from $L$ to $P$ by $\cost(L',P)=\sum_{\ell \in L} w(\ell)\dist(\ell, P)$.
\end{definition}

\begin{definition}[projection $\pi$]
For every two sets $X,Y \subseteq \REAL^d$, we denote $\proj(X,Y) \in \arginf_{y \in Y} \dist(X,y)$ to be the closest point in $Y$ to $X$, ties are broken arbitrary.
\end{definition}

\begin{definition} [closest] \label{definition - closest}
Let $X_1,\ldots,X_n \subseteq \REAL^d$ be $n$ lines in $\REAL^d$, $X=\br{X_i \mid i \in [n]}$, $m \geq 1$ be an integer and $B \subseteq \REAL^d$ be a finite set of points. We define 
$$
\closest(X,B,m) \in \argmin_{X' \subseteq X \abs{X'}=m} \cost(X',B),
$$ 
as the $m$ closest lines to $B$ in $X$. Ties are broken arbitrarily. 
\end{definition}

\begin{definition} [query space \cite{zahimsc}] \label{definition - k query space}
Let $Y$ be a set called \emph{query set} and $P' = (P,w)$ be a weighted set. 
Let $f : P \times Y \to [0,\infty)$ be a function called a \emph{kernel function} and 
$\loss : \REAL^d \to [0, \infty)$ be a function that assigns a non-negative real number for every real vector. The tuple 
$(P',Y,f,\loss)$ is called a 
\emph{query space}. For every $y \in Y$ we define the overall fitting error of $P'$ to $y$ by
$$
f_{\loss}(P',y) 
\coloneqq \loss \of{\of{w(p)f(p,y)}_{p \in P}}
= \loss\of{w(p_1)f(p_1,y),\ldots, u(p_{\abs{P}})f(p_{\abs{P}},y)}.
$$
\end{definition}

\section{Problem Statement and Theoretical Result}
In the familiar $k$-median clustering problem, the input is a set $P$ of $n$ points in $\REAL^d$, and the goal is to compute a set $C$ of $k$ centers (points) in $\REAL^d$, that minimizes the sum of distances over each point $p \in P$ to its nearest center in $C$, i.e.
$$
C \in \argmin_{C' \subseteq \REAL^d, \abs{C'}=k} \sum_{p \in P} \min_{c' \in C'} \norm{p-c'}.
$$
A natural generalization of the $k$-median problem is to replace the input set of points $P$ by a set $L$ of $n$ lines in $\REAL^d$.
\begin{definition}[$k$-median for lines] \label{definition - k mean}
Let $L'=(L',w)$ be a weighted set of lines in $\REAL^d$ and $k \geq 1$ be an integer. A set $P^* \subseteq \REAL^d$ is a \emph{$k$-median} of $L'$ if it minimizes $\cost(L',P)$ over every set $P$ of $k$ points in $\REAL^d$.

\end{definition}
In this thesis, for every weighted set $L'=(L,w)$ of finite number of lines in $\REAL^d$, we aim to compute a weighted set of lines $C'=(C,u)$ where $C \subseteq L$, which is a small summarization of $L'$, in the sense that $\cost(C',P)$ approximates $\cost(L', P)$ for every set $P \subseteq \REAL^d$ of $k$ points. This enables us to boost performance of common state-of-the-art heuristics, since we apply them on a much smaller set of lines (the coreset sample). 

Alternatively, we can compute a PTAS for the k-line median of $L'$, i.e., a $(1+\eps)$-approximation in time that is near-linear in $n=|L|$.

\paragraph{Robustness to Outliers.} More generally, we suggest solution of the $k$-median with outliers resistance problem, that are robust to outliers, i.e., lines in $L$ that are farthest from the desired set $P$ of $k$ centers. This problem is much harder than the original $k$-median problem since we need to compute the centers and outliers simultaneously, which is usually harder problem.

\begin{definition}
Let $v \in \REAL^n$ and $j \in [n]$. Let $\mathrm{smallest}(v, j) = (v(1), \ldots, v(j)) \in \REAL^j$ denote the smallest $j$ entries in $v$, ties broken arbitrary. For every $z > 0$ we denote by $\norm{v}_{z,j} = \norm{\mathrm{smallest}(v, j)}_z$.
\end{definition}

The corresponding loss function for every set $P$ of $k$ points is then
$$
\norm{(\dist(\ell_1,P), \ldots, \dist(\ell_n,P))}_{1,n-j}
.$$

\paragraph{Coreset.} Coreset is a problem dependent data summarization. The definition of coreset is not consistent among papers. In this thesis, the input is usually a set of lines in $\REAL^d$, but for the streaming case in Section~\ref{section - Coreset for Streaming Data} we compute coreset for union of (weighted) coresets and thus weights will be needed. We use the folowing definition of Feldman and Kfir \cite{zahimsc}.

\begin{definition} [$\eps$-coreset \cite{zahimsc}] \label{definition - coreset}
For an approximation error $\eps > 0$, the weighted set $S' = (S,u)$ is called an \emph{$\eps$-coreset} for the query space $(P',Y,f,\loss)$, if $S \subseteqÂ„ P, u : S \to [0, \infty)$, and for every $y \in Y$ we have
$$
(1 - \eps)f_{\loss}(P', y) \leq f_{\loss}(S',y) \leq (1 + \eps)f_{\loss}(P', y).
$$
\end{definition}

\begin{theorem} [coreset for $k$-line median] \label{theorem - coreset offline main}
Let $L'=(L,w)$ be a weighted set of $n$ lines in $\REAL^d$, $k \geq 1$ be an integer, $\eps,\delta\in(0,1)$ and $m >1$ be an integer such that
$$
m \geq \frac{cd^2k\log^2(k) \log^2(n)+\log(1/\delta)}{\eps^2},
$$ 
for some universal constant $c>0$ that can be determined from the proof, and $\Q_k=\br{B\subseteq\REAL^d\mid\abs{B}=k}$. Let $(S,u)$ be the output of a call to $\coreset(L,k,m)$; see Algorithm \ref{algorithm - coreset}. Then, with probability at least $1-\delta$, $(S,u)$ is an $\eps$-coreset for the query space $(F^*_{L'} ,\Q_k, \dist, \norm{\cdot}_1)$, where $F^*_{L'}$ is defined as in Corollary~\ref{corollary - the range space is O(dklog k)}. Moreover, $(S,u)$ can be computed in time
$$
O\of{d^2 n\log (n) k \log k} + ndk^{O(k)}
.$$
\end{theorem}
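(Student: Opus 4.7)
The plan is to apply the standard sensitivity sampling paradigm of Feldman--Langberg, adapted for query spaces by Feldman and Kfir (Definition~\ref{definition - coreset}). Given for each line $\ell\in L$ an upper bound $s(\ell)$ on its sensitivity, a total sensitivity bound $t=\sum_{\ell\in L}s(\ell)$, and a pseudo-dimension $d^*$ of the induced range space $(F^*_{L'},\Q_k,\dist,\norm{\cdot}_1)$, drawing $m$ i.i.d.\ lines from $L$ with probabilities proportional to $s(\ell)$ and reweighting by $w(\ell)\,t/(m\,s(\ell))$ yields an $\eps$-coreset with probability at least $1-\delta$ once $m=\Omega((t\,(d^*+\log(1/\delta)))/\eps^2)$. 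Both the form of the coreset size $m$ and the running time in Theorem~\ref{theorem - coreset offline main} exactly match the output of such a framework, so the work reduces to producing its three ingredients.

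First I would call a bicriteria-approximation subroutine \bicriteriaAlg\ (or an equivalent centroid-set construction) that, in time $O(d^2n\log n\cdot k\log k)+ndk^{O(k)}$, returns a set $B\subseteq\REAL^d$ of $\poly(k,\log n)$ candidate centers satisfying $\cost(L,B)\leq O(1)\cdot\cost(L,P^*)$, where $P^*$ is an optimal $k$-median of $L$. The $k^{O(k)}$ term in the running time comes precisely from this step, while the $O(d^2n\log n\cdot k\log k)$ term is the price of evaluating candidates against $L$. Using $B$ I would then upper-bound $s(\ell)=\sup_{P\in\Q_k}w(\ell)\dist(\ell,P)/\cost(L',P)$ via a weak-triangle-inequality argument: either the center of $P$ nearest to $\ell$ lies close to $\proj(\ell,B)$, in which case $\dist(\ell,P)$ is comparable to $\dist(\ell,B)$ and contributes a manageable share of $\cost(L',B)$, or else $\dist(\ell,P)\leq\rho(\dist(\ell,B)+\dist(B,P))$ and the $\dist(B,P)$ term is absorbed into $\cost(L,P)$ after averaging over $L$. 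The main obstacle is precisely that the triangle inequality fails for lines (two parallel lines can be arbitrarily far apart yet share many common transversals), so a single monolithic bound yields unbounded sensitivity; to bypass this I would partition $L$ into $O(\log n)$ groups by their distance to $B$ and apply the above dichotomy inside each group, paying at most a $\log n$ factor in the total sensitivity $t$.

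Combining these yields a total sensitivity of $t=O(dk\log k\cdot\log n)$, while the pseudo-dimension is bounded by $d^*=O(dk\log k)$ via Corollary~\ref{corollary - the range space is O(dklog k)}. Substituting into the Feldman--Kfir sampling theorem gives
$$
m=\Omega\!\left(\frac{t\,(d^*+\log(1/\delta))}{\eps^2}\right)=\Omega\!\left(\frac{d^2k^2\log^2k\log n+dk\log k\log n\log(1/\delta)}{\eps^2}\right),
$$
which is absorbed by the hypothesis $m\geq(cd^2k\log^2k\log^2n+\log(1/\delta))/\eps^2$ for a sufficiently large universal constant $c$. The running time is dominated by the bicriteria call, since evaluating $s(\ell)$ for each $\ell$ takes $O(d\abs{B})=O(dk\log n)$ time using the constructed set $B$, and drawing $m$ samples is $O(m)$. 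Together these establish both the coreset guarantee and the claimed running time.
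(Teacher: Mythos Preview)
Your overall framework (bicriteria $\to$ sensitivity bounds $\to$ sensitivity sampling with the VC-dimension of Corollary~\ref{corollary - the range space is O(dklog k)}) matches the paper, but the heart of the argument --- how to bound the sensitivities of lines --- has a genuine gap.

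The dichotomy you propose, together with the $O(\log n)$-level partition of $L$ by distance to $B$, does not survive the very obstacle you flag. Concretely: fix a bicriteria point $b\in B$ and its cluster $L_b$. For points one writes $\dist(p,P)\leq \dist(p,b)+\dist(b,P)$ and then charges $\abs{L_b}\cdot\dist(b,P)$ back to $\sum_{\ell'\in L_b}\dist(\ell',P)$ by the reverse triangle inequality. For lines this charging fails outright: all lines in $L_b$ may pass through (or arbitrarily near) a single query point $p\in P$ while $b$ is far from $p$, so $\sum_{\ell'\in L_b}\dist(\ell',P)$ can be essentially $0$ while $\abs{L_b}\cdot\dist(b,P)$ is huge. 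Grouping lines by $\dist(\ell,B)$ does nothing here, since lines in the same distance bucket can still all intersect $P$. Consequently your claimed total sensitivity $t=O(dk\log k\cdot\log n)$ is unsupported; no argument of the ``average $\dist(b,P)$ over the cluster'' type can close this.

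The paper's fix is different in kind, not in degree. After computing the bicriteria set $B$, it \emph{translates} each line $\ell\in L_b$ to the parallel line $\ell'$ through $b$ (Lemma~\ref{theorem - total sensitivity can be computed given a,b-approx}); the translation cost $\dist(\ell,\ell')=\dist(\ell,\hat P)$ gives the first, easy term $\alpha\cdot\dist(\ell,\hat P)/\cost(L,\hat P)$ in the sensitivity. For the second term, all translated lines in $L_b$ pass through $b$, and by Thales $\dist(\ell',p)=\norm{p-b}\cdot\dist(\ell',(p-b)/\norm{p-b})$; i.e., the query reduces to $k$ \emph{weighted} unit directions, and each line is represented by its intersection with the unit sphere (Algorithm~\ref{algorithm - sensitivityoftranslatedlines}). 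This is exactly the weighted-centers sensitivity problem of Feldman--Schulman (Theorem~\ref{theorem - total sensitivity of k weighted points theorem}), which yields the $k^{O(k)}\log n$ factor. That reduction is the missing idea in your sketch, and it is also the true source of the $ndk^{O(k)}$ term in the running time: in the paper the bicriteria call costs only $O(nd^2k\log k + m^2\log n)$ (Theorem~\ref{theorem - the output of a,b approx is indeed a,b approx}), and the $k^{O(k)}$ enters through the call to \sensitivityforweightedcenters, not through the bicriteria step as you state.
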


\begin{theorem} \label{theorem - coreset streaming main}
Let $stream=\br{\ell_1,\ell_2,\ldots}$ be a stream of lines in $\REAL^d$, and let $n>0$ denote the number of lines seen so far in the stream. Let $k \geq 1$ be an integer, $c>0$ be a suffice large constant, $\eps, \delta \in (0,1)$, and let 
$$
m \geq \frac{cd^2k\log^2(k)\log^2(n)\log(e/\delta)}{\eps^2}.
$$ 
For every $h \geq 1$ we define
\begin{equation} \label{equation - hlf(h) value}
\hlf(h) = \frac{ch^9m^3}{\ln^3(n)}.
\end{equation}
Let $S'_1,S'_2,\ldots$ be the output of a call to $\textsc{Streaming-Coreset}(stream, \eps/6, \delta/6,\coreset,\hlf)$, where a call to $\coreset((Q,w),\eps,\delta)$ returns a weighted set $S'=(S,u)$ whose overall weight is $\sum_{p \in S}u(p) = \sum_{p \in Q}w(p)$; See Alg. \ref{algorithm - streaming coreset}. Then, with probability at least $1-\delta$, the following hold. For every $n \geq 1$:\\
\\$(i)$ (Correctness) $S'_n$ is an $\eps$-coreset of $(L_n ,\Q_k, \dist, \norm{\cdot}_1)$, where $L_n$ is the first $n$ lines in $stream$.\\
\\$(ii)$ (Size)
$$
\abs{S'_n} \in O\of{\frac{m^3}{\ln^3n}}
.$$
$(iii)$ (Memory) there are
$$
b \in O\of{m^3}
$$
lines in memory during the streaming.\\
\\$(iv)$ (Update time) $S'_n$ is outputted in additional 
$$
\displaystyle t \in O\of{d^2 b\log (b) k \log k} + bdk^{O(k)}
$$
time after $S'_{n-1}$.\\
\\$(v)$ (Overall time) $S'_n$ is computed in $nt$ time.

\end{theorem}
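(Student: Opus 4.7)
The proof follows the standard merge-and-reduce paradigm of Bentley and Saxe, as first adapted to coresets by Har-Peled and Mazumdar. The stream is conceptually partitioned into a binary tree whose leaves hold batches of raw lines of size $\hlf(1)$, and whose internal nodes store the weighted set returned by a call to \coreset\ on the union of their children. The function $\hlf(h)=ch^9m^3/\ln^3(n)$ encodes the minimum buffer size at level $h$ at which a reduce is triggered. The plan is to establish $(i)$--$(v)$ by separate arguments, reusing the offline guarantee of Theorem~\ref{theorem - coreset offline main} at every internal node.

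For correctness $(i)$, the plan is to proceed by induction on the height of the tree. Merging two $\eps'$-coresets via weighted union preserves the $\eps'$-coreset property with zero additional error, while a subsequent reduce inflates the error by a multiplicative $(1+\eps')$ factor by Theorem~\ref{theorem - coreset offline main}. Over the $h\in O(\ln n)$ levels of the tree the accumulated error is $(1+\eps')^h$, which stays within $1+\eps$ provided $\eps'=\Theta(\eps/\ln n)$. The $\ln^2(n)$ factor appearing in the numerator of $m$ is present precisely so that each call to \coreset\ with parameter $\eps'=\Theta(\eps/\ln n)$ still uses a buffer of size compatible with $m$. For the failure probability, the plan is to union-bound over all $O(n/\hlf(1))$ invocations of \coreset\ in the lifetime of the stream; setting $\delta'=\delta/\poly(n)$ per call is absorbed by the $\log(e/\delta)$ factor already present in the definition of $m$, keeping the total failure at most $\delta$.

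For parts $(ii)$--$(v)$, the plan is to carefully bookkeep the contents of the merge-and-reduce tree. At any moment there are $O(\ln n)$ active levels, and level $h$ carries at most a constant number of buffers of size $\hlf(h)$. Summing $\hlf(h)$ over $h\leq O(\ln n)$ yields the memory bound $b\in O(m^3)$ of $(iii)$; the output coreset sits at the root and has size $O(m^3/\ln^3 n)$, giving $(ii)$. Each new line triggers at most one reduce operation per level, and by Theorem~\ref{theorem - coreset offline main} a reduce on a buffer of $b$ lines costs $O(d^2b\log(b)\,k\log k)+bdk^{O(k)}$. Charging this work to the line that triggered it gives the update-time bound $(iv)$, and multiplying by $n$ yields the overall time $(v)$.

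The main obstacle will be the joint calibration of $\eps'$, $\delta'$, and $m$ across all $O(\ln n)$ levels so that the multiplicative $(1+\eps')^{O(\ln n)}$ blow-up from the inductive composition and the union bound over the $O(n)$ invocations of \coreset\ simultaneously remain within the target $\eps$ and $\delta$; this is what forces the particular $\ln^2(n)\log(e/\delta)/\eps^2$ shape of $m$. A secondary technicality is verifying that the specific polynomial choice $\hlf(h)=ch^9m^3/\ln^3(n)$ telescopes to give exactly the advertised $O(m^3)$ memory and $O(m^3/\ln^3 n)$ output size, which reduces to bounding a geometric-type sum over the $O(\ln n)$ levels of the tree.
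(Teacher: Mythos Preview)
Your plan is the classical Bentley--Saxe/Har-Peled--Mazumdar merge-and-reduce analysis carried out from scratch. The paper takes a different route: it does not re-prove merge-and-reduce, but instead plugs the offline coreset of Theorem~\ref{theorem - coreset offline main} into a ready-made streaming framework from~\cite{zahimsc}. Concretely, the paper (i) observes that $(\coreset,\size,\timee)$ is an $(\eps,\delta)$-coreset scheme with $\size(n,n,\eps,\delta)\le m$, (ii) defines an auxiliary function $u(h)=2h(6h^3m)^{1/2}/\ln n$ and verifies $\size(2n,2n,\eps/h,\delta/4^h)\le(u(h)\ln n/(2h))^2\le\hlf(h)$, so that by Corollary~\ref{corollary - hlf is halving function of size} the given $\hlf$ is an $(\eps,\delta,6)$-halving function of $\size$, and (iii) reads off all five conclusions directly from the black-box Theorem~\ref{theorem - the output ofstream alg is coreset}. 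In particular, the peculiar exponent $9$ in $\hlf(h)=ch^9m^3/\ln^3 n$ and the cubic dependence on $m$ are artifacts of the halving-function machinery (Corollary~\ref{corollary - hlf is halving function of size} with $b=2$), not of a level-by-level tree analysis. Your inductive $(1+\eps')^h$ argument and union bound would morally reproduce the correctness part, but the paper never carries it out explicitly.

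There is also a concrete gap in your bookkeeping. You read $\hlf(h)$ as ``buffer size at tree level $h$'' and claim $\sum_{h\le O(\ln n)}\hlf(h)\in O(m^3)$. But $\sum_{h=1}^{H} h^9\in\Theta(H^{10})$, so with $H\in\Theta(\ln n)$ your sum is $\Theta(m^3\ln^7 n)$, not $O(m^3)$. The mismatch stems from a misreading of Algorithm~\ref{algorithm - streaming coreset}: there $h$ is the index of an \emph{outer epoch} (the ``for $h$ from $1$ to $\infty$'' loop), and within a fixed epoch all levels of the inner cascade use the \emph{same} leaf size $\hlf(h)$; it is not a per-level parameter. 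The memory and size bounds in the paper come from Theorem~\ref{theorem - the output ofstream alg is coreset} applied with a constant $h$, namely $b=\hlf(h)\cdot O(\log^{r+1}n)$ and $|S'_n|\le\size(\hlf(h),n,\eps,\delta)$, not from summing $\hlf$ over levels. If you want to pursue the direct route, you would need to re-derive this epoch-based variant (or the equivalent adaptive merge-and-reduce) rather than the plain level-indexed tree.
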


\chapter{Algorithms}
In this section we present a bi-criteria approximation algorithm for the $k$-line median problem. The main pseudo code is in Algorithm~\ref{algorithm - a,b approx} which calls Algorithm 1 as a sub-routine.

\paragraph{Overview of Algorithm~\ref{algorithm - 4 approx}.} The algorithm gets as an input a set $L$ of $n$ lines in $\REAL^d$, and returns a set $G \subseteq \REAL^d$ of ${n \choose 2}$ points that we call a centroid-set. This centroid set contains an approximated solution for the $k$-median of $L$ as proved in Lemma~\ref{lemma - 4 approx}. This is by simply iterating over every pair $(\ell,\ell')$ of input lines in $L$, and computing the closest point to $\ell'$ that is contained in $\ell$.

\begin{algorithm} 
	\caption{$\constantapprox(L)$\label{algorithm - 4 approx}}
	\begin{tabbing}
		\textbf{Input:} \quad\quad\= A finite set $L$ of $n$ lines in $\mathbb{R}^d$. \\
		\textbf{Output:} \>A set $G \subseteq \REAL^d$ of $O(n^2)$ points that satisfies Lemma~\ref{lemma - 4 approx}.
		
	\end{tabbing}
	\vspace{-0.3cm}	
	\nl \For{every ${\ell}\in L$}{
		
		\nl \For{every $\ell' \in L \setminus \ell$}{

			\nl Compute $q(\ell, \ell') \in \argmin_{x \in \ell} \dist(\ell',x)$ \\\quad\tcp{the closest point on $\ell$ to $\ell'$. Ties broken arbitrarily.}			
			
		}		
				
		\nl $Q(\ell)\coloneqq\br{q(\ell,\ell')\mid \ell'\in L\setminus \br{\ell}}$
		
		\nl $G \coloneqq \bigcup_{\ell \in L} Q(\ell)$
		
	}
	
	\nl \Return $G$
\end{algorithm}

\clearpage
\paragraph{Overview of Algorithm~\ref{algorithm - a,b approx}.} The input to the algorithm is a set $L$ consist of $n$ lines in $\REAL^d$ and a positive integer $m \geq 1$. In each iteration of the algorithm it picks a small uniform sample $S$ of the input in Line 4, compute their centroid-set $G$ using a call to Algorithm~\ref{algorithm - 4 approx} in Line 5, and add them to the output set $B$ in Line 6. Then,  in Line 7, a constant fraction of the closest lines to $G$ is removed from the input set $L$. The algorithm then continues recursively for the next iteration, but only on the remaining set of lines until almost no more lines remain. The output is the resulting set $B$.

\begin{algorithm}
	\caption{$\bicriteria(L, m)$\label{algorithm - a,b approx}}
	\begin{tabbing}
		\textbf{Input:} \quad\quad\= A set $L$ of $n$ lines in $\mathbb{R}^d$, and an integer $m \geq 1$.\\
		\textbf{Output:} \> A set $B\subseteq\REAL^d$ which is, with probability at least $1/2$, an $(\alpha,\beta)$-approximation \\ \>for the $k$-median of $L$, where $\alpha \in O(1)$ and $\beta=O\of{m^2 \log n}$.
\\\>\quad\tcp{See Definition~\ref{definition - a,b approx} and Theorem~\ref{theorem - the output of a,b approx is indeed a,b approx}.}
	\end{tabbing}
	\vspace{-0.3cm}
	
	\nl $B \coloneqq \emptyset$	
	
	\nl $X \coloneqq L$	
		
	\nl \While{ $ |X| > 100$ } {
				
		\nl Pick a sample $S$ of $\abs{S} \geq m$ lines, where each line $\ell \in S$ is sampled i.i.d and uniformly at random from $X$.
		
		\nl $G \coloneqq \constantapprox(S)$
		
		\nl $B \coloneqq B \cup G$
		
		\nl $X' \coloneqq$ the closest $7\abs{X}/11$ lines in $X$ to $G$. Ties broken arbitrarily.
				
		\nl $X \coloneqq X\setminus X'$
		
	}	
	
	\nl \Return $B$
	
\end{algorithm}

\begin{figure} [h]
	\includegraphics[scale=0.44]{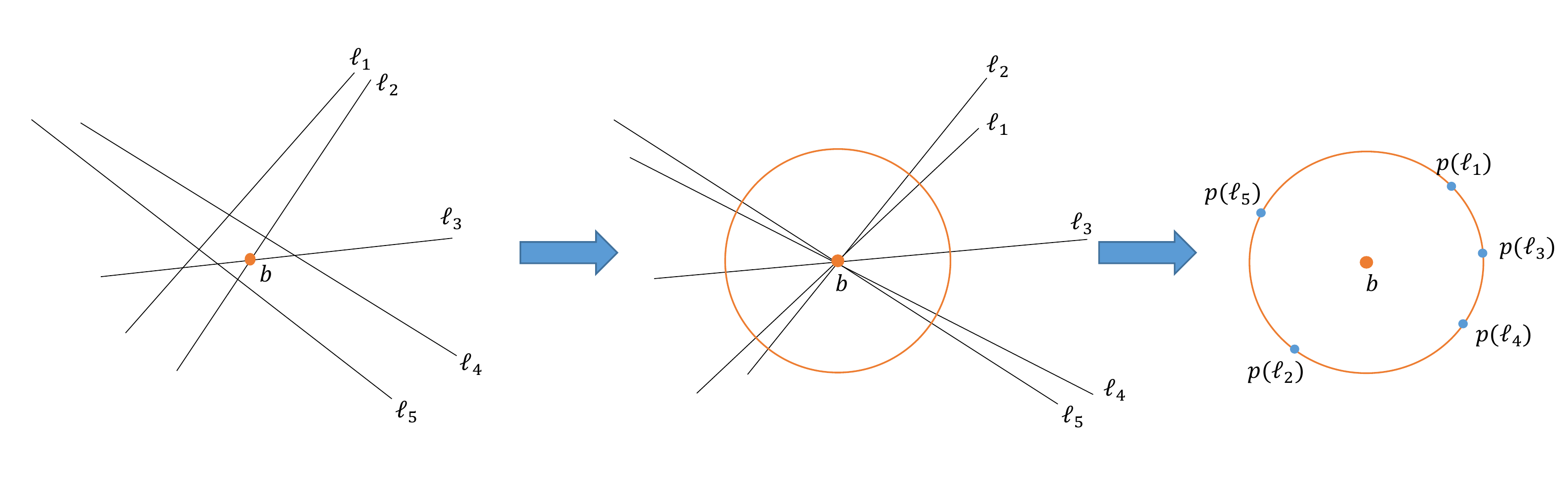}
		\caption{\textbf{Example of running Alg.~\ref{algorithm - sensitivityoftranslatedlines} for lines on the plane.} \textbf{(left)} The input is a set $L = \br{\ell_1, \ldots, \ell_5}$ of lines and a single center $b$ on the plane. \textbf{(middle)} Every line is translated to $b$ and we define unit sphere $\mathbb{S}^{d-1}$ around it. \textbf{(right)} The intersection point $p(\ell_i)$ of each line $\ell_i$ with the unit sphere $\mathbb{S}^{d-1}$ around $b$ is computed.}		
		\label{figure - lines sensitivity}
\end{figure}

\clearpage

\paragraph{Overview of Algorithm \ref{algorithm - sensitivityoftranslatedlines}.}
The input is a set $L$ of lines in $\REAL^d$, a point $b\in\REAL^d$ and an integer $k \geq 1$ for the number of desired centers. This procedure is called from Algorithm~\ref{algorithm - coreset}, where $b$ is an approximation to the $1$-median of $L$. The output is a function $s$ that maps every line $\ell \in L$ to $[0, \infty)$, and is being used in Lemma~\ref{theorem - total sensitivity can be computed given a,b-approx} proof. We first define in Line 1 a unit sphere $\mathbb{S}^{d-1}$ that is centered around $b$.
Next, in Line 3 for each line $\ell \in L$ we define $\ell' \subseteq \REAL^d$ to be the translation of the line $\ell$ to $b$.
In Lines 4--5, we replace every line $\ell'$ with one of its two intersections with $\mathbb{S}^{d-1}$, and define the union of these points to be the set $Q$.
In Line 6 we call the sub-procedure $\sensitivityforweightedcenters$ that is described in~\cite{outliers-resistance}. This procedure returns the sensitivities of the query space of $k$-weighted centers queries on $Q$. As stated in Theorem~\ref{theorem - total sensitivity of k weighted points theorem}, the total sensitivites of this coreset is $k^{O(k)}\log n$.
Finally, in Line 7, we convert the output sensitivity $s(p)$ of each point $p$ in $Q$ to the output sensitivity $s(\ell)$ of the corresponding line $\ell$ in $L$.

\begin{algorithm} 
	\caption{$\sensitivityoftranslatedlines(L,b,k)$\label{algorithm - sensitivityoftranslatedlines}}
	\begin{tabbing}
		\textbf{Input:} \quad\quad\=A set $L$ of $n$ lines in $\mathbb{R}^d$, a point $b \in \REAL^d$ and integer $k \geq 1$.\\
		\textbf{Output:} \>A (sensitivity) function $s : L \to [0,\infty)$.
	\end{tabbing}
	\vspace{-0.3cm}
	
	\nl $\mathbb{S}^{d-1} \coloneqq\br{x\in \REAL^d\mid \norm{x-b}=1}$ \tcp{the unit sphere that is centered at $b$.} 				
	
	 \For{$\ell \in L$}{		
			
		$\ell' \coloneqq$ the line $\br{x-b\mid x\in\ell}$ that is parallel to $\ell$ and intersects $b$
\tcp{see Fig.~\ref{figure - lines sensitivity}.}
		
		 $p(\ell') \coloneqq$ an arbitrary point in the pair $\ell' \cap \mathbb{S}^{d-1}$
		
	}	

	$Q \coloneqq Q \br{p(\ell') \mid \ell \in L}$					
	
	$u \coloneqq \sensitivityforweightedcenters(Q,2k)$ \tcp{see algorithm overview.}
	
Set $s:L \rightarrow [0,\infty)$ such that for every $\ell \in L$
\[
	s(\ell) \coloneqq u \of{p(\ell')}.
\]	
	
 \Return $s$
\end{algorithm}

\clearpage

\paragraph{Overview of Algorithm \ref{algorithm - coreset}.}
The algorithm gets a set $L$ of lines in $\REAL^d$, an integer $k \geq 1$ for the number of desired centers and a positive integer $m \geq 1$ for the coreset size, and returns an $\eps$-coreset for $L$; See Definition~\ref{definition - coreset}.
In Line 2 a small set $B$ of points that approximate the $k$-median of $L$ is computed via a call to \bicriteria. In Line 3 the lines in $L$ are clustered according to their nearest point in $B$, and in Line 5 the sensitivity of the lines in the cluster $L_b$ are computed for each center $b \in B$. In the second "for" loop between Lines 8--10 we set the sensitivity of each line to be the sum of the scaled distance of the line to its nearest center $b$ (translation), and the sensitivity $s_b$ that measure its importance with respect to its direction (rotation). Here, scaled distance means that the distance is divided by the sum of distances $\cost(L,B)$ over all the lines in $L$. The If statement in Line 7 is used to avoid division by zero. 
In Line 12 we pick a random sample $S$ from $L$, where the probability of choosing a line $\ell$ is proportional to its sensitivity $s(\ell)$. In Line 13 we assign a weight to each line, that is inverse proportional to the probability of sampling it. The resulting weighted set $(S,u)$ is returned in Line 14.

\begin{algorithm} 
	\caption{$\coreset(L,k,m)$\label{algorithm - coreset}}
	\begin{tabbing}
		\textbf{Input:} \quad\quad\= A finite set $L$ of lines in $\mathbb{R}^d$, number $k \geq 1$ of centers and the coreset size $m \geq 1$.\\
		\textbf{Output:} \>A weighted set (``coreset'') $(S,u)$ that satisfies Theorem \ref{theorem - coreset offline main}.
	\end{tabbing}
	\vspace{-0.3cm}
	
	\nl $j \coloneqq cdk \log_2 k$, where $c$ is a sufficient large constant $c>0$ that can be determined from the proof of Theorem~\ref{theorem - the output of a,b approx is indeed a,b approx}.	
	
	\nl $B \coloneqq \bicriteria\of{L, j}$ \tcp{see Algorithm~\ref{algorithm - a,b approx}}
	
	Compute a partition $\br{L_b\mid b\in B}$ of $L$ such that $L_b$ is the set (cluster) of lines that are closest to the point $b\in B$. Ties broken arbitrarily.		
		
	\For{every ${b}\in B$}{		
	
	 $s_b \coloneqq \sensitivityoftranslatedlines(L_b,b,k)$
\\\tcp{ the sensitivity of each line $\ell \in L_b$ that was translated onto $b$; see Algorithm~\ref{algorithm - sensitivityoftranslatedlines}}
	
	}
			
	 \For{every ${b}\in B$ and $\ell \in L_b$}{	
	
		\eIf{$\cost(L,B) > 0$}{
   $s(\ell) \coloneqq \displaystyle\ \frac{\dist(\ell,b)}{\cost(L,B)}\ +  2 \cdot s_b(\ell)$ 	
   }{
   $s(\ell) \coloneqq s_b(\ell)$ 	
  }

		  $\prob(\ell) \coloneqq \displaystyle\ \frac{s(\ell)}{\sum_{\ell' \in L}s(\ell')}\ $
		
	}		
	
	Pick a sample $S$ of at least $m$ lines from $L$, where each line $\ell \in L$ is sampled i.i.d. with probability $\prob(\ell)$.
	
	Set $u:S \rightarrow [0,\infty)$ such that for every $\ell \in S$	
\[
	u(\ell) \coloneqq \frac{1}{\abs{S}\prob(\ell)}.
\]
	
	\Return $(S,u)$
	
\end{algorithm}

\chapter{Analysis}
In this chapter we will prove and analyze the correctness and running time of each one of the aforementioned algorithms in the last chapter.

\section{Algorithm~\ref{algorithm - 4 approx}: \constantapprox}
\subsection{Analysis of Algorithm~\ref{algorithm - 4 approx}: \constantapprox}
Algorithm~\ref{algorithm - 4 approx} gets a set $L$ of $n$ lines in $\REAL^d$ and returns a set $G \subseteq \REAL^d$ of ${n \choose 2}$. In this section we prove that $G$ contains a constant factor approximation for the $k$-median of $L$, as stated in Lemma~\ref{lemma - 4 approx}.

A main observation that we use is that the distance from a point $p$ to a line $A$ is the same as its weighted distance to a point $H$. Here, $H$ and the weight depends only on the two lines and not on the point $p$. This is easy to see for the case of two lines on the plane. Less intuitive is the fact that this holds also for 2-lines in 3-dimensional space, and is described as follows.

\begin{lemma} [\cite{feldmanphd}] \label{theorem - weighted point for every subspace Feldman}
Let $f$ be a $j$-dimensional affine linear subspace in $\REAL^d$, for some $j \in [d-1]$. Let $g$
be an affine linear subspace in $\REAL^d$ of any dimension such that $g$ does not contain a translation of $f$.
Let $(v,\omega)$ denote the affine linear subspace $v$ and the constant $\omega > 0$ which are the output of the
algorithm $\textsc{Weighted-Flat}(f, g)$; see Fig 4.7 in \cite{feldmanphd}. Then $v$ is an affine linear subspace of dimension at most $j - 1$, and for each $p \in f$ we have
$$
\dist(g, p) = w \cdot \dist(v, p)
.$$
\end{lemma}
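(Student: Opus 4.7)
My plan is to reduce the lemma to analysis of a non-negative quadratic form on the affine subspace $f$, and then read off the pair $(v, \omega)$ from its structure. Write $f = p_0 + F$ and $g = q_0 + G$ with linear direction spaces $F$ (of dimension $j$) and $G$. For $p = p_0 + x \in f$, standard orthogonal projection yields
\[
\dist(g, p)^2 = \norm{\pi_{G^\perp}(p_0 - q_0) + \pi_{G^\perp}(x)}^2 = \norm{c + Tx}^2,
\]
where $c := \pi_{G^\perp}(p_0 - q_0)$ and $T := \pi_{G^\perp}|_F$. The assumption that $g$ contains no translation of $f$ is equivalent to $F \not\subseteq G$, i.e.\ $T \neq 0$, so $\ker(T) = F \cap G$ is a proper subspace of $F$ of dimension at most $j - 1$.

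I would next construct $(v, \omega)$ so that $\dist(v, p)^2 = \norm{c + Tx}^2 / \omega^2$ for every $x \in F$. Decomposing $F = \ker(T) \oplus F_1$ with $F_1 := (\ker T)^{\perp} \cap F$, the restriction $\tilde T := T|_{F_1}$ is injective, so $\norm{c + Tx}^2$ factors through $x \mapsto \pi_{F_1}(x)$. I would take the direction $V$ of $v$ to contain $\ker(T)$, so that $\dist(v, \cdot)$ also ignores the $\ker(T)$-component of $x$, and then use the singular value decomposition of $\tilde T^{\top} \tilde T$ to match the quadratic part: the largest singular value determines $\omega$, the remaining singular directions are absorbed into $V$, and the linear/constant terms involving $c$ are matched by offsetting the basepoint of $v$ and by placing $v$ perpendicularly off the affine hull of $f$. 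A count then gives $\dim V \leq \dim \ker(T) + (\dim F_1 - 1) = j - 1$, achieving the required dimension bound.

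The main obstacle is matching the quadratic, linear, and constant terms simultaneously while respecting the dimension bound. Because $\tilde T$ is not in general a scalar multiple of an isometry, no choice of $v$ lying inside $f$ suffices: $v$ must acquire extra direction vectors beyond $\ker(T)$ and a perpendicular offset from the affine hull of $f$. I would therefore complete the proof by induction on $j$, following the recursive structure of the \textsc{Weighted-Flat} procedure referenced in~\cite{feldmanphd} (Figure 4.7 there), which peels off one direction of $F$ per call and tracks the cumulative scalar $\omega$. The base case $j = 1$ solves a univariate positive quadratic in the parameter along $f$, giving $v$ as a single point off $f$ with $\omega = \norm{\pi_{G^\perp}(u)}$ for the unit direction $u$ of $f$, and the inductive step reduces to the $(j-1)$-dimensional problem on $F_1$.
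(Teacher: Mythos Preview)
The paper does not prove this lemma at all: it is quoted verbatim from \cite{feldmanphd} and used as a black box in the proof of Lemma~\ref{lemma - 4 approx} (specifically, only the case $j=1$, where $f=\ell'$ and $g=\ell$ are both lines, is ever invoked). There is therefore no ``paper's own proof'' to compare against.

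As for your sketch itself: the reduction to $\dist(g,p_0+x)^2=\norm{c+Tx}^2$ with $T=\pi_{G^\perp}|_F$ is correct, and the observation that the hypothesis ``$g$ does not contain a translation of $f$'' is exactly $T\neq 0$ is the right reading. The dimension count $\dim V\leq\dim\ker(T)+(\dim F_1-1)=j-1$ is also the right target. However, the step ``the remaining singular directions are absorbed into $V$'' hides real work. If $\sigma_2<\sigma_1$ is a nonzero singular value of $\tilde T$ with singular direction $e_2\in F_1$, you need $\norm{P_{V^\perp}e_2}^2=\sigma_2^2/\sigma_1^2\in(0,1)$; putting $e_2$ itself into $V$ would force this to be $0$. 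You must instead tilt: place $v_2=\sin\theta\,e_2+\cos\theta\,n_2\in V$ for some unit $n_2\perp F$, and similarly for each subsequent singular direction, plus a further perpendicular offset to match the constant term. This requires enough independent directions in $F^\perp$, and your outline does not verify that $d-j$ always suffices, nor does it spell out how the linear term $2c^\top Tx$ and the constant $\norm{c}^2$ are simultaneously matched by the basepoint of $v$. The inductive reduction you propose (peel off the top singular direction, recurse on a $(j-1)$-flat) is the standard way to organise this and is presumably what the \textsc{Weighted-Flat} recursion in \cite{feldmanphd} does, but as written your inductive step is asserted rather than carried out.
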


The following lemma proves that for every line $\ell \in L$, the output set $G \subseteq \REAL^d$ of a call to $\constantapprox(L)$ contains a set $P \subseteq G$ that approximates the distance to $\ell$ from every set of $k$ points in $\REAL^d$. Note that this claim is much stronger than the claim that, say, the sum of distances is approximated by $P$. We need this stronger claim for the robust and generic approximation that is stated in Theorem~\ref{theorem - 4 approx is robust}. Recall that $\rho>0$ is defined such that $\dist(a,b) \leq \rho (\dist(a,c) + \dist(c,b))$ for every $a,b,c \in \REAL^d$, we define the following lemma.
\begin{lemma} \label{lemma - 4 approx}
Let $L$ be a set of $n$ lines in $\REAL^d$, $k \geq 1$ be an integer, and let $G \subseteq \REAL^d$ be the output of a call to $\constantapprox(L)$; See Alg.~\ref{algorithm - 4 approx}. Then, for every set $P\subseteq\REAL^d$ of size $\abs{P}=k$, there is a set $P''\subseteq G$ of $k$ points such that
\[
\forall \ell\in L: \dist(\ell,P'') \leq 4 \rho^2 \cdot \dist(\ell,P).
\]
Moreover, $G$ can be computed in $O\of{d^2n^2}$ time.
\end{lemma}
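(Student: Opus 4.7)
The plan is to construct $P''$ cluster-by-cluster: for each $p_i \in P$, let $L_i\subseteq L$ be the lines of $L$ whose nearest center in $P$ is $p_i$, so $\dist(\ell, P)=\dist(\ell, p_i)$ for every $\ell\in L_i$. It then suffices to exhibit, for each $i$, a single $p_i''\in G$ with $\dist(\ell, p_i'')\le 4\rho^2\,\dist(\ell, p_i)$ for every $\ell\in L_i$; then $P''=\{p_1'',\ldots, p_k''\}$ (padded with arbitrary elements of $G$ if some clusters are empty) satisfies the claim, since each $\ell$ is served through its own cluster.

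I would build $p_i''$ in two stages, each costing a factor $2\rho$, which compose to $4\rho^2$. \textbf{Step 1 (replace $p_i$ by a point on a cluster line).} Let $\ell_i^*\in\argmin_{\ell\in L_i}\dist(\ell, p_i)$, and let $q_i$ be the orthogonal projection of $p_i$ onto $\ell_i^*$, so $\dist(p_i, q_i)=\dist(p_i, \ell_i^*)\le \dist(p_i, \ell)$ for every $\ell\in L_i$. The weak triangle inequality yields
$$
\dist(\ell, q_i)\le \rho\bigl(\dist(\ell, p_i)+\dist(p_i, q_i)\bigr)\le 2\rho\,\dist(\ell, p_i).
$$
This already moves the ``center'' onto $\ell_i^*$ at the price of a factor $2\rho$; the remaining task is to approximate $q_i$ by an actual member of $G\cap\ell_i^*$.

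\textbf{Step 2 (approximate $q_i$ by an element of $G$).} For every $\ell\in L_i\setminus\{\ell_i^*\}$ that is not parallel to $\ell_i^*$, Lemma~\ref{theorem - weighted point for every subspace Feldman} (applied with $f=\ell_i^*$, $g=\ell$) yields a weighted point $(v_\ell,\omega_\ell)$ with $\dist(\ell, p)=\omega_\ell\,\dist(v_\ell, p)$ for every $p\in\ell_i^*$. Consequently $q(\ell_i^*,\ell)\in G$ is exactly the orthogonal projection of $v_\ell$ onto $\ell_i^*$. Let $\ell_i^{**}$ minimize $\dist(v_\ell, q_i)$ over such $\ell$, and put $p_i'':=q(\ell_i^*,\ell_i^{**})$. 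Since $p_i''$ is the orthogonal projection of $v_{\ell_i^{**}}$ onto $\ell_i^*$ and $q_i\in\ell_i^*$, Pythagoras on $\ell_i^*$ gives $\dist(q_i, p_i'')\le \dist(v_{\ell_i^{**}}, q_i)\le \dist(v_\ell, q_i)$ for every non-parallel $\ell\in L_i\setminus\{\ell_i^*\}$. A second application of the weak triangle inequality to the triple $(v_\ell, q_i, p_i'')$ together with Step 1 then yields
$$
\dist(\ell, p_i'')=\omega_\ell\,\dist(v_\ell, p_i'')\le 2\rho\,\omega_\ell\,\dist(v_\ell, q_i)=2\rho\,\dist(\ell, q_i)\le 4\rho^2\,\dist(\ell, p_i).
$$
The case $\ell=\ell_i^*$ is immediate since $p_i''\in\ell_i^*$, and lines $\ell\in L_i$ parallel to $\ell_i^*$ satisfy the bound from Step 1 alone, because $\dist(\ell,\cdot)$ is constant along $\ell_i^*$, so any $p_i''\in\ell_i^*$ works.

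The main subtlety of this plan is precisely the parallel-line case, where Lemma~\ref{theorem - weighted point for every subspace Feldman} does not apply (a parallel $\ell$ would contain a translation of $\ell_i^*$, violating the hypothesis of the lemma); as just noted, this case is harmless because the distance from a parallel $\ell$ to any point of $\ell_i^*$ is constant. For the running time, the two nested loops of Algorithm~\ref{algorithm - 4 approx} enumerate $O(n^2)$ pairs of lines, and for each pair the closest point on one line to the other in $\REAL^d$ can be computed in $O(d^2)$ time using a standard linear-algebra routine (e.g.\ solving the $2\times 2$ system for the foot of the common perpendicular of the two lines), delivering the claimed $O(d^2n^2)$ overall bound.
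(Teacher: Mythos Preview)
Your argument is correct and follows essentially the same two-stage scheme as the paper: project each center onto a nearby input line (incurring a factor $2\rho$), then replace that projected point by a member of $G$ on the same line via the weighted-point reduction of Lemma~\ref{theorem - weighted point for every subspace Feldman} (another factor $2\rho$). The paper differs only in minor implementation choices: it projects $p$ onto the closest line in all of $L$ rather than just the cluster $L_i$, and it takes $p''$ to be the nearest point of $Q(\ell')$ to $p'$ rather than your $q(\ell_i^*,\ell_i^{**})$; both variants yield the same inequality $\dist(q_i,p_i'')\le \dist(v_\ell,q_i)$ and hence the same $4\rho^2$ bound. One small omission: when every line of $L_i$ is parallel to $\ell_i^*$, your $\ell_i^{**}$ is undefined, but as you already observe any point of $\ell_i^*\cap G$ (e.g.\ any element of $Q(\ell_i^*)$) serves, so this is easily patched.
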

\begin{proof}
Let $P\subseteq\REAL^d$ be a set of $k$ points. Let $\ell\in L$, and $p\in P$ denote a closest point in $P$ to $\ell$. 
Let $\label{p'} p'=\proj\of{p,\bigcup_{\ell''\in L}\ell''}$ denote the projection of $p$ onto its closest line $\ell'$ in $L$, and let $P'=\br{p'\mid p\in P}$ denote the union over every $p \in P$; See Fig.~\ref{figure - example of translation of the center points}. Hence,

\begin{align}
\dist(\ell, p')  
& = \dist(\proj(p',\ell), p')   \nonumber \\
& \leq \dist(\proj(p,\ell), p')   \label{equation - first_translation - a}\\
& \leq \rho \of{\dist(\proj(p,\ell),p) + \dist(p,p')}   \label{equation - first_translation - b}\\
& = \rho\of{\dist(\ell,p) + \dist(p,p')} \nonumber \\
& \leq 2 \rho \dist(\ell,p) \label{equation - first_translation - c} \\
& = 2 \rho \dist(\ell,P) \label{equation - first_translation - d}
\end{align}
where~\eqref{equation - first_translation - a} is by the definition of $\proj(p',\ell)$,~\eqref{equation - first_translation - b} is by the triangle inequality, and~\eqref{equation - first_translation - c} holds since $p'$ is defined to be the closest point to $p$ in $\bigcup_{\ell'\in L}\ell'$.

\begin{figure}[h]
		\caption{\textbf{$4$-approximation for the $k=2$-median of an input set $L=\br{\ell_1, \ldots, \ell_5}$ of $n=5$ lines in the plane.} (left) The (unknown, optimal) $2$-median $P=\br{p_1,p_2}$ is projected on $L$ to obtain $2$-approximation. (middle) Each projected point in $P'=\br{p'1_,p'_2}$ is translated to the nearest intersection point. (right) The resulting set $P''= \br{p''_1,p''_2}$ is a $4$-approximation that is contained in the output $G \subseteq \REAL^2$ of Algorithm~\ref{algorithm - 4 approx}.}
		\centering
		\includegraphics[width=16cm]{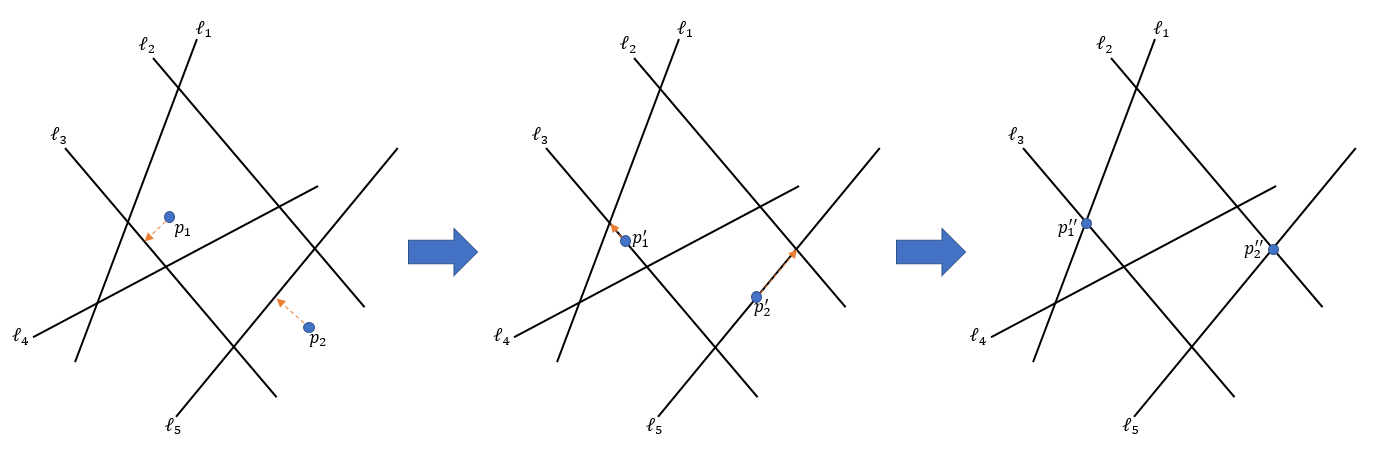}
		\label{figure - example of translation of the center points}
\end{figure}

Let $\ell'$ be the closest line to $p$ in $L$ that satisfies $p'=\proj(p,\ell')$. Recall that $Q(\ell')$ as defined in Line 4 in Alg.~\ref{algorithm - 4 approx} is the union of the $n-1$ closest points in $\ell'$ to each $\ell \in L \setminus\br{\ell'}$. Let $p''\in\argmin_{q\in Q(\ell')}\dist(p',q)$ denote a closest point to $p'$ in $Q(\ell')$, and $P'' = \bigcup_{p' \in P'}\argmin_{q\in Q(\ell')}\dist(p',q)$ denote the union over every $p' \in P'$; See Fig.~\ref{figure - example of translation of the center points}. 

Substiuting $j=1,f=\ell'$ and 
$g=\ell$ in Lemma~\ref{theorem - weighted point for every subspace Feldman} and noting that $p'' \in \ell'$ yields that there is $\omega\geq 0$ and $v \in \REAL^d$ such that

\begin{equation} \label{equation - distance is a weighted point}
\forall t \in \ell' : \dist(\ell, t)= \omega \cdot \dist(v, t).
\end{equation}
Note that Lemma~\ref{theorem - weighted point for every subspace Feldman} consider the case where $\ell$ and $\ell'$ are not parallel one to each other. However, elementary geometry implies that~\eqref{equation - distance is a weighted point} still holds, since in this case $\omega=1$ and $\dist(\ell, \ell') = \dist(\ell, v)$ for any $v \in \ell'$. This gives us
\begin{align}
\proj(v, \ell')
& \in \argmin_{t \in \ell'}D(v,t) \label{equation - expand of proj(v, ell') in Q(ell') - a} \\ 
& = \argmin_{t\in \ell'}D(\ell,t)/w \nonumber \\
& = \argmin_{t\in \ell'}D(\ell,t) 
\subseteq Q(\ell') \label{equation - proj(v, ell') in Q(ell')}
\end{align}
where~\eqref{equation - expand of proj(v, ell') in Q(ell') - a} is by ~\eqref{equation - distance is a weighted point}. Hence,
\begin{align} 
\dist(\ell, P'')
& \leq \dist(\ell, p'') \nonumber \\
& = \omega \cdot \dist(v,p'')   \label{equation - second_translation - a}\\
& \leq \omega \rho \cdot \big( \dist(v,p')+\dist(p',p'')\big)   \label{equation - second_translation - b}\\
& \leq \omega \rho \cdot \big( \dist(v,p')+\dist(p',\proj(v,\ell')) \big) \label{equation - second_translation - c} \\
& \leq 2 \omega \rho \cdot \dist(v,p') \label{equation - second_translation - d} \\
& = 2 \rho \dist(\ell,p'), \label{equation - second_translation - e}
\end{align}
where \eqref{equation - second_translation - a} follows by substituting $t = p''$ in \eqref{equation - distance is a weighted point}, \eqref{equation - second_translation - b} holds by the approximated triangle-inequality; See Section~\ref{subsection - preliminaries}. Inequality \eqref{equation - second_translation - c} holds by combining \eqref{equation - proj(v, ell') in Q(ell')} and the definition of $p''$ as the closest point to $p'$ in $Q(\ell')$, and~\eqref{equation - second_translation - d} holds since $p'\in\ell'$ and by the Pythagorean Theorem its distance to any point $v \in \REAL^d$ is larger than the projection of $v$ on $\ell'$, i.e., \\$\dist(p',\pi(v,\ell'))\leq \dist(p',v)$; See Fig.~\ref{figure - the cost to tilde P is a factor 2 from the cost to P'}. This proves Lemma~\ref{lemma - 4 approx} since 

$$
\dist(\ell, P'') 
\leq 2 \rho \dist(\ell, p')
\leq 4 \rho^2 \dist(\ell, P),
$$
where the first inequality is by~\eqref{equation - second_translation - e} and the second is by~\eqref{equation - first_translation - d}.

The running time of Algorithm~\ref{algorithm - 4 approx} is dominated by Lines 1--2, that is executed $O(n^2)$ times. Each time the distance between a pair of lines in $\REAL^d$ is computed. This can be done in $O(d^2)$ time via solving a $d^2$-degree polynomial equation. Hence, the overall running time of Alg. 1 is $O(d^2)\cdot n(n-1)\in O(d^2n^2)$.
\end{proof}

\begin{figure}[h]
		\caption{\textbf{Illustration of the proof of Lemma \ref{lemma - 4 approx} for $d=3$ dimensional space.}\\(i) For every pair $\ell$ and $\ell'$ of lines, there is a point $v$ and a weight (scalar) $w>0$ that satisfies the following: the distance of every $p''\in\ell$ to the line $\ell$ is the same as its weighted distance to the point $v$. (ii) $p'$ is the projection of $p$ onto $\ell'$, and its distance to $v$ is larger than $\dist(v,\ell')$.}
		\centering
		\includegraphics[height = 6 cm]{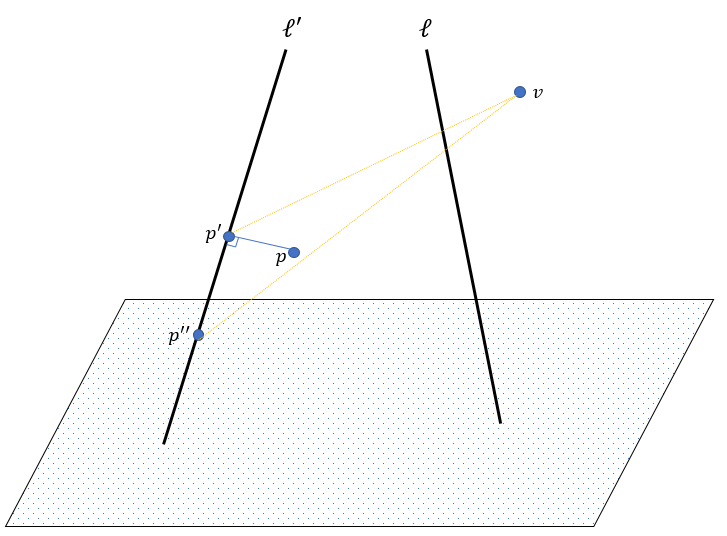}
		\label{figure - the cost to tilde P is a factor 2 from the cost to P'}
\end{figure}

The following observation is based on Definition~\ref{definition - jubran cost robust}. It states that if there is a center that approximates the distance to the optimal solution per point, then it also approximates e.g. max, sum, or sum of squared distances to all the points.

\begin{observation} [\cite{Jubran}] \label{observation - Jubran lipshitz}
Let $\cost_{\mathrm{dist},\lip,f}(A,q)=f(\lip(\mathrm{dist}(a_1,q)), \ldots, \lip(\mathrm{dist}(a_n,q)))$ be defined as in Definition \ref{definition - jubran cost robust}. Let $q^*,q' \in Q$ and let $c \geq 1$. If $\lip(\mathrm{dist}(a_i,q')\leq c\cdot \lip(\mathrm{dist}(a_i,q^*)$ for every $i \in [n]$, then
\begin{equation} \label{equation - jubran cost dist lip f}
\cost_{\mathrm{dist},\lip,f}(A,q') \leq c^{rs} \cdot \cost_{\mathrm{dist},\lip,f}(A,q^*)
.
\end{equation}
\end{observation}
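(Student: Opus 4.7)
The plan is a direct two-level application of monotonicity and log-Lipschitzness. Reading the hypothesis in the standard way for this kind of coreset argument---namely, that $\mathrm{dist}(a_i,q') \leq c\cdot \mathrm{dist}(a_i,q^*)$ holds entrywise, so that the conclusion's exponent $rs$ can arise from chaining the two log-Lipschitz parameters---the proof is essentially bookkeeping.

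First, I would push the factor $c$ through $\lip$. Using the non-decreasing property of $\lip$ together with its $r$-log-Lipschitz property (Definition~\ref{definition - log lipshitz fun}), the hypothesis gives, for every $i \in [n]$,
\[
\lip(\mathrm{dist}(a_i,q')) \leq \lip(c\cdot \mathrm{dist}(a_i,q^*)) \leq c^r \cdot \lip(\mathrm{dist}(a_i,q^*)).
\]
Writing $u, v \in [0,\infty)^n$ for the vectors whose $i$th entries are $\lip(\mathrm{dist}(a_i,q'))$ and $\lip(\mathrm{dist}(a_i,q^*))$ respectively, this says $u \leq c^r \cdot v$ componentwise.

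Second, I would push the resulting factor $c^r$ through $f$ in exactly the same way. Since $f$ is non-decreasing, $f(u) \leq f(c^r \cdot v)$; since $f$ is $s$-log-Lipschitz and $c^r \geq 1$, $f(c^r\cdot v) \leq (c^r)^s f(v) = c^{rs}\cdot f(v)$. Unwrapping the definition of $\cost_{\mathrm{dist},\lip,f}$ from Definition~\ref{definition - jubran cost robust} then yields exactly $\cost_{\mathrm{dist},\lip,f}(A,q') \leq c^{rs}\cdot \cost_{\mathrm{dist},\lip,f}(A,q^*)$, which is the claim.

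The only mild bookkeeping point, not really an obstacle, is verifying that the log-Lipschitz inequality is applicable at each step: for $\lip$ we need $\mathrm{dist}(a_i,q^*)$ to lie in the relevant $I \cap I/c$, and for $f$ we need $v$ to lie in $I \cap I/c^r$. In both cases the natural domain is a subset of $[0,\infty)$ or $[0,\infty)^n$, and scaling a non-negative quantity by $1/c$ with $c \geq 1$ keeps it in the domain, so the two appeals to Definition~\ref{definition - log lipshitz fun} go through without restriction. The entire argument is therefore a short composition of two defining properties of log-Lipschitz functions, with the product $rs$ in the exponent emerging naturally from nesting.
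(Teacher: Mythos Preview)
The paper does not supply its own proof of this observation; it is quoted verbatim as an observation from~\cite{Jubran}. Your two-step argument---push the constant through $\lip$ using monotonicity and the $r$-log-Lipschitz bound, then push the resulting $c^r$ through $f$ using monotonicity and the $s$-log-Lipschitz bound---is correct and is exactly the intended one-line justification. Your reading of the hypothesis at the level of $\mathrm{dist}$ rather than $\lip(\mathrm{dist})$ is the right one: it is the only reading under which the exponent $rs$ (rather than just $s$) arises naturally, and it is consistent with how the observation is applied downstream in Theorem~\ref{theorem - 4 approx is robust}. The domain check you flag is indeed a non-issue here, since both $\lip$ and $f$ are defined on all of $[0,\infty)$ and $[0,\infty)^n$ respectively, so $I\cap I/c=I$ in each case.
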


The following result is an example application of combining Lemma~\ref{lemma - 4 approx} and Observation~\ref{observation - Jubran lipshitz}. It states that the output of Algorithm~\ref{algorithm - 4 approx} also contains a constant factor approximation for solving $k$-median of lines with outliers. Even if the points of this $k$-median have additional constraints (such as being subset of the input, restricted zones, or far from specific areas).
This is although the algorithm uses only Euclidean distances. The same output set also contains an approximation to the $k$ centers that minimizes sum of squared distances, or maximum distance.

In the next chapters the approximation factor is reduced to $1+\eps$ and the running time is reduced to be near-linear in $n$. This is by applying Algorithm~\ref{algorithm - 4 approx}  only on small subsets of the input and using it as a building block for computing coresets. 

Recall that $\cost(L',P)$ is the sum of weighted distances from a set of lines $L'$ to a set of points $P$, and $\closest(L',P,m) \in \argmin_{X' \subseteq X \abs{X'}=m} \cost(L',P)$ denote the sum of weighted distances from $P$ to its $m$ closest lines in $L'$; see Definitions~\ref{definition - cost} and~\ref{definition - closest}, respectively.

\begin{theorem} [$k$-median approximation with outliers] \label{theorem - 4 approx is robust}
Let $L = \br{\ell_1, \ldots, \ell_n}$ be a set of $n$ lines in $\REAL^d$, $k \geq 1$ be an integer and $G \subseteq \REAL^d$ be an output of a call to $
\constantapprox(L)$; See Alg.~\ref{algorithm - 4 approx}. Then for every integer $m \in [n-1]$ and a set $P \subseteq \REAL^d$ of $k$ points, there exists a set $P'' \subseteq G$ of $\abs{P''}=k$ points that robustly approximates the sum of distances from $P$ to $L$ up to a constant factor, i.e.,

\begin{equation}
\cost\of{\closest\of{L,P'',m}, P''} \leq 4 \rho^2 \cdot \cost\of{\closest\of{L,P,m}, P} .
\end{equation}
\end{theorem}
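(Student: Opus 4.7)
The plan is to reduce the robust statement to the pointwise guarantee already established in Lemma \ref{lemma - 4 approx}. Applied to the given $P$, that lemma yields a set $P'' \subseteq G$ of size $k$ satisfying the per-line bound $\dist(\ell, P'') \leq 4\rho^2 \dist(\ell, P)$ for every $\ell \in L$. The task then reduces to promoting this per-line inequality to a bound on the sum of the $m$ smallest distances, even though the ``inlier'' sets $\closest(L, P'', m)$ and $\closest(L, P, m)$ are selected independently and need not agree.

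The key step is the standard ``minimizer versus witness'' trick. Writing $L^* = \closest(L, P, m)$, by Definition \ref{definition - closest} the set $\closest(L, P'', m)$ minimizes $\cost(\cdot, P'')$ over all $m$-subsets of $L$, so in particular
\[
\cost\bigl(\closest(L, P'', m), P''\bigr) \;\leq\; \cost(L^*, P'').
\]
Now I would invoke the per-line inequality coordinatewise on $L^* \subseteq L$: since $L^*$ consists of (unweighted) lines,
\[
\cost(L^*, P'') \;=\; \sum_{\ell \in L^*} \dist(\ell, P'') \;\leq\; 4\rho^2 \sum_{\ell \in L^*} \dist(\ell, P) \;=\; 4\rho^2 \cdot \cost\bigl(\closest(L, P, m), P\bigr),
\]
which chains with the previous inequality to give the claim.

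Conceptually this is precisely the regime covered by Observation \ref{observation - Jubran lipshitz}: viewing the robust objective as $\cost_{\dist, \lip, f}$ with $\lip$ the identity (so $r=1$) and $f = \|\mathrm{smallest}(\cdot, m)\|_1$ (which is $1$-log-Lipschitz, so $s=1$), the per-line factor $4\rho^2$ transfers with exponent $rs = 1$. I would phrase the final proof either by citing Observation \ref{observation - Jubran lipshitz} directly or by writing out the two-line minimizer-versus-witness argument above; the second is shorter and self-contained.

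I do not anticipate any real obstacle. The only subtlety worth noting explicitly is the distinction between $\closest(L, P'', m)$ and $\closest(L, P, m)$, which is precisely what the minimizer-versus-witness substitution absorbs; once that substitution is made the per-line bound from Lemma \ref{lemma - 4 approx} closes the argument.
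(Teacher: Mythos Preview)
Your proposal is correct and matches the paper's proof: both invoke Lemma~\ref{lemma - 4 approx} for the per-line bound $\dist(\ell,P'')\leq 4\rho^2\dist(\ell,P)$ and then lift it to the sum-of-$m$-smallest objective via Observation~\ref{observation - Jubran lipshitz} with $\lip=\mathrm{id}$ and $f$ the sum of the $m$ smallest coordinates (so $r=s=1$). Your explicit ``minimizer versus witness'' unpacking is exactly the content of that observation in this instance, so the two proofs are the same argument written at slightly different levels of abstraction.
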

\begin{proof}
Let $m\in [n-1]$ and $P \subseteq \REAL^d$, $\abs{P}=k$. By Lemma~\ref{lemma - 4 approx}, there exists a set $P'' \subseteq G, \abs{P''}=k$ such that
\begin{equation} \label{equation - 4 approx}
\forall \ell \in L: \dist(\ell,P'') \leq 4 \rho^2 \cdot \dist(\ell,P).
\end{equation}
Let $X$ be a set of all the lines in $\REAL^d$, $Q=\br{Y\subseteq\REAL^d\mid |Y|=k}$ be the union over every set of $k$ points in $\REAL^d$, $A=L, \mathrm{dist}=\dist, \lip(x)=x$ for every $x>0$. Let
$$
f(x_1,\ldots,x_n)=\min_{\substack{M \subseteq \br{x_1, \ldots, x_n} \\\abs{M}=m}}\sum_{x \in M} x
$$ denote the sum of the $m$ smallest numbers in $\br{x_1, \ldots, x_n}$, $c=4\rho^2$ 
and the two sets \\$\displaystyle \ Q^* \in \argmin_{P^* \subseteq \REAL^d, \abs{P^*}=k} \cost(L,P^*)$ and $Q' = P''$. Substituting these in Observation~\ref{observation - Jubran lipshitz}, yields

\begin{align} 
\cost\of{\closest\of{L,P'',m}, P''}
& = f\of{\dist(\ell_1, P''), \ldots, \dist(\ell_n, P'')} \nonumber \\
& \leq c^{rs} \cdot f\of{\dist(\ell_1, P), \ldots, \dist(\ell_n, P)} \label{equation - k means with outliers a} \\
& = c^{rs} \cdot \cost\of{\closest\of{L,P,m}, P} \nonumber \\
& = 4 \rho^2 \cdot \cost\of{\closest\of{L,P,m}, P}, \label{equation - k means with outliers b} 
\end{align}
where~\eqref{equation - k means with outliers a} holds by~\eqref{equation - jubran cost dist lip f} and~\eqref{equation - k means with outliers b} holds since $\lip$ and $f$ are both $1$-$\log$-$\mathrm{Lipschitz}$ functions, i.e. $r=s=1$, which proves the theorem.
\end{proof}

\subsection{EM Algorithm for $k$-Line Mean}

A natural competitor for our $k$-line median constant factor approximation is the EM-algorithm, adopted for the special case of the $k$-line mean problem. We apply it for the sum of squared distances below.

\begin{theorem}
Let $L$ be a set of $n$ lines in $\REAL^d$. Then we can compute 
\[
\min_{x\in\REAL^d}\sum_{\ell\in L}\dist^2(x,\ell)
\]
in $O(nd^2)$ time.
\end{theorem}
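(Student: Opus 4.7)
The plan is to reduce the problem to a single unconstrained convex quadratic minimization, which can be solved by a $d \times d$ linear system. First I would represent each line $\ell \in L$ by a point $p_\ell \in \ell$ and a unit direction vector $v_\ell$. The standard identity for the squared Euclidean distance from a point $x$ to the line $\ell$ is
\[
\mathrm{dist}^2(x,\ell) = \norm{x-p_\ell}^2 - \of{v_\ell^\top (x-p_\ell)}^2 = (x-p_\ell)^\top A_\ell (x-p_\ell),
\]
where $A_\ell = I_d - v_\ell v_\ell^\top$ is the (symmetric, PSD) projection onto the orthogonal complement of $v_\ell$. Summing over $\ell \in L$ gives
\[
f(x) := \sum_{\ell \in L} \mathrm{dist}^2(x,\ell) = x^\top M x - 2 b^\top x + c,
\]
where $M = \sum_{\ell \in L} A_\ell$, $b = \sum_{\ell \in L} A_\ell p_\ell$, and $c = \sum_{\ell \in L} p_\ell^\top A_\ell p_\ell$.

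Since each $A_\ell$ is PSD, $M$ is PSD and $f$ is convex. The first-order optimality condition $\nabla f(x) = 2(Mx - b) = 0$ is therefore both necessary and sufficient for a global minimum. Hence it suffices to solve the $d \times d$ linear system $Mx = b$. I would build $M$ and $b$ by a single pass through $L$: for each line $\ell$, form $A_\ell = I_d - v_\ell v_\ell^\top$ in $O(d^2)$ time, then accumulate $A_\ell$ into $M$ and $A_\ell p_\ell$ into $b$, both in $O(d^2)$ time per line. The total cost of this stage is $O(nd^2)$.

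Solving the linear system takes $O(d^3)$ time by Gaussian elimination (or $O(d^\omega)$), which is absorbed into $O(nd^2)$ whenever $d \leq n$, as is the case in our setting. The only subtle point is that $M$ may be singular: this happens exactly when all $v_\ell$ share a common orthogonal direction, i.e., all lines are parallel to a common subspace, in which case the optimum is attained on an affine flat of minimizers. This is not really an obstacle, since any solution of $Mx = b$ (e.g., via the Moore--Penrose pseudoinverse of $M$, still computable in $O(d^3)$ time) delivers the minimum value. Thus the overall running time is $O(nd^2)$ as claimed, and the main content of the proof is simply the quadratic expansion plus the convexity argument; there is no real obstacle beyond handling the possibly degenerate $M$.
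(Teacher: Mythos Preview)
Your proof is correct and follows the same strategy as the paper: write the objective as a convex quadratic in $x$, set the gradient to zero, and solve the resulting linear system, all within the $O(nd^2)$ budget. Your projection-matrix formulation $A_\ell = I_d - v_\ell v_\ell^\top$ is cleaner and more careful than the paper's coordinate-wise derivation, and you also address the possibly singular $M$, which the paper does not.
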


\begin{proof} 
Let
$$
f(x) = \sum_{i \in [n]} \dist^2(x,\ell_i).
$$
For every $i \in [n]$, we denote each line $\ell_i \in L$ by the set of points $\br{t_ia_i-d_i \mid a_i,d_i \in \REAL^d, t_i \in \REAL}$. Let the line $\ell_i \in L$ and $x \in \REAL^d$, from the algebraic definition of distance between $x$ to $\ell_i$ we get
$$
\dist^2(x,\ell_i) 
=  \big(a_i^T(x - d_i)\big)^2
=  (a_i^Tx - b_i)^2
= (a_i^Tx)^2 + (b_i)^2 - 2\cdot a_i^Tx \cdot b_i 
,$$
where the scalar $b_i = a_i^Td_i$, for every $i \in [n]$.  The derivative of the squared distance from $x$ to $\ell_i$ is then

\[
\begin{split}
\frac{\partial}{\partial x} \dist^2(x,\ell_i) & =  \frac{\partial}{\partial x} \big((a_i^Tx)^2 + (b_i)^2 - 2\cdot a_i^Tx \cdot b_i \big) \\
& =  \frac{\partial}{\partial x} \big((a_i^Tx)^2 \big) 
+ \frac{\partial}{\partial x} (b_i)^2
-  \frac{\partial}{\partial x} \big( 2\cdot a_i^Tx \cdot b_i \big)  \\
& =  \frac{\partial}{\partial x} \big((a_i^Tx)^2 \big) 
-  \frac{\partial}{\partial x} \big( 2\cdot a_i^Tx \cdot b_i \big) \\
& =  2 (a_i^Tx) \cdot a_i -  2 b_i \cdot a_i ,\\
\end{split}
\]
since for every $a,b,c \in \REAL^d$ we have $(a^Tb)c=b(a^Tc)$, we get

\[
\frac{\partial}{\partial x} \dist^2(x,\ell) =   2 x \cdot (a_i^Ta_i) -  2 b_i \cdot a_i ,
\]
and summing it over every line $\ell_i \in L$ yields

\[ 
\begin{split}
 \frac{\partial}{\partial x}f(x) &  
= \frac{\partial}{\partial x} \sum_{i \in [n]} \dist^2(x,\ell_i) \\
& = \sum_{i \in [n]} \frac{\partial}{\partial x} \dist^2(x,\ell_i) \\
& =  \sum_{\ell \in [n]} \big(2 x \cdot (a_i^Ta_i) -  2 b_i \cdot a_i \big) \\
& = \sum_{i \in [n]} 2 x \cdot (a_i^Ta_i) - \sum_{i \in [n]} 2 b_i \cdot a_i.\\
\end{split}
\]
By $\frac{\partial}{\partial x}f(x) = 0$ we have

\[ 
\begin{split}
& \sum_{i \in [n]} 2 x \cdot (a_i^Ta_i) - \sum_{i \in [n]} 2 b_i \cdot a_i  = 0\\
& \sum_{i \in [n]} x \cdot (a_i^Ta_i)  = \sum_{i \in [n]} b_i \cdot a_i \\
&  x = \frac{\sum_{i \in [n]} b_i \cdot a_i}{\sum_{i \in [n]} (a_i^Ta_i)} \\
&  x = \frac{\sum_{i \in [n]} b_i \cdot a_i}{\sum_{i \in [n]} \norm{a_i}^2} .\\
\end{split}
\]
For every $i \in [n]$, substitute $b_i = a_i^Td_i$ and consider $(a_i^T d_i)a_i = \norm{a_i}^2 d_i$ finally yields

\[
\begin{split}
&  x = \frac{\sum_{i \in [n]} (a_i^Td_i) \cdot a_i}{\sum_{i \in [n]} \norm{a_i}^2} 
= \frac{\sum_{i \in [n]}  \norm{a_i}^2 \cdot d_i }{\sum_{i \in [n]} \norm{a_i}^2} \\
\end{split}
\]

\end{proof}
By the last theorem, given a set $L$ of lines in $\REAL^d$, one can computes its 1-line mean in linear time in the input size. Consider the following procedure: (1) randomly partition $L$ into $k$ clusters of lines, (2) compute each cluster's mean and (3) sum the squared distances from each line to its nearest mean in the cluster. Run the last procedure iteratively until a tunable stop condition is being achieved - wroks in practice and is measure in the next section - although, unlike our $k$-line median, its error and running time are unbounded.

\section{Algorithm~\ref{algorithm - a,b approx}: \bicriteria}

An $\alpha$-approximation for the $k$-median of a set $L$ of $n$ lines, is a set $P_\alpha \subseteq \REAL^d$ of $k$ points such that
$$
\cost(L, P_\alpha) \leq \alpha \cdot \min_{P^* \subseteq \REAL^d, \abs{P} = k} \cost(L, P^*)
.$$
A $\beta$-approximation for the $k$-median of $L$ is a set $P_\beta \subseteq \REAL^d$ of $\beta k$ points such that
$$
\cost(L, P_\beta) \in \min_{P^* \subseteq \REAL^d, \abs{P} = k} \cost(L, P^*)
.$$
An $(\alpha,\beta)$-approximation, also known as bi-criteria approximation, is a mixture of the two above approximations: it is a set $B \subseteq \REAL^d$ of $\beta k$ points such that
$$
\cost(L, B) \leq \alpha \cdot \min_{P^* \subseteq \REAL^d, \abs{P} = k} \cost(L, P^*)
.$$

Algorithm 2 gets a set $L$ on $n$ lines in $\REAL^d$, and outputs such an $(\alpha,\beta)$-approximation for $L$, where $\alpha=4\rho^2$ and $\beta \in O(dk \log (k) \log n)$, as proved in Theorem~\ref{theorem - the output of a,b approx is indeed a,b approx}.

Algorithm 2 is a special case of the framework for computing bi-criteria approximation as suggest Theorem 2.2 in~\cite{feldman2007bi}.

In order to apply this theorem for the $k$-median of lines problem, we need three ingredient:

1. A bound on the VC-dimension ("complexity") that corresponds to the $k$-median of lines problem, as formally stated in Definition~\ref{definition - k mean} and bounded in Corollary~\ref{corollary - the range space is O(dklog k)}. This VC-dimension $d^*$ determines the required size of the random sample that is picked in Line 4 of Algorithm~\ref{algorithm - a,b approx}, as stated in Corollary~\ref{corollary - (...)-median of S is a (...)-median of F}.

2. An algorithm that computes a provably robust median of the sample as in Definition~\ref{definition - robust median}. This is done via the \constantapprox~that was suggested on Algorithm~\ref{algorithm - 4 approx}. However, now the running time is faster due to the small sample size.

In Section~\ref{subsection - VC dim} we use classic results from the theory of PAC-learning to bound the VC-dimension. Then, in Section~\ref{section - bicriteria} we plug this bound with the \constantapprox~algorithm in the general bi-criteria framework to obtain the desired $(\alpha,\beta)$-approximation.
\subsection{Bound on the VC-Dimension} \label{subsection - VC dim}
We first define the classic notion of VC-dimension, that is usually related to the PAC-learning theory~\cite{LLS01}.

\begin{definition} [range space \cite{bigtotiny}]
A range space is a pair $(L, \ranges)$ where $L$ is a set, called ground set and $\ranges$ is a family (set) of subsets of $L$, called $\ranges$.
\end{definition}

\begin{definition} [VC-dimension \cite{bigtotiny}]
The VC-dimension of a range space $(L, \ranges)$ is the size
$\abs{S}$ of the largest subset $S \subseteq F$ such that
$$
\abs{\br{S \cap \range \mid \range \in \ranges}} = 2^{\abs{S}}.
$$
\end{definition}

\begin{definition} [range space of functions \cite{bigtotiny, harpeledsharir, newframework}]
Let $F$ be a finite set of functions from a set $\Q$ to $[0,\infty)$. For every $Q \in \Q$ and $r \geq 0$, let
$$
\range(F, Q, r) = \br{f \in F \mid f(Q) \geq r}
.$$
Let
$$
\ranges(F) = \br{\range(F, Q, r) \mid Q \in \Q, r \geq 0}.
$$
Finally, let $\R_{\Q,F} = \of{F, \ranges(F)}$ be the range space induced by $\Q$ and $F$.
\end{definition}
To bound the VC-dimension, we use the following theorem that bounds VC-dimension of range spaces that correspond to polynomial functions.

In what follows, $\sgn(x)$ denotes the sign of $x \in \REAL^d$. More precisely, $\sgn(x) = 1$ if $x >$ 0, $\sgn(x) = -1$ if $x < 0$, and $\sgn(x) = 0$ otherwise.

\begin{theorem} [Theorem 3 in \cite{War} and Lemma 6 in \cite{bigtotiny}] \label{theoram - Warren, the vc dim of n polinomials is small}
Let $\br{f_1, \ldots, f_m}$ be real polynomials in $d^* < m$ variables, each of degree at most $b \geq 1$. Then the number of sign sequences $(\sgn\of{f_m(x)}, \ldots,  \sgn\of{f_1(x)})$, $x \in \REAL^d$, that consist of the terms $1,-1$ is at most $\of{\frac{4eb m}{d^*}}^{d^*}$.
\end{theorem}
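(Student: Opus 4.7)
The plan is to follow the classical argument of Warren, which reduces the sign-pattern count to a Bezout-type bound on real critical points of an auxiliary polynomial.

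First, I would pass from sign sequences to connected components of a complement. For every sign vector $\sigma=(\sigma_1,\dots,\sigma_m)\in\br{-1,+1}^m$, set $U_\sigma=\br{x\in\REAL^{d^*} \mid \sgn(f_i(x))=\sigma_i \text{ for all } i\in[m]}$. The sets $U_\sigma$ are pairwise disjoint open semialgebraic subsets of $\REAL^{d^*}$ whose union is the complement $W=\REAL^{d^*}\setminus Z$ of the algebraic hypersurface $Z=\bigcup_{i=1}^m\br{x\in\REAL^{d^*} \mid f_i(x)=0}$. Hence the number of sign sequences with entries only in $\br{-1,+1}$ that are actually attained is at most the number $N(W)$ of connected components of $W$, and it suffices to bound $N(W)$.

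Second, I would introduce an auxiliary polynomial and reduce to counting its critical points. After an arbitrarily small generic perturbation of the coefficients of the $f_i$, which does not decrease $N(W)$, I may assume that each zero set is smooth and that these $m$ hypersurfaces meet transversally. Set $\Phi(x)=\prod_{i=1}^m f_i(x)$, a polynomial of degree at most $bm$. Since $\Phi$ vanishes identically on $Z=\partial W$, its restriction to each bounded connected component of $W$ attains an interior extremum, contributing at least one real critical point $x$ with $\nabla\Phi(x)=0$. Unbounded components are absorbed by the same argument after a one-point compactification, or equivalently by perturbing $\Phi$ to $\Phi(x)-\eps\inn{\psi,x}$ for a generic linear form $\psi$, so that all components become bounded after the perturbation.

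Third, I would apply a Bezout-type count. The critical-point system $\nabla\Phi(x)=0$ consists of $d^*$ polynomial equations in $d^*$ unknowns, each of degree at most $bm-1$. The crude Bezout bound already yields at most $(bm)^{d^*}$ isolated real solutions, but a careful accounting of monomials of degree at most $bm$ in $d^*$ variables, combined with the elementary inequality $\binom{n}{k}\leq(en/k)^k$, sharpens this to $\of{\frac{4ebm}{d^*}}^{d^*}$. Combining with the first step completes the proof. The main obstacle is not the qualitative bound of order $(bm)^{d^*}$, which follows almost immediately from Bezout's theorem, but pinning down the explicit constant $4e$ and, crucially, the $1/d^*$ factor in the denominator; this last factor requires the binomial-coefficient simplification above and is what makes the bound useful in the regime $d^*\ll m$ relevant for the subsequent VC-dimension application.
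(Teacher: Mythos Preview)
The paper does not supply its own proof of this statement; it is quoted directly from Warren's 1968 paper (and restated in the cited survey) and used as a black box for the subsequent VC-dimension bound in Corollary~\ref{corollary - the range space is O(dklog k)}. So there is no in-paper argument to compare against.

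Assessing your sketch on its own merits: Steps 1 and 2 are correct and are indeed the opening moves of Warren's proof---sign patterns with entries in $\br{-1,+1}$ are constant on connected components of $\REAL^{d^*}\setminus Z$, and after a generic perturbation each bounded component of the complement carries a real critical point of the product $\Phi=\prod_i f_i$. The gap is in Step 3. The gradient system $\nabla\Phi=0$ consists of $d^*$ equations of degree $bm-1$, and B\'ezout gives at most $(bm-1)^{d^*}$ isolated complex solutions; this yields the coarse bound $(bm)^{d^*}$ but \emph{not} the crucial $1/d^*$ factor. Your proposed sharpening via ``counting monomials of degree at most $bm$ in $d^*$ variables'' does not correspond to any standard bound on the number of real (or complex) solutions of a polynomial system, and I do not see how to make that step go through. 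In Warren's actual argument the factor $(em/d^*)^{d^*}$ arises from the combinatorial sum $\sum_{i\le d^*}\binom{m}{i}$---that is, from choosing at most $d^*$ of the $m$ polynomials, not from monomials---while the algebraic part contributes only $2(2b)^{d^*}$ rather than $(bm)^{d^*}$; multiplying and absorbing the leading $2$ gives $\of{4ebm/d^*}^{d^*}$. So you have identified the right elementary inequality $\binom{n}{k}\le(en/k)^k$, but you are applying it to the wrong combinatorial object, and the route from the single system $\nabla\Phi=0$ alone to the stated bound does not close.
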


\begin{corollary} [Corollary 3.1 in \cite{War}] \label{corollary - Warren, the number of distinct sequences is less then 2^m}
If $b \geq 2$ and $m \geq 8d^*\log_2 b$, then the number
of distinct sequences as in the above theorem is less than $2^m$.
\end{corollary}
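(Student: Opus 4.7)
The plan is to derive the corollary as a direct numerical consequence of the bound established in the preceding theorem. By that theorem, the number of distinct sign sequences is at most $\bigl(\tfrac{4ebm}{d^*}\bigr)^{d^*}$, so it suffices to prove
\[
\left(\frac{4ebm}{d^*}\right)^{d^*} < 2^m
\qquad\text{whenever } b\geq 2 \text{ and } m\geq 8d^*\log_2 b.
\]
Taking $\log_2$ of both sides, this reduces to the scalar inequality $d^*\log_2(4ebm/d^*) < m$, and after setting the auxiliary variable $t := m/d^*$ it becomes
\[
\log_2(4e) + \log_2 b + \log_2 t \;<\; t,
\]
under the two hypotheses $b\geq 2$ and $t \geq 8\log_2 b$.

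The first step is to observe that the hypotheses force $t\geq 8$, since $b\geq 2$ gives $\log_2 b\geq 1$. I would then bound each of the three summands on the left-hand side separately: $\log_2(4e) < 3.45$ is a universal constant; $\log_2 b \leq t/8$ is immediate from $t\geq 8\log_2 b$; and $\log_2 t \leq t/2$ holds for every $t\geq 8$ (the two sides are $3$ and $4$ at $t=8$, and the gap only widens since $t/2$ grows while $\log_2 t$ is concave). Summing these three bounds yields
\[
\log_2(4e) + \log_2 b + \log_2 t \;\leq\; 3.45 + \frac{t}{8} + \frac{t}{2} \;=\; 3.45 + \frac{5t}{8},
\]
which is strictly less than $t$ provided $t > 8\cdot 3.45/3 \approx 9.2$. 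To close the small remaining gap for $t\in[8,9.2]$, I would verify the boundary case $t=8$ (equivalently $b=2$ and $m=8d^*$) by direct substitution: the left-hand side becomes $\log_2(64e) = 6 + \log_2 e < 7.45$, which is indeed less than $t=8$. Combined with monotonicity of $t - \log_2(4e) - \log_2 b - \log_2 t$ in $t$ on $[8,\infty)$ (whose derivative is $1 - 1/(t\ln 2) > 0$ for $t\geq 8$), this finishes the inequality for all valid $(b,t)$.

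There is no real obstacle here beyond bookkeeping the constants carefully. The only mildly delicate point is that the hypothesis $m\geq 8d^*\log_2 b$ is tight at $b=2$, so the two looser bounds $\log_2 b \leq t/8$ and $\log_2 t\leq t/2$ must be combined with the explicit constant $\log_2(4e)<3.45$ to leave a positive margin; handling the boundary $b=2$ as a separate direct check is the cleanest way to avoid any chance of an off-by-a-constant error.
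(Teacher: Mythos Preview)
The paper does not supply a proof of this corollary; it is quoted directly as Corollary~3.1 from Warren~\cite{War} and used as a black box. Your argument---taking $\log_2$, substituting $t=m/d^*$, and bounding the three summands $\log_2(4e)$, $\log_2 b$, $\log_2 t$ separately---is the standard derivation and is correct.

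One minor bookkeeping remark: your closing step checks the single boundary point $(b,t)=(2,8)$ and then invokes monotonicity of $t-\log_2(4e)-\log_2 b-\log_2 t$ in $t$ for fixed $b$, but that literally settles only the slice $b=2$. For other values of $b$ the minimum admissible $t$ is $8\log_2 b$, not $8$, so monotonicity from $t=8$ does not directly apply. The gap closes immediately once you note that $b$ is a polynomial degree and hence an integer: $b\geq 3$ forces $t\geq 8\log_2 3>12.6>9.2$, which is already covered by your main estimate $3.45+5t/8<t$. With that one-line observation the argument is complete.
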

Reformulating the query space for $k$-median of lines as a set of $n$ polynomial functions, allow us to bound the VC-dimension of the query space via Corollary~\ref{corollary - Warren, the number of distinct sequences is less then 2^m}. The proof of the following corollary is very similar to the proof in~\cite{bigtotiny} for the case of $k$ centers medians for given a finite set of $n$ lines.

\begin{corollary} \label{corollary - the range space is O(dklog k)}
Let $\Q_k$ be the family (set) of all sets which are the union of $k$ points in $\REAL^d$ and let $L=\br{\ell_1, \ldots, \ell_n}$ be a set of $n$ lines in $\REAL^d$. 
Let $F^*_L = \br{f_1, \ldots, f_n}$, where $f_i(Q)= \dist(\ell_i, Q)$ for every $i \in [n]$ and $Q \in \Q_k$. Then the dimension of the range space $\R_{\Q_k,F^*_L}$ that is induced by $\Q_k$ and $F^*_L$ is $O(dk \log k)$.
\end{corollary}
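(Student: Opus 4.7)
The plan is to bound the VC-dimension of $\R_{\Q_k,F^*_L}$ by encoding the membership of each $f_i$ in a range as the sign pattern of a small collection of low-degree polynomials, and then applying Warren's theorem (Theorem~\ref{theoram - Warren, the vc dim of n polinomials is small}) together with Corollary~\ref{corollary - Warren, the number of distinct sequences is less then 2^m}.

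First, I would unpack the membership condition. Fix a query $Q=\br{q_1,\ldots,q_k}\in\Q_k$ and a threshold $r\geq 0$. Since $f_i(Q)=\dist(\ell_i,Q)=\min_{j\in[k]}\dist(\ell_i,q_j)$, we have $f_i\in\range(F^*_L,Q,r)$ if and only if $\dist(\ell_i,q_j)\geq r$ for every $j\in[k]$, equivalently
\[
p_{i,j}(Q,r) \;\coloneqq\; \dist(\ell_i,q_j)^2 - r^2 \;\geq\; 0 \qquad\text{for all }j\in[k].
\]
Parametrizing each line $\ell_i$ by a point $a_i$ and a unit direction $v_i$, the squared point-to-line distance $\dist(\ell_i,q_j)^2=\norm{q_j-a_i}^2-(v_i^T(q_j-a_i))^2$ is a polynomial of total degree at most $2$ in the $D\coloneqq dk+1$ coordinates of $(Q,r)=(q_1,\ldots,q_k,r)\in\REAL^{dk+1}$.

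Next, given any subset $S\subseteq F^*_L$ of size $m$, consider the $M\coloneqq mk$ polynomials $\br{p_{i,j}}_{\ell_i\in S,\,j\in[k]}$ in the $D=dk+1$ variables encoding $(Q,r)$. The subset $S\cap\range(F^*_L,Q,r)$ is determined solely by the sign pattern of these $M$ polynomials at $(Q,r)$, since $f_i$ lies in the range precisely when all $k$ polynomials $p_{i,1},\ldots,p_{i,k}$ are non-negative at $(Q,r)$. Hence the total number of distinct subsets of $S$ realizable as $S\cap\range(F^*_L,Q,r)$, over all choices of $(Q,r)$, is at most the number of sign sequences of these $M$ polynomials, which by Theorem~\ref{theoram - Warren, the vc dim of n polinomials is small} with degree bound $b=2$ is at most $\of{\tfrac{4e\cdot 2\cdot M}{D}}^{D} = \of{\tfrac{8emk}{dk+1}}^{dk+1}$.

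Finally, for $S$ to be shattered, this quantity must be at least $2^m$. Taking logarithms, shattering requires $m\leq (dk+1)\log_2\!\of{\tfrac{8emk}{dk+1}}$. A direct calculation shows this inequality fails once $m\geq c\,dk\log k$ for a sufficiently large universal constant $c$ (indeed, substituting $m=c\,dk\log k$ gives right-hand side $\approx dk\log_2(c\log k)\ll cdk\log k$). Therefore the VC-dimension of $\R_{\Q_k,F^*_L}$ is $O(dk\log k)$, as claimed. The main (minor) obstacle is the bookkeeping to verify that $p_{i,j}$ really has degree~$2$ in the combined $(Q,r)$ coordinates and to pin down the constant in the final algebraic inequality; everything else is a clean application of Warren's bound.
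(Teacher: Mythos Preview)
Your argument is correct and reaches the same conclusion via Warren's theorem, but the route differs from the paper's. The paper first handles the case $k=1$, showing the induced range space has VC-dimension $O(d)$ by treating $\dist(\ell_i,q)^2-r^2$ as a degree-$2$ polynomial in the $O(d)$ coordinates of a \emph{single} center and threshold; it then invokes the classical fact (citing \cite{BEHW89,EA07}) that the $k$-fold intersection of ranges from a space of VC-dimension $O(d)$ has VC-dimension $O(dk\log k)$. You instead encode all $k$ centers at once as $dk+1$ real variables and introduce $mk$ degree-$2$ polynomials $p_{i,j}$, applying Warren's bound directly to the full system. Your approach is more self-contained (no external $k$-fold lemma needed), while the paper's is more modular and reusable. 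Two small points to tighten: (i) Warren's theorem as stated counts only $\{-1,+1\}$ patterns, so you should note---as the paper does---that any zero can be removed by an infinitesimal decrease of $r$ without changing the realized subset of $S$; (ii) your final approximation ``right-hand side $\approx dk\log_2(c\log k)$'' undercounts: the dominant term is actually $dk\log_2 k$, so the inequality $m>(dk+1)\log_2(8emk/(dk+1))$ fails only by a constant factor once $m=c\,dk\log k$ with $c$ large enough, which still yields the claimed $O(dk\log k)$ bound.
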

\begin{proof}
We first show that for the case $k = 1$, the VC-dimension of the range space $\R_{\Q_k,F^*_L}$ is $O(d)$. Then the result follows from the fact that the $k$-fold intersection of range spaces of VC-dimension $O(d)$ has VC-dimension $O(dk \log k)$ \cite{BEHW89,EA07}.

If $n < d$ then the result is immediate. Thus, we consider the case $n > d$. We will first argue that the Euclidean distance function from a point to a line can be written as a polynomial in $O(d)$ variables. Indeed, let $Q \in Q_k$. For every $i \in [n]$, we define $\dist(\ell_i,Q) = \min_{q \in Q}\norm{(q-t_i)^TX_i}$, where $X_i \in \REAL^{d \times (d-1)}$ s.t. $X_i^TX_i=I$, $t_i \in \REAL^d$ correspond to the line $\ell_i$ and $q \in \REAL^d$ is the unknown point. Therefore, for some $r>0$, $\dist^2(\ell_i, Q) - r^2 = \min_{q \in Q} \norm{(q-t_i)^TX_i}^2-r^2$ is a polynomial of constant degree $b=2$ with $d^* \in O(d)$ variables. 

Consider a subset $Y \subset F$ s.t. $\abs{Y} = m = 8d^*$, and denote the functions in $Y$ by $\br{f_1, \ldots, f_m}$. Our next step is to upper bound the number of distinct ranges in our range space $\R_{\Q_k,F^*_L}$ that intersects $Y$, for $k = 1$. Let $Q \in \Q_k$ and $r \geq 0$. By defining
$$
\range(L, Q, r) = \br{\ell \in L \mid \dist(\ell,Q) \leq r}
$$
for every $i \in [n]$ we have $\dist(\ell_i,Q) \geq r$, if and only if $\sgn\of{\min_{q \in Q}\norm{(q-t_i)^TX_i}^2-r^2} \geq 0$, where $\sgn(x) = 1$ if $x >$ 0, $\sgn(x) = -1$ if $x < 0$, and $\sgn(x) = 0$ otherwise. 
\\\\

Thus, the number of ranges is at most
$$
\abs{\ranges(L)} = 
$$
$$
\abs{\br{\sgn\of{\min_{q \in Q}\norm{(q-t_1)^TX_m}^2-r^2}, \ldots, \sgn\of{\min_{q \in Q}\norm{(q-t_m)^TX_m}^2-r^2} \mid Q \in \Q_k, r \geq 0}}
.$$

We also observe that for every sign sequence that has zeros, there is a sign sequence corresponding to the same range that only contains $1$ and $-1$ (this can be obtain by infinitesimally changing r). Thus, substituting $f_i = \dist(\ell_i,Q)$ for every $i \in [n]$ in Theorem \ref{theoram - Warren, the vc dim of n polinomials is small}, yields that the number of such sequences is bounded by $\of{\frac{4e b m}{d^*}}^{d^*}$, where $b = 2$ and is the above polynomials degree. By Corollary \ref{corollary - Warren, the number of distinct sequences is less then 2^m}, since $b = 2$ and $m = 8d^*$, the number of such ranges is less than $2^m$. At the same time, a range space with VC-dimension $d$ must
contain a subset $Y$ of size $d$ such that any subset of $Y$ can be written as $Y \cap \range$ for some $\range \in \ranges(F)$, which implies that the number of such sets is $2^d$. Since this is not possible for $Y$ if $m \geq 8d^*$, we know that the VC dimension of our range space is bounded by
$8 d^* \in O(d)$ (for the case $k = 1$). 

Now the result follows by observing that, in the case of $k$ centers, every range is obtained by taking the intersection of $k$ ranges of the range space for $k = 1$. Hence, by Corollary~\ref{corollary - the range space is O(dklog k)} the dimension of the range space $\R_{\Q_k,F^*_L}$ that is induced by $\Q_k$ and $F^*_L$ is $O(dk \log k)$.
\end{proof}

\subsection{Analysis of Algorithm 2: \bicriteria} \label{section - bicriteria}

Intuitively, a coreset for $k$-median of lines supposes to contain the ``important" input lines. This is formalized by the sensitivity sampling approach in Theorem~\ref{theorem - total sensitivity can be computed given a,b-approx}. Not surprisingly, as in many other coreset constructions, this sampling distribution is based on the optimal solution of the query space (i.e., the query that minimizes the cost). However, computing the optimal query, or $k$-median in our case, is the original motivation for construction the coreset. To find a lee-way from this chiecken-and-egg problem, we suggest a very rough approximation for the optimal solution, called $(\alpha,\beta)$-approximation. Algorithm~\ref{algorithm - a,b approx} computes it efficiently, and this type of approximation suffices to bound the sensitivities as is shown in Section~\ref{section - sensitivity}.

\begin{definition} [$\alpha,\beta$-approximation] \label{definition - a,b approx}
Let $L$ be a finite set of lines in $\REAL^d$, $k \geq 1$ be an integer and $P^* \subseteq \REAL^d$ be a $k$-median of $L$; See Definition \ref{definition - k mean}. Then, for every $\alpha,\beta \geq 0$, a set $B \subseteq \REAL^d$ of $k\beta$ points is called $(\alpha,\beta)$-\emph{approximation} of $L$, if
$$
\cost(L, B) \leq \alpha \cdot \cost(L, P^*).
$$
If $\beta=1$ then $B$ is called an $\alpha$-\emph{approximation} of $L$. , if $\alpha=\beta=1$ then $B$ is called \emph{the optimal solution}.
\end{definition}

To prove that the output of Alg.~\ref{algorithm - a,b approx} is indeed an $(\alpha,\beta)$-approximation for small values of $\alpha$ and $\beta$, we use a generic framework by Feldman and Langberg \cite{newframework}.

The first definition is for robust median. That is a point that aims to minimizes the sum of distances from an input set of lines, up to a small fraction $\gamma$ of outliers that can be ignored. Our approximation can serves only fraction of $(1-\eps)\gamma$ of its closest input lines, instead of the desired $\gamma$. Finally, we aim to get only $(\alpha,\beta)$-approx for this robust median. Hence, the following approximation can be regarded as a triple-criteria $(\alpha,\beta,\eps)$ for the optimal median that ignores a $\gamma$ fraction of outliers.

\begin{definition} [robust median \cite{newframework}] \label{definition - robust median}
Let $F$ be a set of $n$ functions from a set $X$ to $[0,\infty)$. Let $0 < \eps, \gamma < 1$, and $\alpha > 0$. For every $x \in X$, let $F_x$ denote the $\ceil[\big]{\gamma n}$ functions $f \in F$ with the smallest value $f(x)$. Let $Y \subseteq X$,
and let $G$ be the set of the $\ceil[\big]{(1-\eps) \gamma n}$ functions $f \in F$ with smallest value $f(Y) = \min_{y\in Y} f(y)$. The set $Y$ is called a $(\gamma, \eps, \alpha, \beta)$-median of $F$, if $\abs{Y} = \beta$ and
$$
\cost(G,Y) \leq \alpha \min_{x \in X} \cost(F_x,x),
$$
where $\cost : P(X) \times P(X) \to [0,\infty)$ and $P(X)$ is the power set of $X$.
\end{definition}

\begin{corollary} [\cite{newframework}] \label{corollary - (...)-median of S is a (...)-median of F}
Let $\eps \in (0, 1/10)$ and $\delta, \gamma \in (0, 1]$. Let $F$ be a set of $n \geq 1/(\gamma\eps)$ functions from $X$ to $[0,\infty)$, where the VC-dimension of the range space that induced by $F$ and $X$ is $d^*$. Suppose that we have an algorithm that receives any set $S \subseteq F$ of size
$$
\abs{S} \in \Theta \of{\frac{d^*+\log(1/\delta)}{\gamma^2\eps^4}},
$$
and returns a set $G , \abs{G} \leq \beta$ that contains a $((1 - \eps)\gamma, \eps, \alpha,1)$-median of $S$ in time \\$O(1) \cdot \mathbf{SlowMedian}$. Then a $(\gamma, 4\eps, \alpha, \beta)$-median of $F$ can be computed, with probability at least $1-\delta$, in time 
$$
O(1) \cdot \mathbf{SlowMedian}+ O\of{\abs{S}}
.$$
\end{corollary}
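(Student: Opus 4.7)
The plan is to use a uniform random sample as a classical VC-type $\eta$-approximation and then apply the given subroutine to it. The algorithm draws $S \subseteq F$ by i.i.d.\ uniform sampling of the prescribed size $|S| \in \Theta((d^* + \log(1/\delta))/(\gamma^2 \eps^4))$, invokes the subroutine on $S$ to obtain a set $G$ with $|G| \leq \beta$ containing a $((1-\eps)\gamma, \eps, \alpha, 1)$-median $Y^* \subseteq G$ of $S$, and returns $G$. The running time is $O(|S|)$ for sampling plus $O(1) \cdot \mathbf{SlowMedian}$ for the subroutine, as claimed.

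The first technical step is the sample-quality guarantee. By the classical uniform convergence theorem for a range space of VC-dimension $d^*$, a uniform i.i.d.\ sample of size $\Theta((d^* + \log(1/\delta))/\eta^2)$ is, with probability at least $1-\delta$, an $\eta$-approximation: every range $R$ in the induced range space satisfies $\big||S\cap R|/|S|-|R|/|F|\big| \leq \eta$. Choosing $\eta = c \gamma \eps^2$ for a sufficiently small constant $c$ matches the prescribed sample size. I would apply this to the family of ranges $R_{x,r} = \{f\in F : f(x) \leq r\}$ obtained by sweeping $x \in X$ and $r \geq 0$, which is precisely the range space induced by $(F,X)$.

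The second step converts the sample median into a full-data median. Writing $\OPT = \min_x \cost(F_x, x)$ and $\OPT_S$ for the analogous $(1-\eps)\gamma$-robust optimum on $S$, I would establish two comparisons: \emph{(a)} $\OPT_S \leq (|S|/n)(1+O(\eps)) \cdot \OPT$, by evaluating $\OPT_S$ at $x^* \in \argmin_x \cost(F_x, x)$ and using the $\eta$-approximation on the ranges $R_{x^*,r}$ to bound the number of sample functions lying below the threshold defining $F_{x^*}$; and \emph{(b)} for every $y \in X$ and the $\ceil{(1-4\eps)\gamma n}$ closest $F$-functions $G_F$ to $y$, the inequality $\cost(G_F,y) \leq (n/|S|)\cdot\cost(G_S,y) + O(\eps)\OPT$ holds, where $G_S$ is the set of about $(1-\eps)^2 \gamma |S|$ closest $S$-functions to $y$, again by range-counting via the $\eta$-approximation. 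Chaining (a) and (b) with the subroutine guarantee $\cost(G_S, Y^*) \leq \alpha \cdot \OPT_S$ at $y = Y^*$ yields $\cost(G_F, G) \leq \cost(G_F, Y^*) \leq \alpha \cdot \OPT$, which is exactly the definition of a $(\gamma, 4\eps, \alpha, \beta)$-median of $F$.

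The main obstacle will be the bookkeeping that replaces the sample parameter $\eps$ by the final $4\eps$ without inflating $\alpha$. Each translation between an $S$-count and an $F$-count via the $\eta$-approximation costs an additive $\eta |S|$ (or $\eta n$), which when normalized by the outlier count $\gamma|S|$ (or $\gamma n$) adds an $O(\eps)$ multiplicative slack; tracking the four places where this happens---the $S$-optimum vs.\ $F$-optimum, and the inner vs.\ outer outlier thresholds---is what yields the factor $4$. The assumption $n \geq 1/(\gamma\eps)$ is precisely what keeps these additive $\eta$-errors from dominating the small count $(1-4\eps)\gamma n$ and swamping $\OPT$.
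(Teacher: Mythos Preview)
The paper does not supply a proof of this corollary: it is quoted verbatim as a result from~\cite{newframework} (Feldman--Langberg) and used as a black box in the proof of Theorem~\ref{theorem - the output of a,b approx is indeed a,b approx}. So there is no ``paper's own proof'' to compare against; your sketch is essentially a reconstruction of the argument in the cited reference, and the overall strategy---uniform sample, VC-type approximation on the induced range space, then transfer the robust-median guarantee from $S$ to $F$ by shifting slack into the outlier parameters---is exactly the one used there.

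One point deserves attention. Your inequality~(b), $\cost(G_F,y)\le (n/|S|)\cost(G_S,y)+O(\eps)\OPT$, combined with~(a) yields $\cost(G_F,Y^*)\le \alpha(1+O(\eps))\OPT + O(\eps)\OPT$, which is strictly larger than $\alpha\cdot\OPT$ and so does \emph{not} give a $(\gamma,4\eps,\alpha,\beta)$-median with the same $\alpha$. You correctly flag the $\alpha$-preservation bookkeeping as the main obstacle, but the additive $O(\eps)\OPT$ in~(b) is precisely what breaks it. The fix in~\cite{newframework} is to use the $\eta$-approximation as an $\eps$-approximation for \emph{functions} (not merely ranges): one integrates the level-set bounds $\big||\{f\in S:f(y)\le r\}|/|S|-|\{f\in F:f(y)\le r\}|/n\big|\le\eta$ over $r$ up to the appropriate robust threshold, and the entire discrepancy is absorbed by passing from the $(1-\eps)^2\gamma$-fraction on $S$ to the $(1-4\eps)\gamma$-fraction on $F$, with no additive residual against $\OPT$. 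With that refinement your plan goes through and matches the cited proof.
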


The following theorem forges a link between robust medians and $(\alpha,\beta)$-approximations.
\begin{theorem} [Generic Bi-criteria \cite{newframework}] [\label{theorem - new framework, output of a,b approx is bi criteria}
Let $F$ be a set of $n$ functions from a set $X$ to $[0,\infty)$. Let $\eps > 0$ and $\delta, \gamma < 1$. Let $\alpha, \beta \geq 0$. Then a set
$Z \subseteq X$ of size $\abs{Z} \leq \beta \log_2 n$ can be computed such that, with probability at least $1-\delta$,
$$
\cost(F, Z) \leq (1 + \eps)\alpha \cdot \min_{x\in X} \cost(F, x) .
$$
This takes time
$$
\mathbf{Bicriteria} = O(1) \cdot (nt + \log (n) \cdot \mathbf{SlowMedian} + \mathbf{SlowEpsApprox}),
$$
where:
\begin{itemize}
\item $t$ is an upper bound on the time it takes to compute $f(Y)$ for a pair $f \in F$ and $Y \subseteq X$ such that $\abs{Y} \leq \beta$.

\item $O(\mathbf{SlowMedian})$ is the time it takes to compute, with probability at least $1-\delta/2$, a $(3/4, \eps, \alpha, \beta)$-median for a set $F' \subseteq F$.

\item O($\mathbf{SlowEpsApprox})$ is the time it takes to compute a $(1, 0, \alpha, \beta)$-median for a set $F' \subseteq F$ of size
$\abs{F'} = O(1/\eps)$.

\end{itemize}
\end{theorem}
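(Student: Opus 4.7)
The plan is to implement an iterative covering scheme analogous to Algorithm~\ref{algorithm - a,b approx}, but using the generic SlowMedian and SlowEpsApprox primitives. Initialize $F_1 \coloneqq F$ and $Z \coloneqq \emptyset$. At iteration $i$, invoke $\mathbf{SlowMedian}$ on $F_i$ to obtain a $(3/4,\eps,\alpha,\beta)$-median $Y_i$ of $F_i$ as in Definition~\ref{definition - robust median}, update $Z \coloneqq Z \cup Y_i$, and let $G_i$ denote the $\lceil (1-\eps)(3/4)\abs{F_i}\rceil$ functions $f\in F_i$ with smallest value $f(Y_i)$, so that by the robust-median guarantee $\cost(G_i,Y_i)\le \alpha\cdot\min_{x\in X}\cost((F_i)_x,x)$. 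Set $F_{i+1} \coloneqq F_i\setminus G_i$, which shrinks by a constant factor (for sufficiently small $\eps$ we have $\abs{F_{i+1}}\le (7/16)\abs{F_i}$). When $\abs{F_i} = O(1/\eps)$, call $\mathbf{SlowEpsApprox}$ to obtain a $(1,0,\alpha,\beta)$-median $Y_{\mathrm{end}}$ of the residual set $F_{\mathrm{end}} \coloneqq F_i$ and append it to $Z$.

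The size bound $\abs{Z}\le \beta\log_2 n$ is immediate: each round contributes at most $\beta$ points and the geometric shrinkage guarantees $O(\log_2 n)$ rounds. The running time decomposes as $O(\log n)\cdot \mathbf{SlowMedian}$ for the robust-median calls, one call to $\mathbf{SlowEpsApprox}$, and the per-round cost of evaluating $f(Y_i)$ for every $f\in F_i$ in order to identify $G_i$, which is $O(\abs{F_i}\,t)$ per round and $O(nt)$ in total by geometric summation. Union-bounding over the $O(\log_2 n)$ SlowMedian calls and the single SlowEpsApprox call, each run with failure probability $O(\delta/\log n)$, keeps the overall failure probability below $\delta$.

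For correctness, fix $x^*\in\argmin_{x\in X}\cost(F,x)$. Because $Y_i\subseteq Z$ for each $i$ and $Y_{\mathrm{end}}\subseteq Z$, we have $f(Z)\le f(Y_i)$ for every $f\in G_i$ and $f(Z)\le f(Y_{\mathrm{end}})$ for every $f\in F_{\mathrm{end}}$, hence
\[
\cost(F,Z)\le \sum_i \cost(G_i,Y_i)+\cost(F_{\mathrm{end}},Y_{\mathrm{end}})\le \alpha\sum_i \cost((F_i)_{x^*},x^*)+\alpha\cdot\cost(F_{\mathrm{end}},x^*).
\]
The main obstacle is to bound the right-hand side by $(1+\eps)\alpha\cdot\cost(F,x^*)$: naively every summand is at most $\cost(F,x^*)$ and one would lose a $\log n$ factor. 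The fix is a charging argument exploiting that $(F_i)_{x^*}$, the $\lceil(3/4)\abs{F_i}\rceil$ cheapest functions for $x^*$ inside $F_i$, overlaps $G_i$ in at least a $(1/2-O(\eps))$-fraction of $\abs{F_i}$ because $\abs{G_i}+\abs{(F_i)_{x^*}}-\abs{F_i}$ is that large. Consequently the cheap-to-$x^*$ functions are peeled off early, and any $f$ that survives into later rounds is expensive for $x^*$. Charging each $f\in F$ to its final round shows that every value $f(x^*)$ is counted at most $1+O(\eps)$ times across all rounds, where the slack $O(\eps)$ arises both from the gap between $(1-\eps)\gamma$ and $\gamma$ in Definition~\ref{definition - robust median} and from the residual $O(1/\eps)$ functions absorbed by SlowEpsApprox. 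Rescaling $\eps$ by a constant then produces the stated $(1+\eps)\alpha$ bound.
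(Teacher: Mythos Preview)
The paper does not prove this theorem at all: it is quoted verbatim from \cite{newframework} (Feldman--Langberg) and used as a black box. So there is no ``paper's own proof'' to compare against; what you have written is an attempted reconstruction of the argument behind the cited result.

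Your overall architecture---iterate $\mathbf{SlowMedian}$, peel off the set $G_i$, recurse on $F_{i+1}=F_i\setminus G_i$, and finish with $\mathbf{SlowEpsApprox}$ once $|F_i|=O(1/\eps)$---is exactly the right scheme, and your size, memory and running-time bookkeeping are fine.

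The gap is in the cost analysis. Your key sentence, ``Charging each $f\in F$ to its final round shows that every value $f(x^*)$ is counted at most $1+O(\eps)$ times across all rounds,'' is false as stated. A single function $f$ with very small $f(x^*)$ can lie in $(F_i)_{x^*}$ (the cheapest $\tfrac34|F_i|$ with respect to $x^*$) for \emph{every} round in which it survives, and it may survive $\Theta(\log n)$ rounds if it is far from each $Y_i$. So the per-function multiplicity $c_f$ can be $\Theta(\log n)$, not $1+O(\eps)$. What is true is a \emph{weighted} statement: since $|(F_i)_{x^*}|=\tfrac34|F_i|$ and $|F_i|$ decays geometrically with ratio $\rho=\tfrac14+\tfrac34\eps$, one has $\sum_i|(F_i)_{x^*}|=\tfrac{3/4}{1-\rho}\,n=\tfrac{1}{1-\eps}\,n$, so the \emph{average} multiplicity is $1+O(\eps)$. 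But turning this average statement into the bound $\sum_i\cost\bigl((F_i)_{x^*},x^*\bigr)\le(1+O(\eps))\cost(F,x^*)$ requires an additional argument relating which functions get over-counted to how cheap they are for $x^*$; the inclusion--exclusion overlap $|G_i\cap (F_i)_{x^*}|\ge(\tfrac12-\tfrac34\eps)|F_i|$ that you cite is a necessary ingredient but is not by itself enough, and you have not supplied the missing step. The actual proof in \cite{newframework} handles this with a more careful telescoping/ordering argument that you would need to reproduce (or cite) to close the gap.
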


Alg.~\ref{algorithm - a,b approx} is the specification of the algorithm \textsc{Bicriteria} in \cite{newframework} for the case that $F$ is a set $L$ of lines, $\beta$ is a positive integer, $k \geq 1$, $\eps=1/11$ and $\alpha=4$. By plugging our \constantapprox~algorithm from Lemma~\ref{lemma - 4 approx} and the bound on the VC-dimension in Theorem~\ref{theoram - Warren, the vc dim of n polinomials is small} in the bicriteria framework from~\cite{newframework}, we conclude that Alg~\ref{algorithm - a,b approx} returns an $(\alpha,\beta)$-approx as follows.

\begin{theorem} \label{theorem - the output of a,b approx is indeed a,b approx}
Let $L$ be a set of $n$ lines in $\REAL^d$, $k\geq1$ be an integer, $\delta \in (0,1)$ and 
$$
m \geq c\of{dk \log_2 k +\log_2\of{\frac{1}{\delta}}}
,$$ 
for a sufficiently large constant $c>1$ that can be determined from the proof. Let $B$ be the output set of a call to $\bicriteria(L,m)$; See Algorithm \ref{algorithm - a,b approx}. Then, 
\begin{equation} \label{equation - size of ab approx}
\abs{B} \in O\of{\log n \of{dk \log k+\log(1/\delta)}}
\end{equation}
and with probability at least $1-\delta$,
$$
\cost(L,B) \leq 4 \rho^2 \cdot \min_{P \subseteq \REAL^d, \abs{P}=k}\cost(L,P)
.$$
Moreover, $B$ can by computed in $O\of{nd^2k\log k +m^2 \log n}$ time.
\end{theorem}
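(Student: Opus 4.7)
The plan is to cast Algorithm~\ref{algorithm - a,b approx} as a direct instantiation of the Feldman--Langberg \textsc{Bicriteria} framework (Theorem~\ref{theorem - new framework, output of a,b approx is bi criteria}) applied to the function family $F=L$ arising from the $k$-median-of-lines query space. That framework needs three ingredients, each already provided by earlier pieces of the paper: (i) a VC-dimension bound on the induced range space, (ii) a \textsc{SlowMedian} subroutine returning a robust $(3/4,\eps,\alpha,\beta)$-median of a random sub-sample, and (iii) a \textsc{SlowEpsApprox} routine handling the final residual of $O(1/\eps)$ lines. My job is to match each of these to objects already built.

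For (i), Corollary~\ref{corollary - the range space is O(dklog k)} supplies $d^*\in O(dk\log k)$. For both (ii) and (iii), I would use \constantapprox\ (Algorithm~\ref{algorithm - 4 approx}) on the sample $S$ drawn in Line 4: by Lemma~\ref{lemma - 4 approx}, its output $G$ contains $k$ points whose distance to \emph{every} line in $S$ is within factor $4\rho^2$ of the optimum $k$-median of $S$; the $1$-log-Lipschitz argument of Observation~\ref{observation - Jubran lipshitz} (the very step carried out in Theorem~\ref{theorem - 4 approx is robust}) then upgrades this per-line guarantee into the robust-cost guarantee demanded by Definition~\ref{definition - robust median}, so $G$ contains a $((1-\eps)\gamma,\eps,4\rho^2,1)$-median of $S$. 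Plugging $d^*\in O(dk\log k)$ together with $\gamma=3/4$ and $\eps=1/11$ into Corollary~\ref{corollary - (...)-median of S is a (...)-median of F} shows that a sample of size $\Theta(dk\log k+\log(1/\delta))$ suffices to lift this guarantee from $S$ to the current set $X$; the hypothesis $m\ge c(dk\log_2 k + \log_2(1/\delta))$ absorbs this quantity for a sufficiently large universal $c$.

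With these components verified, Theorem~\ref{theorem - new framework, output of a,b approx is bi criteria} immediately yields the correctness claim: with probability at least $1-\delta$, the output set $B$ satisfies $\cost(L,B)\le(1+\eps)\cdot 4\rho^2\cdot\min_{\abs{P}=k}\cost(L,P)$, which after adjusting the constant is the stated $4\rho^2$-bound; the size bound $\abs{B}\le\beta\log_2 n\in O(\log n\,(dk\log k+\log(1/\delta)))$ also falls out of the framework after plugging in $\beta\in O(dk\log k+\log(1/\delta))$. The while-loop of Algorithm~\ref{algorithm - a,b approx} is nothing more than the ``peeling'' step implicit in the framework: after each call to \constantapprox\ we discard the closest $7\abs{X}/11$ lines in $X$, so $\abs{X}$ contracts by a constant factor per iteration and the loop terminates in $O(\log n)$ rounds.

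The running time is then a matter of bookkeeping. Per iteration, drawing the sample is $O(m)$, running \constantapprox\ on $S$ costs $O(d^2 m^2)$ by the second half of Lemma~\ref{lemma - 4 approx}, and selecting the closest $7\abs{X}/11$ lines in $X$ to $G$ takes $O(d^2\abs{X}\cdot\abs{G})$ via a single scan with linear-time selection; summing the geometric chain of iteration sizes yields $O(nd^2 k\log k + m^2\log n)$ after collecting constants. The main obstacle will be component (ii): carefully verifying that the per-line bound of Lemma~\ref{lemma - 4 approx} really does translate into the triple-criteria robust median of Definition~\ref{definition - robust median}, since the framework insists on ignoring a $\gamma$-fraction of outliers rather than bounding every individual line. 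Observation~\ref{observation - Jubran lipshitz} with $r=s=1$ is precisely the vehicle for this translation, mirroring the argument already performed in Theorem~\ref{theorem - 4 approx is robust}; once that one translation is written down, the rest of the proof reduces to a plug-in of the established Feldman--Langberg machinery.
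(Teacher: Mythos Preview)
Your plan is essentially the paper's own proof: both instantiate Theorem~\ref{theorem - new framework, output of a,b approx is bi criteria} with the VC bound of Corollary~\ref{corollary - the range space is O(dklog k)}, use $\constantapprox$ on the sample $S$ as the \textbf{SlowMedian}/\textbf{SlowEpsApprox} routine, and invoke Corollary~\ref{corollary - (...)-median of S is a (...)-median of F} to size the sample. Two small divergences are worth noting. First, the paper verifies the robust-median property by substituting $k=1$ and a single point $P=\{p^*\}$ in Theorem~\ref{theorem - 4 approx is robust} (with $\gamma=1$, $\eps=1/11$), whereas you keep the full $k$ and take $\gamma=3/4$; both routes are valid and lead to the same conclusion that $G$ contains the required robust median. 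Second, and more substantively, you assert $\beta\in O(dk\log k+\log(1/\delta))$ without justification; the paper instead takes $\beta=O(m^2)$, which is what Algorithm~\ref{algorithm - a,b approx} actually produces since it adds \emph{all} of $G$ (of size $O(|S|^2)=O(m^2)$) to $B$ in every iteration. Your value of $\beta$ matches the theorem's stated size bound~\eqref{equation - size of ab approx}, but the paper's proof (and the algorithm's own output specification) in fact yield $|B|\in O(m^2\log n)$, so be aware that this is a point where the paper's statement and its proof are not perfectly aligned; your proposal inherits the optimistic reading rather than resolving it.
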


\begin{proof}
Let $i \in [\lceil \log_2n \rceil]$, and consider the values of $X, X', S, B$ and $G$ during the execution of the $i$th iteration of the main ``while'' loop in Line 3 of Alg.~\ref{algorithm - a,b approx} . That is, identify $X$ as the set of lines that were computed in Line 8 during the execution of the $(i-1)$th iteration, $S$ be a set of at least $m$ lines that was randomly chosen from $X$ in Line 4, and $G \subseteq \REAL^d$ be the centroid set that is computed in Line 5 during the $i$th iteration.

Let $p^*\in\REAL^d$. Substituting $P=\br{p^*}, L=S, m= \lceil10\abs{S}/11 \rceil$ and $k=1$ in Theorem~\ref{theorem - 4 approx is robust} yields that there is a set $P''=\br{p'}\subseteq G$ such that
\begin{equation} \label{equation - G contains (...)-median of single point} 
\cost\of{\closest\of{S,\br{p'},\frac{10\abs{S}}{11}}, \br{p'}} \leq 4 \rho^2 \cdot \cost\of{\closest\of{S,\br{p^*},\frac{10\abs{S}}{11}}, \br{p^*}}.
\end{equation}
Hence, there is a point $p' \in G$ which is a $((1-\eps)\gamma, \eps, \alpha, 1)$-median of $S$, where $\gamma=1,\eps=1/11$ and $\alpha=4 \rho^2$, that can be found via exhaustive search over every point in $G$, in $O\of{\abs{G} \cdot \abs{S}}$ time.

Next, recall the definition of $\Q_k=\br{Q\subseteq\REAL^d\mid |Q|=k}$ as the family of all sets which are the union of $k$ points in $\REAL^d$. For every $Q \in \Q_k$, let $f_j(Q)=\dist(\ell_j,Q)$ and $F^*_L$ be the union of these functions, as defined in Corollary~\ref{corollary - the range space is O(dklog k)}. We get that the VC-dimension of the range space that is induced by $\Q_k$ and $F$ is $O(dk \log k)$. 

Consider that
$$
\abs{S} \in \Theta \of{dk \log_2 k +\log_2\of{\frac{1}{\delta}}}
,$$ 
substituting in Corollary~\ref{corollary - (...)-median of S is a (...)-median of F}, $d^* = dk \log k,\gamma=1,\eps=1/11, \alpha = 4 \rho^2$ and $\beta = O(m^2)$, together with \eqref{equation - G contains (...)-median of single point} yields that with probability $1-\delta$, $G$ is a $(1,4/11,4\rho^2,O(m^2))$-median of $X$.

\paragraph{Running time.}
Note that:
\begin{itemize}
\item The time it takes to compute $f(Y)=\dist(\ell,Q)$, i.e., the Euclidean distance from $\ell$ to $Q$ for a single line $\ell \in X$, and a set $Q \subseteq \REAL^d$ of $\abs{Q}=O\of{dk \log k}$ points is $t = O\of{d^2k \log k}$.

\item Defining $\mathbf{SlowMedian}$ to be the time $O(d^2m^2)$ that it takes to compute $G$ by a call to $\constantapprox(S)$, and $\mathbf{SlowEpsApprox}$ is the time it takes to compute a $(1,0,\alpha)$-approximation for a set of $1/\eps$ lines ($\alpha$ and $\eps$ are defined as above), which, in turn, takes $O(d^2)$ time; See Lemma~\ref{lemma - 4 approx}.
\end{itemize}

Using this and substituting $\alpha = 4 \rho^2, \beta = O(m^2), X=\Q_k$ in Theorem~\ref{theorem - new framework, output of a,b approx is bi criteria}, when $F, \eps$ and $\gamma$ as above, yields that, with probability at least $1-\delta$,
$$
\cost(L,B) \in O(1) \cdot \min_{P \subseteq \REAL^d, \abs{P}=k}\cost(L,P),
$$ 
and the running time it takes to compute $B$ is $O\of{nd^2k\log k +m^2 \log n}$. This proves the theorem~\ref{theorem - the output of a,b approx is indeed a,b approx}.

\end{proof}

\section{Algorithm~\ref{algorithm - sensitivityoftranslatedlines}: \sensitivityoftranslatedlines} \label{section - sensitivity}

The generic coreset construction that we use in Algorithm~\ref{algorithm - coreset} is essentially a non-uniform sample from a distribution that is based on the~\abapprox~from Section~\ref{section - bicriteria}. This distribution is known as sensitivity which we define and bound in this section.

\begin{figure}[h]
		\caption{\textbf{Illustration of Lemma~\ref{theorem - total sensitivity can be computed given a,b-approx} in the planar case.} \textbf{(left)} A set $L=\br{\ell_1, \ldots, \ell_{10}}$ and its $(\alpha,\beta)$-approximation $\hat{P}=\br{\hat{p_{1}},\hat{p_{2}},\hat{p_{3}},\hat{p_{4}}}$ for its $k$-median where $k=2, \alpha=4$, and $\beta=2$. \textbf{(right)} Each line in $L$ is translate to its nearest points in $\hat{P}$, resulting in the set $L'=\br{\ell_1', \ldots, \ell_{10}'}$.}
		\includegraphics[height=7cm, width=16cm]{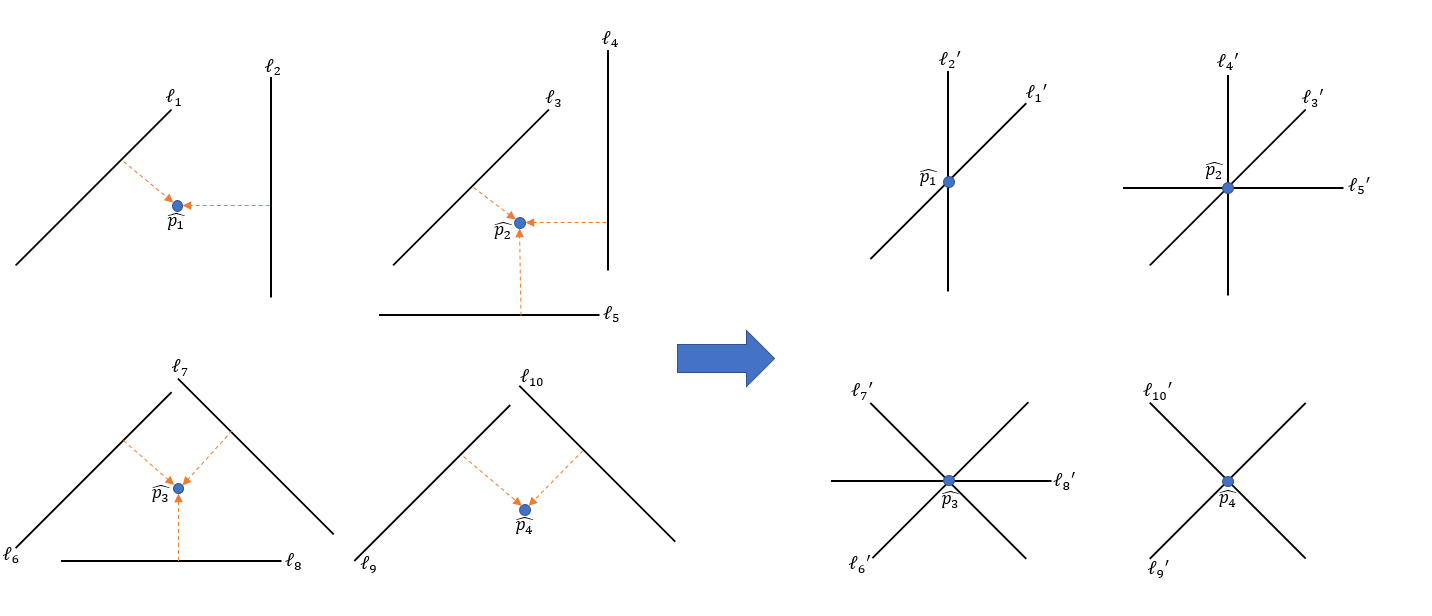}
		\label{figure - L to L'}
\end{figure}

The following definition is a special case of the definition of sensitivity in~\cite{newframework} for our problem of $k$-median of lines.
\begin{definition} [sensitivity of lines] \label{definition - sensitivity}
Let $L$ be a set of $n$ lines in $\REAL^d$ and $k\geq1$ be an integer. We define the \emph{sensitivity} of a line $\ell \in L$ by
\begin{align} \label{single line sensitivity}
S_{L,k}^*(\ell) 
= \sup_{P \subseteq \REAL^d, \abs{P}=k} \frac{\dist(\ell,P)}{\cost(L,P)}
= \sup_{P \subseteq \REAL^d, \abs{P}=k} \frac{\dist(\ell,P)}{\sum_{\ell' \in L} \dist(\ell', P)},
\end{align}
where the $\sup$ is over every set of $k$ points in $\REAL^d$ such that the denominator is positive. The \emph{total sensitivity} is defined to be the sum over these sensitivities, $S_k^*(L) = \sum_{\ell \in L} S_{L,k}^*(\ell)$. The function $S_{L,k} : L \to [0, \infty)$  is a \emph{sensitivity bound} for $S_{L,k}^*$, if for every $\ell \in L$ we have $S_{L,k}(\ell) \geq S_{L,k}^*(\ell)$. The \emph{total sensitivity bound} is then defined to be
\begin{align} \label{total lines sensitivity}
S_k(L)= \sum_{\ell \in L} S_{L,k}(\ell).
\end{align}
\end{definition}

The sensitivity bound that is computed in Algorithm~\ref{algorithm - sensitivityoftranslatedlines} is splitted into two parts. The first part is related to the distance from each line to its bicriteria approx $\hat{P}$, ignoring its slope. The second part depends only on the slope of its line. In the second case the direction of the line is translated into a unit vector $p$, and the $i$th query point is translated into a unit vector $c_i$ with a corresponding weight $w_i$. The sensitivity of this new query space was bounded by Feldman and Schulman~\cite{outliers-resistance} as stated in the following Theorem.

\begin{theorem} [\cite{outliers-resistance}]\label{theorem - total sensitivity of k weighted points theorem}
Let $P \subseteq \REAL^d$ be as set of $n$ points in $\REAL^d$, $k \geq 1$ be an integer and $\delta \in (0,1)$. For every $p \in P$, let
$$
s^*(p):=\sup_{\substack{\br{c_1, \ldots, c_k} \subseteq \REAL^d\\ \br{w_1, \ldots, w_k} \in [0,\infty)^k}}   \frac{\min_{i \in [k]} w_i\cdot \dist\of{p,c_i}}{\sum_{p' \in P} \min_{i \in [k]} w_i\cdot \dist\of{p',c_i}}
,$$
where the $\sup$ is over every $k$ points $c_1,\ldots,c_k$ with corresponding weights $w_1,\ldots,w_k$ such that the denominator is positive. Then, with a probability at least $1-\delta$, a function $s:P\to  (0, \infty)$ such that $s(p)\geq s^*(p)$ and 
\begin{equation} \label{equation - total sensitivity of k weighted points theorem}
\sum_{p \in P} s(p) \in k^{O(k)}\log n,
\end{equation} 
can be computed in $ndk^{O(k)}$ time.
\end{theorem}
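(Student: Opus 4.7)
The plan is to reduce the weighted $k$-centers sensitivity problem to a sequence of bicriteria clustering problems in the spirit of Section~\ref{section - bicriteria}, and then enumerate over the ``weight profiles'' that can arise among the $k$ centers. The main obstacle is that the weights $w_1,\ldots,w_k$ are unconstrained in the supremum, so a query can effectively deactivate some centers by setting their weight very large while letting a single center dominate on a tiny fraction of $P$; this is what makes the total sensitivity grow to $k^{O(k)}\log n$ rather than the $O(k)$-type bound for unweighted $k$-median.

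First I would compute, with probability at least $1-\delta$, an $(\alpha,\beta)$-approximation $B\subseteq\REAL^d$ for the unweighted $k$-median of $P$ using a variant of $\bicriteria$, adapted from points in place of lines. This partitions $P$ into clusters $\br{P_b}_{b\in B}$ by nearest-center assignment, in total time $O(nd)\cdot k^{O(k)}$. For each $p\in P_b$ I would then set a preliminary sensitivity
\[
s_0(p)\;:=\;\frac{\dist(p,b)}{\cost(P,B)}\;+\;\frac{c}{|P_b|}
\]
for a suitable constant $c$. A standard argument (as in Feldman--Langberg~\cite{newframework}) shows that this bounds $s^*(p)$ restricted to queries in which every one of the $k$ weighted centers $(c_i,w_i)$ serves a set of points of comparable total weighted cost.

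The hard part is the general query. To handle it, I would enumerate over the $k^{O(k)}$ possible assignments of the $\beta k$ bicriteria centers of $B$ to the $k$ query centers, and within each such assignment classify the $k$ query centers by ``levels'' of the quantity $w_i\cdot \dist(c_i,B)$, rounded to powers of two. Only $O(\log n)$ levels are relevant, because a center whose level is too low contributes negligibly to both numerator and denominator of $s^*$, while a center whose level is too high captures so few points that its contribution is already absorbed into $1/|P_b|$. For each fixed assignment and fixed level profile, the supremum reduces to an unweighted $k$-median-style sensitivity, so the resulting partial bound has sum $O(1)$ per combination. Multiplying by the $k^{O(k)}$ assignments and summing over the $O(\log n)$ level profiles yields the claimed total sensitivity bound of $k^{O(k)}\log n$.

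Finally, I would define $s(p)$ to be the maximum of $s_0(p)$ and the worst over all these enumerations; a union bound over the single randomized step (the bicriteria) gives the probability $1-\delta$. The running time is dominated by the bicriteria call and the per-point enumeration, giving $ndk^{O(k)}$ as stated. The subtle step, which I would need to execute carefully, is showing that the level rounding loses only a constant factor in the sensitivity bound while keeping the number of relevant levels at $O(\log n)$; this is where the specific geometric structure of weighted Euclidean centers is used, together with the approximate triangle inequality assumed in Section~\ref{subsection - preliminaries}.
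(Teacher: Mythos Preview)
The paper does not prove this theorem. It is quoted verbatim as a black-box result from Feldman and Schulman~\cite{outliers-resistance}; the thesis only \emph{uses} it (inside Algorithm~\ref{algorithm - sensitivityoftranslatedlines} and in the proof of Lemma~\ref{theorem - total sensitivity can be computed given a,b-approx}) and never attempts an independent argument. So there is no ``paper's own proof'' to compare your proposal against.

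On the proposal itself: the overall architecture you describe---bicriteria for unweighted $k$-median, cluster-wise additive term $\dist(p,b)/\cost(P,B)+c/|P_b|$, and then an enumeration over how the $k$ weighted query centers interact with the bicriteria centers---is in the right spirit, and indeed the $k^{O(k)}$ factor in~\cite{outliers-resistance} does arise from a recursive/enumerative argument of that flavor. However, your ``levels'' step is where the sketch is weakest. You assert that only $O(\log n)$ levels of the quantity $w_i\cdot\dist(c_i,B)$ are relevant, but the weights $w_i$ are completely unconstrained and can be chosen adversarially \emph{after} you fix $B$; there is no a priori scale to round to powers of two, and the interaction between ``level too high captures few points'' and the $1/|P_b|$ term is not automatic (the adversary can pick $c_i$ far from every $b\in B$). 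In~\cite{outliers-resistance} the $\log n$ does not come from rounding weights but from a recursive peeling: one repeatedly removes a constant fraction of the points that are already well-served, and the recursion depth is $O(\log n)$. Your sketch would need to be reorganized around that recursion (or an equivalent charging argument) to close the gap; as written, the level-counting step does not go through.
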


The following lemma reduces the sensitivity bound of each line to the sensitivity bound on its translated line in $L'$ as in Fig.~\ref{figure - L to L'} plus an additive term that is the distance of the line to the bicriteria apporximatoin $\hat{P}$.

\begin{lemma} \label{theorem - total sensitivity can be computed given a,b-approx}
Let $L = \br{\ell_1, \ldots, \ell_n}$ be a set of $n$ lines in $\REAL^d$, $k\geq 1$ be an integer, $\alpha, \beta > 0, \delta \in (0,1)$ and let $\hat{P}  \subseteq \REAL^d$ be an \abapprox~for the $k$-median of $L$. For every $\ell \in L$, let $\ell'$ be the line that is parallel to $\ell$ and passes through the closest point to $\ell$ in $\hat{P}$, and $L' = \br{\ell'\mid \ell\in L}$ denote their union; See Fig.~\ref{figure - L to L'}.

Then, given $\hat{P}$, with a probability at least $1-\delta$ a sensitivity bound $S_{L,k}(\ell)$ for $S^*_{L,k}(\ell)$ can be computed in time
$$
O\of{d^2 n\log (n) k \log k } + ndk^{O(k)}
,$$
and its total sensitivity bound is $\alpha\beta k^{O(k)}\log n$.

\end{lemma}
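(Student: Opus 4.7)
Following Algorithm~\ref{algorithm - coreset}, my plan is to certify that for each $\ell \in L$ the quantity
\[
S_{L,k}(\ell) \;:=\; \frac{\dist(\ell,\hat{P})}{\cost(L,\hat{P})} \;+\; 2\,s_b(\ell),
\]
where $b\in\hat{P}$ is the nearest bicriteria centre to $\ell$ and $s_b$ is the sensitivity returned by $\sensitivityoftranslatedlines(L_b,b,k)$, is an upper bound on $S^*_{L,k}(\ell)$. The bound will be obtained by splitting $\dist(\ell,P)$ into a translation piece $\dist(\ell,\hat{P})$ and a rotation piece $\dist(\ell',P)$ via the (weak) triangle inequality, and then bounding each piece separately.

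The translation piece uses the fact that $\ell'$ is the parallel translate of $\ell$ through $b$, so $\dist(\ell,P)\leq \rho\bigl(\dist(\ell,\hat{P})+\dist(\ell',P)\bigr)$. Dividing by $\cost(L,P)$ and invoking the $(\alpha,\beta)$-approximation inequality $\cost(L,\hat{P})\leq\alpha\cost(L,P)$ yields $\dist(\ell,\hat{P})/\cost(L,P)\leq \alpha\dist(\ell,\hat{P})/\cost(L,\hat{P})$, which telescopes over $\ell\in L$ to a total contribution of exactly $\alpha$. This matches the first summand of $S_{L,k}(\ell)$.

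For the rotation piece I would work cluster by cluster in the partition $\{L_b:b\in\hat{P}\}$. Every translated line in $L_b$ passes through $b$, so it is determined by either of its two intersections $p(\ell')\in\ell'\cap\mathbb{S}^{d-1}(b)$ with the unit sphere centred at $b$. For each query point $c\in P$, I would introduce the two weighted sphere points $c^{\pm}:=b\pm (c-b)/\|c-b\|$ with common weight $w_c:=\|c-b\|$, and establish the two-sided constant-factor relation
\[
\dist(\ell',c)\;=\;\Theta\Bigl(\min_{\sigma\in\{+,-\}} w_c\bigl\|v-c^{\sigma}\bigr\|\Bigr)\qquad (v\in\ell'\cap\mathbb{S}^{d-1}(b)),
\]
by expanding $\dist(\ell',c)=\|c-b\|\sin\theta=2\|c-b\|\sin(\theta/2)\cos(\theta/2)$ and choosing $\sigma$ so that $\theta/2\leq\pi/4$, hence $\cos(\theta/2)\in[1/\sqrt 2,1]$. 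This identifies, up to constants, the sensitivity of $\ell'$ for $k$-median of lines through $b$ with the sensitivity of $p(\ell')$ for a $(2k)$-weighted-centre query on $\REAL^d$, which is the exact setting of Theorem~\ref{theorem - total sensitivity of k weighted points theorem}. Invoking that theorem per cluster, with failure probability $\delta/|\hat{P}|$, yields a function $s_b$ with $\sum_{\ell\in L_b}s_b(\ell)\in k^{O(k)}\log n$ in $n_b d k^{O(k)}$ time, where $n_b:=|L_b|$.

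A union bound over the $|\hat{P}|=\beta k$ clusters gives overall success probability $\geq 1-\delta$. Summing the cluster totals gives a rotation contribution of $\beta k\cdot k^{O(k)}\log n\in \beta k^{O(k)}\log n$, which together with the translation contribution $\alpha$ yields the claimed $\alpha\beta k^{O(k)}\log n$ bound. For the running time, assigning each of the $n$ lines to its nearest centre in $\hat{P}$ (whose size is $O(dk\log k\log n)$ by Theorem~\ref{theorem - the output of a,b approx is indeed a,b approx}) costs $O(d^2 n\log n\cdot k\log k)$, and the aggregated per-cluster Feldman--Schulman calls cost $\sum_b n_b d k^{O(k)}=ndk^{O(k)}$. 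The hardest step will be the sphere reduction: carefully justifying the two-sided $\Theta$-relation above, in particular the case split between $c^+$ and $c^-$ that forces $2k$ weighted centres rather than $k$, and verifying that the explicit factor $2$ in front of $s_b(\ell)$ absorbs the discrepancy between $\cost(L,P)$ and the cluster-wise cost on $L_b$ that implicitly underlies $s_b$.
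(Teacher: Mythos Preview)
Your proposal is correct and follows essentially the same route as the paper: the same triangle-inequality split into a translation term $\dist(\ell,\hat P)$ and a rotation term $\dist(\ell',P)$, the same per-cluster reduction of lines through $b$ to points on $\mathbb{S}^{d-1}(b)$ queried by $2k$ weighted centers (the paper phrases this via Thales' theorem plus $\dist(\ell,B)=\Theta(1)\cdot\dist(q_\ell,B\cup -B)$, which is exactly your $c^{\pm}$/$w_c$ computation), and the same appeal to Theorem~\ref{theorem - total sensitivity of k weighted points theorem}. The one step you flag as ``hardest'' is handled in the paper by an extra triangle inequality showing $\cost(L',P)\leq\rho(\alpha+1)\cost(L,P)$, which converts $\dist(\ell',P)/\cost(L,P)$ into $\rho(\alpha+1)\,S^*_{L',k}(\ell')$ and then into the cluster sensitivity by the trivial bound $S^*_{L',k}(\ell')\leq S^*_{L'_i,k}(\ell')$; so the factor in front of $s_b(\ell)$ should really be $\rho^2(\alpha+1)$ rather than the literal $2$, but this only shifts constants absorbed in $k^{O(k)}$.
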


\begin{proof}
The proof splits into two parts, as follows: in the first part we bound $S^*_k(L)$ via $S_k^*(L')$, and in the second part we bound $S_k^*(L')$ with $S_k(L')$.

\paragraph{Bound on $S^*_k(L)$.}
Let $P \subseteq \REAL^d$ be a set of $k$ points, $\ell \in L$ and $\ell'$ be its corresponding line in $L'$. Let $p' \in P$ denote the closest point to $\ell'$ in $P$. Ties broken arbitrarily). We then have,
\begin{align}
\dist(\ell,P) 
& \leq \dist(\ell,p')  \label{equation - sensitivity_triangle_inequality - b} \\
& \leq \rho \of{D\of{\ell,\pi\of{p',\ell'}}+D\of{\pi\of{p',\ell'},p'}}   \label{equation - sensitivity_triangle_inequality - c} \\
& = \rho \of{D\of{\ell,\pi\of{p',\ell'}}+D\of{\ell', p'}} \nonumber \\
& = \rho \of{D\of{\ell,\ell'}+D\of{\ell',P}},   \label{equation - sensitivity_triangle_inequality - d}
\end{align}

where~\eqref{equation - sensitivity_triangle_inequality - b} is by the definition of $p'$, and \eqref{equation - sensitivity_triangle_inequality - c} holds by the triangle inequality. Finally, \eqref{equation - sensitivity_triangle_inequality - d} holds since $\dist(\ell,q)=\dist(\ell,\ell')$ for every $q\in \ell'$, as $\ell$ and $\ell'$ are parallel.

Next, we bound the sum of distances from the set $L'$ of projected lines to $P$ by
\begin{align}
\cost(L',P)  
& = \sum_{\ell'\in L'}\dist(\ell',P) \nonumber \\
&\leq \rho \sum_{\ell'\in L'}\big(\dist(\ell',\ell)+\dist(\ell,P)\big)
 \label{equation - sensitivity_cost(L',hatP)_leq_2alphacost(L,P) - a} \\
& =\rho \of{ \cost(L,\hat{P})+\cost(L,P)}  \nonumber \\
& \leq \rho (\alpha+1) \cdot \cost(L,P), \label{equation - sensitivity_cost(L',hatP)_leq_2alphacost(L,P) - b}
\end{align}
where~\eqref{equation - sensitivity_cost(L',hatP)_leq_2alphacost(L,P) - a} holds as in~\eqref{equation - sensitivity_triangle_inequality - d} and~\eqref{equation - sensitivity_cost(L',hatP)_leq_2alphacost(L,P) - b} since $\hat{P}$ in an \abapprox~for $L$. We then obtain a bound on the rightmost term in \eqref{equation - sensitivity_triangle_inequality - d}, as

\begin{align}
\dist(\ell',P)  
& = \frac{\dist(\ell',P)}{\cost(L',P)} \cdot \cost(L',P) \nonumber \\
&\leq S_{L',k}^*(\ell') \cdot \cost(L',P)   \label{equation - sensitivity_bound_on_dist(ell',P) - a} \\
& \leq  S_{L',k}^*(\ell') \cdot \rho (\alpha+1) \cdot \cost(L,P),  \label{equation - sensitivity_bound_on_dist(ell',P) - b} 
\end{align}
where~\eqref{equation - sensitivity_bound_on_dist(ell',P) - a} holds by the definition of $S_{L',k}^*(\ell')$ and~\eqref{equation - sensitivity_bound_on_dist(ell',P) - b} by~\eqref{equation - sensitivity_cost(L',hatP)_leq_2alphacost(L,P) - b}.

We now bound the term in the left hand side of \eqref{equation - sensitivity_triangle_inequality - d} by
\begin{align}
\dist(\ell,\ell')  
& = \frac{\dist(\ell,\ell')}{\cost(L,P)} \cdot \cost(L,P) \nonumber \\
& \leq \frac{\alpha\cdot\dist(\ell,\ell')}{\cost(L,\hat{P})} \cdot \cost(L,P)   \label{equation - sensitivity_bound_on_dist(ell,ell') - a} \\
& = \frac{\alpha\cdot\dist(\ell,\hat{P})}{\cost(L,\hat{P})} \cdot \cost(L,P),  \label{equation - sensitivity_bound_on_dist(ell,ell') - b} 
\end{align}
where~\eqref{equation - sensitivity_bound_on_dist(ell,ell') - a} holds by the definition of $\hat{P}$ as an $(\alpha,\beta)$-approximation of $L$, and~\eqref{equation - sensitivity_bound_on_dist(ell,ell') - b} since $\ell$ and $\ell'$ are parallel. Plugging \eqref{equation - sensitivity_bound_on_dist(ell',P) - b} and \eqref{equation - sensitivity_bound_on_dist(ell,ell') - b}  in \eqref{equation - sensitivity_triangle_inequality - d} yields

\begin{align} 
\dist(\ell,P) 
& \leq  \rho \big( \dist(\ell,\ell') +  \dist(\ell',P) \big) \nonumber \\
& = \rho \cdot \cost(L,P) \cdot \of{\frac{\alpha\cdot\dist(\ell,\hat{P})}{\cost(L,\hat{P})} + \rho (\alpha+1) \cdot S_{L',k}^*(\ell')}. \label{equation - dist(l,p) upper bound}
\end{align}

Let $ \displaystyle \ P^* \in \argsup _{P' \subseteq \REAL^d, \abs{P'} = k} \frac{\dist(\ell,P')}{\cost(L,P')}$ denote a set of points that maximizes the sensitivity of $\ell$ s.t. the denominator is positive. We then obtain that a sensitivity of $\ell$ is

\begin{align}
S_{L,k}^*(\ell)   
& = \frac{\dist(\ell,P^*)}{\cost(L,P^*)} \nonumber \\
& \leq \frac{\rho (\dist(\ell,\ell') +  \dist(\ell',P^*))}{\cost(L,P^*)} \label{equation - S^*_L is bounded by S^*_L_tag - a} \\
& \leq \frac{\rho \cdot \cost(L,P^*) \cdot \of{\frac{\alpha\cdot\dist(\ell,\hat{P})}{\cost(L,\hat{P})} + \rho(\alpha+1) \cdot S_{L',k}^*(\ell')}}{\cost(L,P^*)} \label{equation - S^*_L is bounded by S^*_L_tag - b} \\
& = \frac{\rho \alpha \dist(\ell,\hat{P})}{\cost(L,\hat{P})} +  \rho^2(\alpha+1)\cdot S_{L',k}^*(\ell'), \nonumber
\end{align}
where \eqref{equation - S^*_L is bounded by S^*_L_tag - a} is by~\eqref{equation - sensitivity_triangle_inequality - d} and \eqref{equation - S^*_L is bounded by S^*_L_tag - b} holds by substituting $P=P^*$ in~\eqref{equation - dist(l,p) upper bound}.
Summing the last inequality over every $\ell \in L$ yields
\begin{align}
S^*_k(L) 
& = \sum_{\ell \in L} S^*_{L,k}(\ell) \nonumber \\
& \leq \sum_{\ell \in L} \of{ \frac{\rho\alpha\cdot\dist(\ell,\hat{P})}{\cost(L,\hat{P})} + \rho^2(\alpha+1) \cdot S_{L',k}^*(\ell')} \nonumber \\
& = \rho\alpha + \rho^2(\alpha+1) \cdot S_k^*(L').\label{equation - original lines sensitivity upper bound}
\end{align}

\paragraph{Bound on $S_k^*(L')$.} 
Let $\hat{p}_i \in \hat{P}$ denote the $i$th point of $\hat{P}$ and $L'_i \subseteq L'$ denote the subset of lines in $L'$ that intersect at $\hat{p_i}$, for every $i \in [\lceil k\beta\rceil]$. 
The total sensitivity is then bounded by
\begin{equation} \label{total lines L' sensitivity}
\begin{split}
 S_k^*(L') 
& = \sum_{\ell' \in L'} \sup_{P \subseteq \REAL^d, \abs{P} = k} \frac{\dist(\ell',P)}{\cost(L',P)} 
= \sum_{i=1}^{k\beta} \sum_{\ell' \in L'_{i}} \sup_{P \subseteq \REAL^d, \abs{P} = k} \frac{\dist(\ell',P)}{\cost(L',P)}\\
& \leq  \sum_{i=1}^{k\beta} \sum_{\ell' \in L'_{i}} \sup_{P \subseteq \REAL^d, \abs{P} = k} \frac{\dist(\ell',P)}{\cost(L'_{i},P)} 
= \sum_{i=1}^{k\beta} S_k^*(L'_{i})
,\\
\end{split}
\end{equation}
where the sup is over only positive values of the denominator. Let $i \in [\lceil k\beta\rceil]$. Without loss of generality, assume that $\hat{p_i}$ is the origin of $\REAL^d$, otherwise, translate the coordinate system. By Thales's theorem, for every $\ell' \in L_i'$ and $p \in \REAL^d$ we have
\begin{equation} \label{equation - Thales' theorem}
\dist(\ell',p) = \norm{p} \cdot \dist\of{\ell',\frac{p}{\norm{p}}}.
\end{equation}

Recall $\proj(X,Y) \in \arginf_{y \in Y} \dist(X,y)$ to be the closest point in $Y$ to $X$, for every two sets $X,Y \subseteq \REAL^d$, ties are broken arbitrary. The total sensitivity is then bounded by
\begin{equation} \label{equation - sensitivity is the proj}
\begin{split} 
S_k^*(L'_{i}) 
& = \sum_{\ell' \in L'_{i}} \sup_{P \subseteq \REAL^d, \abs{P} = k} \frac{\dist(\ell',P)}{\cost(L'_{i},P)} \\
& = \sum_{\ell' \in L'_{i}} \sup_{P \subseteq \REAL^d, \abs{P} = k} \frac{\dist(\ell',\proj(\ell',P))}{\sum_{\ell \in L_i'} \dist(\ell,\proj(\ell,P))} \\
& = \sum_{\ell' \in L'_{i}} \sup_{P \subseteq \REAL^d, \abs{P} = k}
\frac{\norm{\proj(\ell',P)} \cdot \dist\of{\ell',\frac{\proj(\ell',P)}{\norm{\proj(\ell',P)}}}}
{\sum_{\ell \in L_i'} \norm{\proj(\ell,P)} \cdot \dist\of{\ell,\frac{\proj(\ell,P)}{\norm{\proj(\ell,P)}}}},
\end{split}
\end{equation}
where the denominator is positive and the last equality is by~\eqref{equation - Thales' theorem}. For every $\ell' \in L'_i$ and  
$$
Q \in \argsup_{P \subseteq \REAL^d, \abs{P} = k}
\frac{\norm{\proj(\ell',P)} \cdot \dist\of{\ell',\frac{\proj(\ell',P)}{\norm{\proj(\ell',P)}}}}
{\sum_{\ell \in L_i'} \norm{\proj(\ell,P)} \cdot \dist\of{\ell,\frac{\proj(\ell,P)}{\norm{\proj(\ell,P)}}}}
,$$
let $u(\ell') = \norm{\proj(\ell',Q)}$. Let $\mathbb{S}^{d-1}=\br{q \in \REAL^d \mid \norm{q}=1}$ denote the unit sphere of $\REAL^d$. We then get
\begin{equation} \label{equation - sensitivity of k weighted points to lines}
S_k^*(L'_{i}) = \sum_{\ell' \in L'_{i}} \sup_{P \subseteq \mathbb{S}^{d-1}, \abs{P} = k} \frac{u(\ell') \cdot \dist(\ell',\proj(\ell',P))}{\sum_{\ell \in L_i'} u(\ell) \cdot \dist(\ell,\proj(\ell,P))},
\end{equation} 
and note that now the $\sup$ is over every $k$ points on the unit sphere such that the denominator is positive. For every set $B \subseteq \mathbb{S}^{d-1}$ of $k$ points, let $-B=\br{-b \mid b \in B}$. For a line $\ell$ that intersects the origin, let $q_{\ell}\in \mathbb{S}^{d-1} \cap \ell$. 
The distance from $\ell$ to its closest unit vector $b\in B$ is then
\begin{equation} \label{equation - distance to set from line is the same as the angle}
\dist(\ell,B)=\dist(\ell,b)
\in \Theta(1) \cdot \dist(q_{\ell},\br{b, {-b}})
=\Theta(1) \cdot \dist(q_{\ell},B\cup -B),
\end{equation}
where in the second derivation we used the fact that the angle $\gamma\in[0,\pi/2)$ between two unit vectors $q_{\ell}$ and $b$ around the origin is the same as the distance between them up to a multiplicative constant factor.

By the combination of \eqref{equation - sensitivity of k weighted points to lines} and~\eqref{equation - distance to set from line is the same as the angle} we obtain
\begin{equation}
S_k^*(L'_{i})  \in \Theta(1) \cdot \sum_{\ell' \in L'_{i}}  \sup_{P \subseteq \mathbb{S}^{d-1}, \abs{P} = k} \frac{u(\ell') \cdot \dist(q_{\ell'},P\cup-P)}{\sum_{\ell \in L_i'} u(\ell) \cdot \dist(q_{\ell},P\cup-P)}. \label{equation - max over k points with mins is like max over 2k points - b}
\end{equation}
The distance $\dist$ in~\eqref{equation - max over k points with mins is like max over 2k points - b} gets as an input a set of $2k$ points, hence, the total sensitivity is 

\begin{equation} \label{equation - sensitivity of k less then 2k}
S_k^*(L'_{i})  
=\sum_{\ell' \in L'_i} S^*_{L'_i,k}(\ell')
\in \Theta(1) \cdot \sum_{\ell' \in L'_{i}} \sup_{Q \subseteq \REAL^d, \abs{Q}=2k} \frac{u(\ell') \cdot \dist(q_{\ell'},Q)}{\sum_{\ell \in L_i'} u(\ell) \cdot \dist(q_{\ell},Q)},
\end{equation}

Using \eqref{equation - sensitivity of k less then 2k} and plugging $P = \bigcup_{\ell' \in L'_i} q_{\ell'}$, $\delta$,
$$
s(q_{\ell'}) = \sup_{Q \subseteq \REAL^d, \abs{Q}=2k} \frac{u(\ell') \cdot \dist(q_{\ell'},Q)}{\sum_{\ell \in L_i'} u(\ell) \cdot \dist(q_{\ell},Q)},
$$
and $2k$ instead of $k$ in Theorem~\ref{theorem - total sensitivity of k weighted points theorem} yields that a sensitivity bound $s = S_{L',k}(\ell') \geq S_{L',k}^*(\ell')$ of each $\ell' \in L'$ can be computed in time  $ndk^{O(k)}$. The total sensitivity of $L'_i$ is then bounded by
\begin{equation} \label{equation - a sensitivity bound of L tag i can be computed}
S_k(L'_{i}) \in k^{O(k)}\log n.
\end{equation}
Summing this bound over every $i \in [\lceil k\beta \rceil ]$ yields
\begin{equation} \label{equation - total sensitiviti of Litag bounded with beta}
 S_k^*(L') \leq
 \sum_{i=1}^{k\beta} S_k(L'_{i})
 \in \beta k^{O(k)}\log n,
\end{equation}
where the first bound is by \eqref{total lines L' sensitivity} and the second from \eqref{equation - a sensitivity bound of L tag i can be computed}. Finally, we get
\begin{align}
S_k^*(L) 
& \leq \rho \alpha + \rho^2(\alpha+1) \cdot S_k^*(L') \label{equation - the total sensitivity is bounded by f(alpha,beta,k ^ O(k)) - a} \\
& \in \alpha \beta k^{O(k)}\log n, \label{equation - the total sensitivity is bounded by f(alpha,beta,k ^ O(k)) - b}
\end{align}
where \eqref{equation - the total sensitivity is bounded by f(alpha,beta,k ^ O(k)) - a} is by \eqref{equation - original lines sensitivity upper bound} and \eqref{equation - the total sensitivity is bounded by f(alpha,beta,k ^ O(k)) - b} holds by \eqref{equation - total sensitiviti of Litag bounded with beta}, which proves the desired upper bound of the total sensitivity for $L$.

\paragraph{Running time.}The overall time it takes to compute the distance from $\hat{P}$ to each line $\ell$ in $L$ is \\$T=O(d \cdot |L|\cdot |\hat{P}|)$. Since $|L| = n$ and by the size of $\abs{\hat{P}} = \abs{B} = O\of{\log n \of{dk \log k+\log(1/\delta)}}$ (see \eqref{equation - size of ab approx}), we get 
$$
T= O\of{d^2 n\log (n) k \log k }
.$$ 

By Theorem \ref{theorem - total sensitivity of k weighted points theorem}, the time it takes to compute the sensitivities $S_{L',k}(\ell')$ of each $\ell' \in L'$ is $ndk^{O(k)}$. Hence, the total running time is
$$
T + ndk^{O(k)}
=O\of{d^2 n\log (n) k \log k } + ndk^{O(k)}
,$$
which proves the lemma.
\end{proof}

\section{Algorithm~\ref{algorithm - coreset}: Computing Coresets from Sensitivities.}
In this thesis, the input is usually a set of lines in $\REAL^d$, but for the streaming case in the next section we compute coreset for union of (weighted) coresets and thus the input lines are also weighted. To set our result for a finite number of input lines (offline case), we use the folowing definitions and theorem from Feldman, Xuan and Kfir \cite{zahimsc}.

\begin{definition} [$\eps$-coreset \cite{zahimsc}]
Let $(P',Y,f,\loss)$ be a query space as in Definition~\ref{definition - k query space}. For an approximation error $\eps > 0$, the pair $S' = (S,u)$ is called an \emph{$\eps$-coreset} for the query space $(P',Y,f,\loss)$, if $S \subseteqÂ„ P, u : S \to [0, \infty)$, and for every $y \in Y$ we have
$$
(1 - \eps)f_{\loss}(P', y) \leq f_{\loss}(S',y) \leq (1 + \eps)f_{\loss}(P', y).
$$
\end{definition}

The following theorem proves that a coreset can be computed by sampling according to sensitivity of lines. The size of the coreset depends on the total sensitivity and the complexity (VC-dimension) of the query space, as well as the desired error $\eps$ and probability $\delta$ of failure.

\begin{theorem} [coreset construction \cite{zahimsc}] \label{theorem - the output of coreset is coreset}
Let
\begin{itemize}
\item $P' = (P,w)$ be a weighted set, $Y$ be a set and $f:P' \times Y \to [0,1)$.

\item $F^*_{P'} = \br{f_1, \ldots, f_n}$, where $f_i(y)= f(p_i, Q)$ for every $i \in [n]$ and $y \in Y$.

\item $(F^*_{P'}, Y, f, \norm{\cdot}_1)$ be a query space and $n=\abs{P}$.

\item $d^*$ be the dimension of the range space that induced by $Y$ and $F^*_{P'}$.

\item $s^* : P \to [0, \infty)$ s.t. $s^*(p)$ is the sensitivity of $p \in P$, and $s : P \to [0, \infty)$ be the sensitivity bound of $s^*$; See definition~\ref{definition - sensitivity}, just now it is the general case of any weighted set instead of lines. 

\item $t = \sum_{p \in P} s(p)$.

\item $\eps, \delta \in (0,1)$.

\item $c > 0$ be a universal constant that can be determined from the proof.

\item 
$$
m \geq \frac{c(t+1)}{\eps^2} \of{d' \log(t+1) + \log\of{\frac{1}{\delta}}}, 
$$
and

\item $(C,u)$ be the output weighted set of a call to $\textsc{Coreset-Framework}(P, w, s, m)$ (Algorithm 4 in \cite{zahimsc}).
\end{itemize}
Then the following holds
\begin{itemize}
\item With probability at least $1- \delta$, $C$ is an $\eps$-coreset of $(F^*_{P'}, Y, f, \norm{\cdot}_1)$.

\item $\abs{C} = m$.

\item $(C,u)$ can be computed in $O(n)$ time given $(P, w, s, m)$.

\item $u(P) \in [w(p), \sum_{q \in P}w(q)/m]$ for every $p \in C$.

\item $\sum_{p \in P} w(p) = \sum_{p \in C} u(p)$.
\end{itemize}
\end{theorem}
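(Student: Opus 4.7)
The plan is to invoke the sensitivity-sampling paradigm of Feldman--Langberg. The algorithm \textsc{Coreset-Framework} draws $m$ i.i.d. samples from $P$ according to the distribution $\pi(p) = s(p)/t$, and assigns to each sampled point the weight $u(p) = w(p)/(m \pi(p))$. The main task is to establish the $\eps$-coreset property; the remaining bullets (size, running time, weight range, preservation of total weight) will fall out of the construction.

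For the $\eps$-coreset property I would proceed in two stages. First, a pointwise concentration bound: fix $y \in Y$ and consider $X_p = w(p)f(p,y)/\pi(p)$. By the definition of sensitivity and the assumption $s(p) \geq s^*(p)$ we have $w(p)f(p,y) \leq s(p) \cdot f_{\loss}(P',y)$, so $X_p \leq t \cdot f_{\loss}(P',y)$, while $\mathbb{E}[X_p] = \sum_{q \in P} w(q) f(q,y) = f_{\loss}(P',y)$. A Hoeffding bound on the sum of $m$ independent copies then yields $\Pr\!\left[\,|f_{\loss}(S',y) - f_{\loss}(P',y)| > \eps f_{\loss}(P',y)\,\right] \leq 2\exp\!\bigl(-c' m \eps^2/(t+1)\bigr)$, which already explains the $(t+1)/\eps^2$ factor in the sample size.

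The second stage — the step I expect to be the main obstacle — is upgrading the pointwise bound to a uniform bound over all $y \in Y$. Here I would normalize the functions by $s(p)$, so that the random variables $X_p/(t \cdot f_{\loss}(P',y))$ lie in $[0,1]$, and the class $\{g_y : p \mapsto w(p)f(p,y)/s(p)\}_{y \in Y}$ inherits a bounded pseudo-dimension from the hypothesis that the range space induced by $F^*_{P'}$ and $Y$ has VC-dimension $d^*$. A standard $\eps$-net / double-sampling argument on this normalized class then produces the additive $d^*\log(t+1)$ term in the sample-size requirement, matching the assumption $m \geq c(t+1)\!\left(d^*\log(t+1) + \log(1/\delta)\right)\!/\eps^2$ and giving the claimed $\eps$-coreset bound with probability at least $1-\delta$.

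The remaining bullets are immediate from the construction. The identity $|C|=m$ holds by definition of the sampling step. The $O(n)$ runtime follows from a single pass to compute $t = \sum_{p \in P}s(p)$ and the cumulative distribution of $\pi$, after which $m$ samples can be drawn in $O(m)\subseteq O(n)$ time. The weight-range claim follows from the explicit formula $u(p) = w(p)t/(m\, s(p))$ combined with $s(p) \in [1, t]$ (or the analogous bounds provided by the sensitivity function). Finally, $\sum_{p \in C} u(p) = \sum_{p \in P} w(p)$ holds in expectation for the sampling scheme above and is enforced exactly by the final rescaling step inside \textsc{Coreset-Framework}, concluding the proof.
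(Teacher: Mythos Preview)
The paper does not actually prove this theorem: it is quoted verbatim as a black-box result from~\cite{zahimsc} (note the citation in the theorem header and the reference to ``Algorithm~4 in~\cite{zahimsc}''), and the thesis moves directly from the statement to applying it in the proof of Theorem~\ref{theorem - coreset offline}. So there is no ``paper's own proof'' to compare against.

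That said, your sketch is a faithful outline of the standard Feldman--Langberg sensitivity-sampling argument that underlies results of this type, and the two-stage structure (pointwise Hoeffding bound followed by a uniform bound via the VC/pseudo-dimension of the normalized class $g_y(p)=w(p)f(p,y)/s(p)$) is exactly right. One place where you are hand-waving more than you should is the weight-range bullet: the inequality $u(p)\ge w(p)$ would require $s(p)\le t/m$, which is certainly not implied by $s(p)\in[1,t]$ (and $s(p)\ge 1$ is not assumed anywhere). Both weight bounds, as well as the exact preservation of total weight, rely on the specific mechanics of \textsc{Coreset-Framework} in~\cite{zahimsc} (e.g.\ capping of probabilities and explicit rescaling), not on the raw importance-sampling formula $u(p)=w(p)t/(m\,s(p))$ alone; you would need to open that algorithm to justify those claims rigorously.
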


The following theorem is the main result of this thesis and its states that we can compute a small $\eps$-coreset for a set $L$ of lines and an $\eps \in (0,1)$.

\begin{theorem} [coreset for $k$-line median] \label{theorem - coreset offline}
Let $L'=(L,w)$ be a weighted set of $n$ lines in $\REAL^d$, $k \geq 1$ be an integer, $\eps,\delta\in(0,1)$ and $m >1$ be an integer such that
$$
m \geq \frac{cd^2k\log^2(k) \log^2(n)+\log(1/\delta)}{\eps^2},
$$ 
for some universal constant $c>0$ that can be determined from the proof, and $\Q_k=\br{B\subseteq\REAL^d\mid\abs{B}=k}$. Let $(S,u)$ be the output of a call to $\coreset(L,k,m)$; see Algorithm \ref{algorithm - coreset}. Then, with probability at least $1-\delta$, $(S,u)$ is an $\eps$-coreset for the query space $(F^*_{L'} ,\Q_k, \dist, \norm{\cdot}_1)$, where $F^*_{L'}$ is defined as in Corollary~\ref{corollary - the range space is O(dklog k)}. Moreover, $(S,u)$ can be computed in time
$$
O\of{d^2 n\log (n) k \log k} + ndk^{O(k)}
.$$
\end{theorem}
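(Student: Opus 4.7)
The plan is to assemble the pieces already developed in the preceding sections: the $(\alpha,\beta)$-bicriteria approximation of Algorithm~\ref{algorithm - a,b approx}, the per-line sensitivity bound of Algorithm~\ref{algorithm - sensitivityoftranslatedlines}, and the generic coreset framework of Theorem~\ref{theorem - the output of coreset is coreset}. The argument follows the flow of Algorithm~\ref{algorithm - coreset} line-by-line.

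First, I would invoke Theorem~\ref{theorem - the output of a,b approx is indeed a,b approx} on the call $B:=\bicriteria(L,j)$ in Line~2 with $j=cdk\log_2 k$. This yields, with probability at least $1-\delta/2$, a set $B\subseteq\REAL^d$ of size $\beta k$ with $\beta\in O(dk\log(k)\log n)$ such that $\cost(L,B)\le 4\rho^2\cdot\min_{P\in\Q_k}\cost(L,P)$, computable in $O(nd^2k\log k+j^2\log n)$ time. Next, for each $b\in B$ and its cluster $L_b$ (Line~3), I would apply Lemma~\ref{theorem - total sensitivity can be computed given a,b-approx} to the output of $\sensitivityoftranslatedlines(L_b,b,k)$ in Line~5: this produces sensitivity bounds $s_b(\ell)$ for the translated lines whose total, summed over all $b\in B$, is at most $\alpha\beta k^{O(k)}\log n=dk^{O(k)}\log^2 n$. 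The overall time for this step is $O(d^2 n\log(n)k\log k)+ndk^{O(k)}$.

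Second, I would verify that the function $s(\ell):=\dist(\ell,b)/\cost(L,B)+2s_b(\ell)$ defined in Line~8 is a valid sensitivity bound for $S^*_{L,k}(\ell)$. This is exactly the bound derived inside the proof of Lemma~\ref{theorem - total sensitivity can be computed given a,b-approx} (equation~\eqref{equation - S^*_L is bounded by S^*_L_tag - b}, absorbing the constants $\rho\alpha$ and $\rho^2(\alpha+1)$ into the factor $2$). Summing yields $t:=\sum_{\ell\in L}s(\ell)\in dk^{O(k)}\log^2 n$. Combining with Corollary~\ref{corollary - the range space is O(dklog k)}, which gives $d^*=O(dk\log k)$ for the VC-dimension of the range space of $F^*_{L'}$, I would then plug $(L,w,s,m)$ into Theorem~\ref{theorem - the output of coreset is coreset}. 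That theorem requires
\[
m\ge \frac{c(t+1)}{\eps^2}\of{d^*\log(t+1)+\log(1/\delta)},
\]
and substituting $t\in dk^{O(k)}\log^2 n$ and $d^*\in O(dk\log k)$ yields an upper bound dominated by $cd^2k\log^2(k)\log^2(n)/\eps^2+\log(1/\delta)/\eps^2$, which is precisely the hypothesis of the theorem. Hence, conditional on the success of the bicriteria and sensitivity stages, $(S,u)$ is an $\eps$-coreset for $(F^*_{L'},\Q_k,\dist,\norm{\cdot}_1)$ with probability at least $1-\delta/2$; a union bound over the two randomized stages gives the desired overall probability $1-\delta$ (after re-scaling the constants).

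Finally, for the running time I would simply add the costs of the three stages: $O(nd^2k\log k+m^2\log n)$ for $\bicriteria$, $O(d^2 n\log(n)k\log k)+ndk^{O(k)}$ for the sensitivity computations, and $O(n)$ for the sampling step of Theorem~\ref{theorem - the output of coreset is coreset}. Since $m$ is only polylogarithmic in $n$ and polynomial in $d$ and $k$, the bicriteria term $m^2\log n$ is absorbed into the sensitivity term, yielding the stated bound $O(d^2 n\log(n)k\log k)+ndk^{O(k)}$. The main obstacle I expect is not any single calculation but bookkeeping: carefully checking that the sensitivity bound in Line~8 is really the one proved in Lemma~\ref{theorem - total sensitivity can be computed given a,b-approx} (in particular that the additive split $\dist(\ell,b)/\cost(L,B)+2s_b(\ell)$ dominates $S^*_{L,k}(\ell)$ up to the constants absorbed into $c$), and that the $\alpha\beta k^{O(k)}\log n$ total sensitivity combined with the $O(dk\log k)$ VC-dimension really produces the $d^2k\log^2(k)\log^2(n)/\eps^2$ sample-size bound demanded in the theorem statement.
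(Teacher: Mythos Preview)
Your proposal is correct and follows essentially the same route as the paper's own proof: invoke Theorem~\ref{theorem - the output of a,b approx is indeed a,b approx} for the bicriteria set $B$, feed $B$ into Lemma~\ref{theorem - total sensitivity can be computed given a,b-approx} to obtain a total sensitivity bound of order $\alpha\beta k^{O(k)}\log n$, combine with the VC-dimension bound $d^*\in O(dk\log k)$ from Corollary~\ref{corollary - the range space is O(dklog k)}, and then plug everything into the generic sampling Theorem~\ref{theorem - the output of coreset is coreset}. Your treatment is in fact slightly more careful than the paper's: you explicitly split the failure probability between the bicriteria and sampling stages via a union bound, and you spell out why the $m^2\log n$ term from \bicriteria\ is absorbed into the dominant $O(d^2 n\log(n)k\log k)+ndk^{O(k)}$ term---points the paper glosses over.
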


\begin{proof}
Let $b >0$ be an integer. Let $B$ be the bi-criteria approximation set that is computed in Line 2 of Alg.~\ref{algorithm - coreset}. Then substituting  $L, k$ and $\delta$ in Theorem~\ref{theorem - the output of a,b approx is indeed a,b approx} yields taht with probability $1-\delta$, $B$ is an $\of{\alpha,\beta}$-approximation of $L$, where $\alpha= 4 \rho^2$ and $\beta= O\of{d\log (n) \log k}$.

By Lemma~\ref{theorem - total sensitivity can be computed given a,b-approx}, given $B$ we can compute a sensitivity bound $S_{L,k}(\ell) \geq  S^*_{L,k}(\ell)$ for each line $\ell \in L$, s.t. the total sensitivity is then bounded by $S_k(L) \in \alpha \beta k^{O(k)}\log n$. Furthermore, by Corollary~\ref{corollary - the range space is O(dklog k)}, the VC-dimension of the range space that is induced by $F^*_{L'}$ and $\Q_k$ is $d^* \in O(dk \log k)$.

Substituting the set of input lines $P = L$, the VC-dimension $d^*$ and the total sensitivity $t = S_k(L) \in \alpha \beta k^{O(k)}\log n$ in Theorem~\ref{theorem - the output of coreset is coreset}, yields that $(S,u)$ is an $\eps$-coreset for $(F^*_L ,\Q_k, \dist, \norm{\cdot}_1)$, with probability at least $1-\delta$.

\paragraph{Running time.} By Theorem \ref{theorem - the output of a,b approx is indeed a,b approx}, we can compute the \abapprox~of $L$ in\\$O\of{d^2nk\log k +j \log n}$ time, for $j \in O(dk \log k)$, as defined in Line 1 of Alg.~\ref{algorithm - coreset}. By Lemma \ref{theorem - total sensitivity can be computed given a,b-approx}, the time it takes to compute the sensitivity for every line $\ell \in L$ is 
$$
O\of{d^2 n\log (n) k \log k } + ndk^{O(k)}
.$$ 
Since the rest of the algorithm takes time that is linear in the input size, we get a total running time of
$$
O\of{d^2 n\log (n) k \log k} + ndk^{O(k)}
,$$
and that proves the theorem.

\end{proof}

\section{Coreset for Streaming Data} \label{section - Coreset for Streaming Data}

In the previous section we showed how to compute an $\eps$-coreset for a query space$(F^*_L,\Q_k,\dist,\norm{\cdot}_1)$. However, we assumed that the input is finite and stored in memory. In this section we generalize the result to support streaming data. In this model, we are given a (possibly infinite) stream of lines. Our goal is to maintain the $\eps$-coreset $C_n$ for the (first) $n$ lines in the stream that we saw till now (i.e., for every $n\geq 1$). The required memory to maintain the coreset, as the insertion time per point should be only poly-logarithmic in $n$.

In the following definition ``sequence'', is an ordered multi-set.

\begin{definition} [input stream \cite{zahimsc}]
Let $L = \br{\ell_1, \ell_2, \ldots}$ be a (possibly infinite, unweighted) ordered set of lines in $\REAL^d$. A \emph{stream of lines} from $L$ is a procedure whose $i$th call returns the $i$th line $\ell_i$ in a sequence of lines that are contained in $L$, for every $i \geq 1$.
\end{definition}

The idea behind the merge-and-reduce tree that is shown in Algorithm 4 in \cite{zahimsc} is to merge every pair of small subsets and then reduce them by half. The relevant question is what is the smallest size of input that our coreset construction can always reduce by at least half. The log-Lipschitz property below is needed for approximating the cumulative error during the construction of the tree.

\begin{definition} [halving function \cite{zahimsc}]
Let $\eps, \delta, r >0$. A non-decreasing function $\hlf : [0, \infty) \to [0, \infty)$ is an $(\eps, \delta, r)$-\emph{halving} function of a function $\size : [0,\infty)^4 \to [0,\infty)$ if for every $h \geq 1$ and $n =\hlf(h)$ we have 
$$
\size\of{2n, 2^hn, \frac{\eps}{h}, \frac{\delta}{h}} \leq n.
$$
and $\hlf$ is $r$-log-Lipschitz, i.e., for every $\Delta \geq 1$ we have $\hlf(\Delta h) \leq \Delta^r\hlf(h)$.
\end{definition}

The following corollary enables us to determine what is the smallest subset that can be cut by a half using the output size of our off-line coreset.

\begin{corollary} [\cite{zahimsc}] \label{corollary - hlf is halving function of size}
Let $\eps, \delta \in (0,1)$, and $u : (0, \infty) \to (e, \infty)$ such that $u(\cdot)$ is $r$-log-Lipschitz function for some $r \geq 1$. Let $b \geq 1$ and $\size : [0, \infty)^4 \to [0, \infty)$ be a function such that
$$
\size\of{2n,w,\frac{\eps}{h},\frac{\delta}{4^h}} \leq \of{\frac{u(h) \ln(w)}{hb}}^b,
$$
for every $h,n,w \geq e$. Let $\hlf : [0, \infty) \to [0, \infty)$ be a function such that
$$
\hlf(h) \geq \of{4u(h) \ln \of{4u(h)}}^b
$$
for every $h \geq 1$.Then, $\hlf$ is an $(\eps, \delta, 2br)$-halving function of the function $\size$.
\end{corollary}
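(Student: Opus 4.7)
The plan is to verify directly the two conditions in the definition of an $(\eps, \delta, 2br)$-halving function: (i) the halving inequality $\size(2n, 2^h n, \eps/h, \delta/h) \leq n$ at $n = \hlf(h)$, and (ii) the $2br$-log-Lipschitz property of $\hlf$. For concreteness I would take $\hlf(h) = \of{4u(h)\ln(4u(h))}^b$, the minimal function allowed by the hypothesis.

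For the log-Lipschitz property, I would start from the elementary inequality $\ln(Cx) \leq C \ln x$ valid for $x \geq e$ and $C \geq 1$ (proved by noting $\ln C \leq C - 1 \leq (C-1)\ln x$). Combined with the $r$-log-Lipschitzness of $u$, for $\Delta \geq 1$ we get
\[
4u(\Delta h)\ln(4u(\Delta h)) \leq 4\Delta^r u(h)\cdot \ln\bigl(\Delta^r\cdot 4u(h)\bigr) \leq 4\Delta^r u(h)\cdot \Delta^r\ln(4u(h)) = \Delta^{2r}\cdot 4u(h)\ln(4u(h)),
\]
using that $4u(h) \geq 4e \geq e$. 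Raising both sides to the $b$-th power yields $\hlf(\Delta h) \leq \Delta^{2br}\hlf(h)$.

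For the halving inequality, substitute $w = 2^h n$ into the hypothesis to obtain
\[
\size\of{2n, 2^h n, \frac{\eps}{h}, \frac{\delta}{4^h}} \leq \of{\frac{u(h)\ln(2^h n)}{hb}}^b,
\]
and since $\size$ is monotone non-increasing in its last argument (smaller failure probability requires larger size), this also upper-bounds $\size(2n, 2^h n, \eps/h, \delta/h)$ because $\delta/4^h \leq \delta/h$. Now with $n^{1/b} = 4u(h)\ln(4u(h))$, it suffices to show $\ln(2^h n) \leq 4hb\ln(4u(h))$. Writing $\ln(2^h n) = h\ln 2 + b\ln\bigl(4u(h)\ln(4u(h))\bigr)$ and using $\ln(y\ln y) \leq 2\ln y$ for $y \geq e$, we get $\ln(2^h n) \leq h\ln 2 + 2b\ln(4u(h))$. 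Finally, since $b \geq 1$ and $\ln(4u(h)) \geq \ln 4 > \ln 2$, we have $4hb\ln(4u(h)) \geq 2bh\ln(4u(h)) + 2b\ln(4u(h)) \geq h\ln 2 + 2b\ln(4u(h))$, which closes the argument.

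The main obstacle is simply keeping the chain of inequalities tight enough: the bound on $\hlf(h)$ is essentially the unique size for which the hypothesis $(u(h)\ln(w)/(hb))^b$ becomes $\leq n$ when $w = 2^h n$, so there is little slack, and one must carefully exploit the $\ln\ln$ savings in $\ln(y\ln y)$ and the fact that $u(h) > e$ to absorb the $h\ln 2$ term coming from the $2^h n$ factor. The rest is a verification.
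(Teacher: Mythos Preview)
The paper does not prove this corollary at all: it is quoted verbatim from the cited reference~\cite{zahimsc} and used as a black box in the proof of Theorem~\ref{theorem - summarize of streaming version for lines}. So there is no ``paper's own proof'' to compare against, and your proposal is in fact supplying an argument the thesis omits.

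On its own merits your argument is essentially correct. Two small points are worth flagging. First, you explicitly restrict to the minimal choice $\hlf(h)=\bigl(4u(h)\ln(4u(h))\bigr)^b$; this is the right move, since the corollary as literally stated (``$\hlf(h)\geq\ldots$'' with no further structural assumption) cannot imply log-Lipschitzness of an arbitrary larger $\hlf$. Second, your step from $\size(\cdot,\cdot,\cdot,\delta/4^h)$ to $\size(\cdot,\cdot,\cdot,\delta/h)$ relies on $\size$ being non-increasing in its failure-probability argument; this is the natural and intended monotonicity for a coreset-size function, but it is an assumption you are importing rather than one stated in the corollary. With those caveats, the chain of inequalities you give for both the $2br$-log-Lipschitz property and the halving inequality checks out: the key estimates $\ln(Cx)\leq C\ln x$ for $x\geq e$, $\ln(y\ln y)\leq 2\ln y$ for $y\geq e$, and the use of $u(h)>e$, $b\geq1$, $h\geq1$ to absorb the $h\ln 2$ term are all valid.
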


\begin{definition} [$(\eps,\delta)$-coreset scheme \cite{zahimsc}]

Let $(P, Y, f, \loss)$ be a query space such that $P$ is an (unweighted, possibly infinite) set. Let 
$\size : [0, \infty)^4 \to [1, \infty)$ and $\timee : [0, \infty)^4 \to [0, \infty)$. Let 
$\bcoreset$ be an algorithm that gets as input a weighted set $Q' =(Q,w)$ such that $Q \subseteq P$, an approximation error $\eps > 0$ and a probability of failure $\delta \in (0, 1)$. The tuple $(\bcoreset, \size, \timee)$ is called an $(\eps,\delta)$-\emph{coreset scheme} for $(P, Y, f, \loss)$ if $(i)$-$(iii)$ hold as follows:\\
\\$(i)$ A call to $\bcoreset(Q', \eps, \delta)$  returns a weighted set $(S, u)$.\\
\\$(ii)$ With probability at least $1-\delta$, $(S, u)$ is an $\eps$-coreset of $(P, Y, f, \loss)$.\\
\\$(iii)$ The computation time of $(S,u)$ is $\timee\of{\abs{Q},\sum_{q \in Q}w(q)/\min_{p}w(p),\eps,\delta}$ and its size is
$$
\abs{S} \leq \size \of{\abs{Q}, \sum_{q \in Q}w(q)/\min_{p}w(p),\eps,\delta}
,$$
where the minimum is over every $p \in Q$ with a positive weight $w(p)$.
\end{definition}

\begin{definition} [$(\eps,\delta,r)$-mergable coreset scheme \cite{zahimsc}]
Let $(\coreset,\size,\timee)$ be an $(\eps,\delta)$-coreset scheme for the
query space $(P, Y, f, \loss)$, such that the total weight of the coreset and the input is the same, i.e. a call to $\coreset((Q,w),\eps,\delta)$ returns a weighted set $(S,u)$ whose overall weight is $\sum_{p \in S}u(p) = \sum_{p \in Q}w(p)$. 
Let $\hlf$ be an $(\eps,\delta,r)$-halving function for $\size$.
Then the tuple $(\coreset, \hlf, \timee, \size)$ is an $(\eps,\delta,r)$-\emph{mergable coreset scheme} for $(P,Y, f, \loss)$.
\end{definition}

The following theorem states a reduction from off-line coreset construction to a coreset that is maintained during streaming. The required memory and update time depends only logarithmically in the number $n$ of lines seen so far in the stream. It also depends on the halving function that corresponds to the coreset via $\hlf(\cdot)$. 

The theorem below holds for a specific $n$ with probability at least $1-\delta$. However, by the union bound we can replace $\delta$ by, say, $\delta/n^2$ and obtain, with high probability, a coreset $S'_n$ for each of the $n$ lines insertions, simultaneously.

\begin{theorem} [\cite{zahimsc}] \label{theorem - the output ofstream alg is coreset}
Let $(\coreset, \hlf, \timee, \size)$ be an $(\eps,\delta,r)$-mergable coreset scheme for $(P,Y,f,\loss)$, where $\hlf$ is an $(\eps,\delta,r)$-halving function of $\size$, $r\geq 1$ is a constant and $\eps,\delta\in(0,1/2)$. Let $stream$ be a stream of items from $P$. Let $S'_n$ be the $n$th output weighted set of a call to \textsc{Streaming-Coreset}$(stream,\eps/6,\delta/6,\coreset,\hlf)$; See Algorithm \ref{algorithm - streaming coreset}. Then, with a probability at least $1-\delta$,
\begin{itemize}
\item (Correctness) $S'_n$ is an $\eps$-coreset of $(P_n,Y,f,\loss)$, where $P_n$ is the first $n$ items in $stream$.

\item (Size) $\abs{S_n} \in \size \of{\hlf(h), n, \eps, \delta}$ for some constant $h \geq 1$.

\item (Memory) there are at most $b = \hlf(h) \cdot O(\log^{r+1}n)$ items in memory during the computation of $S'_n$.

\item (Update time) $S'_n$ is outputted in additional $t = O(\log n)\cdot \timee \of{b,n,\frac{\eps}{O(\log n)},\frac{\delta}{n^{O(1)}}}$ time after $S'_{n-1}$. 

\item (Overall time) $S'_n$ is computed in $nt$ time.

\end{itemize}

\end{theorem}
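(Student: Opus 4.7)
The plan is to analyze the standard merge-and-reduce binary tree that the procedure \textsc{Streaming-Coreset} maintains. After $n$ insertions the algorithm will store at most one coreset at each level $h \in \br{0,1,\ldots,H}$ with $H \in O(\log n)$; the coreset at level $h$ summarizes a disjoint block of roughly $\hlf(h) \cdot 2^h$ stream items. When a new block of $\hlf(h)$ raw items arrives at the bottom it is compressed by a call to \coreset; whenever two coresets occupy the same level they are merged and then reduced via \coreset to produce a single coreset at the next level up. I would prove the five bullets simultaneously by induction on the level.

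\textbf{Correctness.} First I would establish the following one-step composition lemma: if $C_1$ is an $\eps_1$-coreset of $P_1$ and $C_2$ is an $\eps_2$-coreset of $P_2$, and $C$ is an $\eps_3$-coreset of the weighted union $C_1\cup C_2$ (which is a legal input to \coreset because the equal-total-weight property built into a mergeable coreset scheme preserves $\sum w$ at each reduction), then for every $y\in Y$
$$(1-\eps_1)(1-\eps_2)(1-\eps_3)\, f_{\loss}(P_1 \cup P_2, y) \leq f_{\loss}(C, y) \leq (1+\eps_1)(1+\eps_2)(1+\eps_3)\, f_{\loss}(P_1 \cup P_2, y).$$
Composing this along any root-to-leaf path of length at most $H\in O(\log n)$, with per-call error $\eps/\Theta(H)$ (which is exactly how the outer $\eps/6$ given to \textsc{Streaming-Coreset} is scheduled over levels), yields overall multiplicative error at most $1+\eps$. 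For the failure probability, each of the $n^{O(1)}$ coreset constructions ever invoked will be called with failure parameter $\delta/n^{O(1)}$, and a union bound yields overall failure at most $\delta$.

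\textbf{Size, memory, and time.} The next step is to invoke the halving hypothesis: a call to \coreset on $2\hlf(h)$ weighted items with parameters $\eps/h$ and $\delta/4^h$ produces at most $\hlf(h)$ output items. Consequently the coreset kept at level $h$ has size at most $\hlf(h)$; because $\hlf$ is $r$-log-Lipschitz, $\hlf(H) \leq H^r \hlf(1) \in O(\log^r n)\cdot \hlf(1)$. Since at most $O(\log n)$ coresets coexist in memory (one per level), the memory footprint is bounded by $O(\log n)\cdot \hlf(H)\in \hlf(h)\cdot O(\log^{r+1}n)$. For update time, each insertion triggers at most one reduction per level by standard binary-counter amortization, giving $O(\log n)$ \coreset calls on inputs of size at most $b$, each costing $\timee\of{b,n,\eps/O(\log n),\delta/n^{O(1)}}$; summing over the $n$ insertions gives overall time $nt$.

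\textbf{The hard part.} The delicate step will be showing that, despite the unbounded growth of $n$ in the streaming model, the tree depth at every moment is bounded by $H\in O(\log n)$ uniformly, and that the local parameter choices absorb into the final size bound without blowing up. This hinges on using the log-Lipschitz property of $\hlf$ to swallow the extra logarithmic factors into $\hlf(h)\cdot O(\log^{r+1}n)$, and on tuning the exponents in Corollary~\ref{corollary - hlf is halving function of size} so that the halving hypothesis $\size\of{2\hlf(h),\cdot,\eps/h,\delta/4^h}\leq \hlf(h)$ remains valid at every level reached during the stream. A secondary subtlety is that \textsc{Streaming-Coreset} does not know $n$ a priori, so all local $\eps$ and $\delta$ budgets must be scheduled as a function of the current tree level rather than the yet-unknown $n$, and any bookkeeping of ``$h$'' in the size bound must be interpretable at query time from the state of the binary counter.
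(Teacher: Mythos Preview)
The paper does not supply its own proof of this theorem: it is quoted verbatim as a black-box result from~\cite{zahimsc} and then applied in Theorem~\ref{theorem - summarize of streaming version for lines}. So there is no ``paper's proof'' to compare against here.

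That said, your proposal is the standard merge-and-reduce analysis and is exactly the line of argument used in the cited source. The composition lemma, the binary-counter amortization for update time, the $O(\log n)$ levels with one coreset per level, and the use of the $r$-log-Lipschitz bound $\hlf(H)\leq H^r\hlf(1)$ to convert per-level size into the $\hlf(h)\cdot O(\log^{r+1}n)$ memory bound are all correct and match the original. One small caveat: your composition step silently assumes $f_{\loss}(P_1\cup P_2,y)=f_{\loss}(P_1,y)+f_{\loss}(P_2,y)$, which is what makes ``a union of coresets is a coreset for the union'' work; this holds for $\loss=\norm{\cdot}_1$ (the only case the present paper actually uses) but is not automatic for an arbitrary $\loss$, so in a fully general write-up you would state that hypothesis explicitly.
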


\begin{algorithm}[H]
	\caption{$\streamingcoreset(stream,\eps,\delta,\coreset,\hlf)$\label{algorithm - streaming coreset}}
	\begin{tabbing}
		\textbf{Input:} \quad\quad\= An input $stream$ of items from a set $P$, an error parameter $\eps\in(0,1/2)$, \\\quad\quad\>probability of success $\delta \in (0,1)$, an algorithm \coreset~and a function \\\quad\quad\>$\hlf:[0,\infty)\to[0,\infty)$. \\
		\textbf{Output:} \>A sequence $S'_1, S'_2,\ldots$ of coresets that satisfies Theorem \ref{theorem - the output ofstream alg is coreset}.

	\end{tabbing}
	\vspace{-0.3cm}
	
	\nl \For{every integer $h$ from $1$ to $\infty$}{
		
		\nl Set $S_i \coloneqq \emptyset$ for every integer $i \geq 0$ 
		
		\nl $T_{h-1} \coloneqq S_{h-1}$
		
		\nl \For{$\lceil 2^{h-1}\cdot \hlf(h) \rceil$ iterations}{
				
				\nl Read the next item $p$ in $stream$ and add it to $S_0$
				
				\nl \uIf{$\abs{S_0} = \hlf(h)$}{
    				
    				\nl $i \coloneqq 0 ; S \coloneqq \emptyset$
    				
    				\nl \While{ $ S \ne \emptyset$ } {
    					
    					\nl $S \coloneqq \coreset \of{S \cup S_i,\eps/h,\delta/4^h}$
    					
    					\nl $S_i \coloneqq \emptyset$
    					
    					\nl $i \coloneqq i +1$
    				}
				
				\nl $S_i \coloneqq S$    			
    			}	
    				\nl $S'_n \coloneqq \coreset \of{\of{\bigcup_{i=0}^{h-1}T_i} \cup \of{\bigcup_{i=0}^hS_i},\eps,\delta}$
				  
  }

}
	
	\nl \Return $S'_n$
\end{algorithm}

\begin{theorem} \label{theorem - summarize of streaming version for lines}
Let $stream=\br{\ell_1,\ell_2,\ldots}$ be a stream of lines in $\REAL^d$, and let $n>0$ denote the number of lines seen so far in the stream. Let $k \geq 1$ be an integer, $c>0$ be a suffice large constant, $\eps, \delta \in (0,1)$, and let 
$$
m \geq \frac{cd^2k\log^2(k)\log^2(n)\log(e/\delta)}{\eps^2}.
$$ 
For every $h \geq 1$ we define
\begin{equation} \label{equation - hlf(h) value}
\hlf(h) = \frac{ch^9m^3}{\ln^3(n)}.
\end{equation}
Let $S'_1,S'_2,\ldots$ be the output of a call to $\textsc{Streaming-Coreset}(stream, \eps/6, \delta/6,\coreset,\hlf)$, where a call to $\coreset((Q,w),\eps,\delta)$ returns a weighted set $S'=(S,u)$ whose overall weight is $\sum_{p \in S}u(p) = \sum_{p \in Q}w(p)$; See Alg. \ref{algorithm - streaming coreset}. Then, with probability at least $1-\delta$, the following hold. For every $n \geq 1$:\\
\\$(i)$ (Correctness) $S'_n$ is an $\eps$-coreset of $(L_n ,\Q_k, \dist, \norm{\cdot}_1)$, where $L_n$ is the first $n$ lines in $stream$.\\
\\$(ii)$ (Size)
$$
\abs{S'_n} \in O\of{\frac{m^3}{\ln^3n}}
.$$
$(iii)$ (Memory) there are
$$
b \in O\of{m^3}
$$
lines in memory during the streaming.\\
\\$(iv)$ (Update time) $S'_n$ is outputted in additional 
$$
\displaystyle t \in O\of{d^2 b\log (b) k \log k} + bdk^{O(k)}
$$
time after $S'_{n-1}$.\\
\\$(v)$ (Overall time) $S'_n$ is computed in $nt$ time.

\end{theorem}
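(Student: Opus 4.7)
The plan is to reduce the theorem to Theorem~\ref{theorem - the output ofstream alg is coreset} (the generic streaming-from-offline reduction) by verifying that the offline construction $\coreset$ from Theorem~\ref{theorem - coreset offline} together with $\hlf$ forms an $(\eps,\delta,r)$-mergable coreset scheme for the query space $(F^*_{L'},\Q_k,\dist,\norm{\cdot}_1)$, for a suitable constant $r$. The four things to check are: (a) $\coreset$ is an $(\eps,\delta)$-coreset scheme with a bound $\size$ on its output cardinality; (b) $\coreset$ preserves total weight, so it can be used inside merge-and-reduce (this follows from the sampling step in Alg.~\ref{algorithm - coreset} together with the scaling $u(\ell)=1/(|S|\prob(\ell))$, which is the standard argument); (c) $\hlf$ as defined in~\eqref{equation - hlf(h) value} is an $(\eps,\delta,r)$-halving function of $\size$; and (d) the runtime function $\timee$ coming from Theorem~\ref{theorem - coreset offline} satisfies the hypotheses needed to instantiate the streaming theorem.

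For (a), Theorem~\ref{theorem - coreset offline} gives, after absorbing constants,
\[
\size(n,w,\eps,\delta)\ \leq\ \frac{c\of{d^2k\log^2 k\cdot\log^2 w+\log(1/\delta)}}{\eps^2},
\]
where we use $w$ in place of $n$ since the offline bound applies to any weighted input whose VC-dimension bound depends only on the support; the $\log^2$ factor from Corollary~\ref{corollary - the range space is O(dklog k)} turns into $\log^2 w$ when a weighted set of total weight $w$ is merged in the tree. Rewriting this bound as $(u(h)\ln(w)/(hb))^b$ with $b=3$ and $u(h)=\Theta\of{h\cdot\sqrt[3]{d^2k\log^2 k+\log(1/\delta)}/\eps^{2/3}}$, one sees that $u(\cdot)$ is $1$-log-Lipschitz. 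Plugging into Corollary~\ref{corollary - hlf is halving function of size} with this $b$ and $u$, the corollary requires $\hlf(h)\geq(4u(h)\ln(4u(h)))^{3}$; a direct computation shows that the choice $\hlf(h)=ch^9m^3/\ln^3 n$ from~\eqref{equation - hlf(h) value} dominates $(4u(h)\ln(4u(h)))^3$, using $m\geq cd^2k\log^2(k)\log^2(n)\log(e/\delta)/\eps^2$. Hence $\hlf$ is an $(\eps,\delta,r)$-halving function of $\size$ for $r=O(1)$ (coming from $2br$ in the corollary), and $\hlf$ itself is $9$-log-Lipschitz as required.

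Having established the mergable-scheme hypothesis, parts (i)--(v) follow directly from Theorem~\ref{theorem - the output ofstream alg is coreset} with $\eps,\delta$ replaced by $\eps/6,\delta/6$. The correctness claim (i) is immediate. For (ii), $\abs{S'_n}\leq\size(\hlf(h),n,\eps,\delta)=O(m^3/\ln^3 n)$ since $\hlf(h)$ absorbs the dominant term in $\size$. For (iii), the streaming theorem yields at most $\hlf(h)\cdot O(\log^{r+1}n)$ lines in memory, which is $O(m^3)$ after absorbing polylogarithmic factors of $n$ into the constant $h=O(\log n)$. For (iv), we plug the runtime $\timee(b,n,\eps/\log n,\delta/n^{O(1)})=O(d^2 b\log(b)k\log k)+bdk^{O(k)}$ from Theorem~\ref{theorem - coreset offline} (adjusting the constant $c$ in $m$ to absorb the $\log(n)$ blow-up in $\eps$ and the $\log(n^{O(1)}/\delta)$ in $\delta$), and (v) is $n$ times the per-step update.

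The main obstacle is bookkeeping (c): one must pick the right exponent $b=3$ so that Corollary~\ref{corollary - hlf is halving function of size} produces exactly the $h^9$ factor in~\eqref{equation - hlf(h) value}, and one must verify that the weighted-input version of Theorem~\ref{theorem - coreset offline} still holds with the $\log^2 n$ factor replaced by $\log^2$ of the total weight, as needed to fit $\size$ into the $(u(h)\ln(w)/(hb))^b$ template. Everything else is a substitution into Theorem~\ref{theorem - the output ofstream alg is coreset}, plus a union bound over $n$ (replacing $\delta$ by $\delta/n^2$, absorbed into constants) to guarantee the claim simultaneously for all $n$.
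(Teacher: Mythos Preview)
Your approach is essentially the same as the paper's: invoke Theorem~\ref{theorem - coreset offline} to obtain the offline scheme $(\coreset,\size,\timee)$, use Corollary~\ref{corollary - hlf is halving function of size} to verify that the given $\hlf$ is a halving function of $\size$, and then plug everything into Theorem~\ref{theorem - the output ofstream alg is coreset}. The only differences are in the bookkeeping for step~(c).

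The paper takes $b=2$ (not $b=3$) and $r=3/2$ in Corollary~\ref{corollary - hlf is halving function of size}, and defines
\[
u(h)=\frac{2h(6h^3m)^{1/2}}{\ln n},
\]
i.e., it absorbs $m$ and $\ln n$ directly into $u$ rather than writing $u$ in terms of $d,k,\eps,\delta$ as you do. With $b=2$ the template $(u(h)\ln w/(2h))^2$ matches the $\log^2 w$ factor in $\size$ exactly. Your choice $u(h)=\Theta(h\cdot\text{const})$ does not actually satisfy the hypothesis of the corollary: the substitution $\eps\mapsto\eps/h$ introduces an $h^2$ factor in $\size(2n,w,\eps/h,\delta/4^h)$ (and $\delta\mapsto\delta/4^h$ an additional $h$), so after dividing out the $h^3$ in the denominator, $(u(h)\ln w/(3h))^3$ must still grow with $h$; a linear $u(h)$ gives no such growth. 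This is easily repaired by taking a higher power of $h$ in $u$, but then your claim that $u$ is $1$-log-Lipschitz, and with it your derivation of the $h^9$ exponent from $b=3$, need to be redone.

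Finally, the paper does not carry out a union bound over $n$ inside the proof; that remark appears only in the discussion preceding Theorem~\ref{theorem - the output ofstream alg is coreset}.
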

\begin{proof}
Substituting $L = L_n$ in Theorem \ref{theorem - coreset offline} yields that\\$(\coreset, \size, \timee)$ is an $(\eps,\delta)$-coreset scheme for the query space $(F^*_{L_n}, \Q_k, \dist, \norm{\cdot}_1)$, where
\begin{equation} \label{equation - bound on coreset size}
\size(n,n,\eps,\delta) \leq \frac{cd^2k\log^2(k)\log^2(n)\log(e/\delta)}{\eps^2} \leq m,
\end{equation}
and
$$
\timee(n,n,\eps, \delta) \in O\of{d^2 n\log (n) k \log k} + ndk^{O(k)}
$$
Let $h, w \geq 1$. We have,
\begin{equation} \label{equation - first bound on size(2n,2n..)}
\size(2n,2n,\eps / h, \delta / 4^h)
\leq \frac{c h^2 d k^k \log^2(n)\log \frac{4h}{\delta}}{\eps^2}
\leq \ln (4) h^3 m ,
\end{equation}
where the first inequality is from the substitution of $2n$ instead of $n$ in \eqref{equation - bound on coreset size}, and the second inequality is by the definition of $m$. 
Let
\begin{equation} \label{equation - u(h) value}
u(h) = \frac{2h \of{6 h^3 m}^{\frac{1}{2}}}{\ln n}
.
\end{equation}
Then, we can bound $\size(2n,2n,\eps / h, \delta / 4^h)$ by $\hlf(h)$ as follows:

\begin{align} 
\size(2n,2n,\eps / h, \delta / 4^h)
&  \leq 6 h^3 m \label{equation - bound on size(2n,2n..) - a}\\
& \leq \of{\frac{u(h) \ln n}{2h}}^2 \label{equation - bound on size(2n,2n..) - b}\\
& \leq  10 u^3(h) \nonumber \\
& \leq \hlf(h), \label{equation - bound on size(2n,2n..) - c}
\end{align}

where \eqref{equation - bound on size(2n,2n..) - a} is by \eqref{equation - first bound on size(2n,2n..)}, \eqref{equation - bound on size(2n,2n..) - b} is by the definition of $u(h)$ in \eqref{equation - u(h) value}, and \eqref{equation - bound on size(2n,2n..) - c} is obtained from the definition of $\hlf(h)$ in \eqref{equation - hlf(h) value}.

Substituting $r=3/2$ and $b=2$ in Corollary~\ref{corollary - hlf is halving function of size} yields that $\hlf$ is an $(\eps,\delta,6)$-halving function of $\size$. Substituting
$$
\hlf(1) \in O\of{\frac{m^3}{\ln^3 n}}
,$$
and
$$
\timee\of{b,n,\frac{\eps}{O(\log n)},\frac{\delta}{n^{O(1)}}} \in 
O\of{d^2 b\log (b) k \log k} + bdk^{O(k)}
$$

in Theorem \ref{theorem - the output ofstream alg is coreset} then proves Theorem \ref{theorem - summarize of streaming version for lines} for the query space $(F^*_{L_n}, \Q_k, \dist, \norm{\cdot}_1)$.
\end{proof}

\chapter{Experimental Results} \label{section - experimental results}

Following motivation to narrow the gap between the theoretical and practical fields, experiments took a dominant place during research, and mainly divided into two parts: (1) The main problem we solve in this thesis - the $k$-line median experiments, where a set of $k$ points (centroids) was computed repeatedly given different distributed sets of lines in $\REAL^2$, $\REAL^3$ and higher dimensions.\\(2) Anomalies detection, which is one of the most fundumental problems in machine learning. In this part anomalies were calculated for variety of input sets of points in $\REAL^d$, using a method from Schulman and Feldman \cite{analysincompletedata} for outliers detection. The need for (2) is a reduction we made to \cite{analysincompletedata} during sensitivity calculation for our coreset for $k$-line median; See Algorithm~\ref{algorithm - sensitivityoftranslatedlines}.

\paragraph{Software.}
We implemented our coreset construction from Algorithm~\ref{algorithm - coreset} and its sub-procedures in Python V. 3.6. We make use of the MKL package \cite{mkl} to improve its performance, but it is not necessary in order to run it. 

\paragraph{Data Sets.}
We evaluate our system on two types of data sets: synthetic data generated with carefully controlled parameters, and real data of roads map from the "Open Street Map" Dataset \cite{openstreetmap} and "SimpleHome XCS7 1002 WHT Security Camera" from the the "UCI Machine Learning Repository" Dataset \cite{uci}.

The roads dataset \cite{openstreetmap} contains $n=10,000$ roads in China from the "Open Street Map" dataset (Fig.~\ref{figure - experimental results} plot (a)), where each road is represented as a 2-dimensional segment that was stretched into an infinite line on a plane. Synthetic data of $n=10,000$ lines was generated as well (Fig.~\ref{figure - experimental results} plot (b)).

The security cameras dataset \cite{uci} contains 40,000 data observations coming from security cameras raw data, each contains 114 features $(d=114)$.  
\section{$k$-line median Experiments}

\subsection{The Experiment}
\paragraph{Experiments on offline data.}
At each iteration of the experiment, a sample whose size increases in every iteration was taken by coreset and by the competitor Random Sample Consensus (RANSAC) \cite{ransac}. The $k$-line median of each sample was calculated by the standard EM algorithm and by our constant approximation $k$-line median algorithm (which is an exhaustive search for $k$-line median over the output of \constantapprox; See Alg.~\ref{algorithm - 4 approx}), and the error was measured by the sum of squared distances from the original set of lines to the $k$ medians that were calculated on the sample.

\paragraph{Experiments on streaming data.}
To produce the main streaming experiment, we created 6 clusters, each consist of 2, 3, 4, 5, 7 and 10 machines on Amazon EC2 platform \cite{amazoneec2}, each cluster computed coreset for different sets of lines from \cite{uci}, using the  merge-and-reduce tree technique. The idea behind the merge-and-reduce tree that is shown in Algorithm 4 in \cite{zahimsc} is to merge every pair of small subsets and then reduce them by half, and since a union of coreset is a global core-set for the union of original data, then distributed calculation in a cluster is a natural approach. We show that the coreset construction running time decreases linearly as the number of machines in the machines cluster increases, where coreset construction was measured 3 different times on 3 different number of centers.

\subsection{Results}
\paragraph{Experiments on offline data.}
In Plot (a) in Graph~\ref{figure - experimental results}, when the sample size was $m=700$ lines out of 10,000 given lines, the coreset error and variance were 1.86 and 0.16, respectively, that is an error of $\eps=0.86$, for a sample size of $m=\lceil 602/\eps \rceil$ lines. On the other hand, the  error and variance of the competitor algorithm with the same sample size were 2.62 and 0.26. This implies that our coreset is more accurate and stable than RANSAC, and that our mathematically provable constant approximation algorithm for $k$-line median works better than a standard EM algorithm also in practice.

\paragraph{Experiments on streaming data.}
Plot (c) in Graph~\ref{figure - experimental results} demonstrates the size of the merge-and-reduce streaming coreset tree during the streaming, which is logarithmic in the number of lines we streamed so far. In Plot (d) in Graph~\ref{figure - experimental results}, we can see how the coreset construction running time decreases linearly as the number of machines in the machines cluster increases (parameters are written in the chart's title), where coreset construction was measured 3 different times on 3 different number of centers. Note that the decrease rate is almost linear in the cluster's machines number and not exactly, due to overhead of communications and I/O.
\begin{figure} 
\includegraphics[scale=0.3]{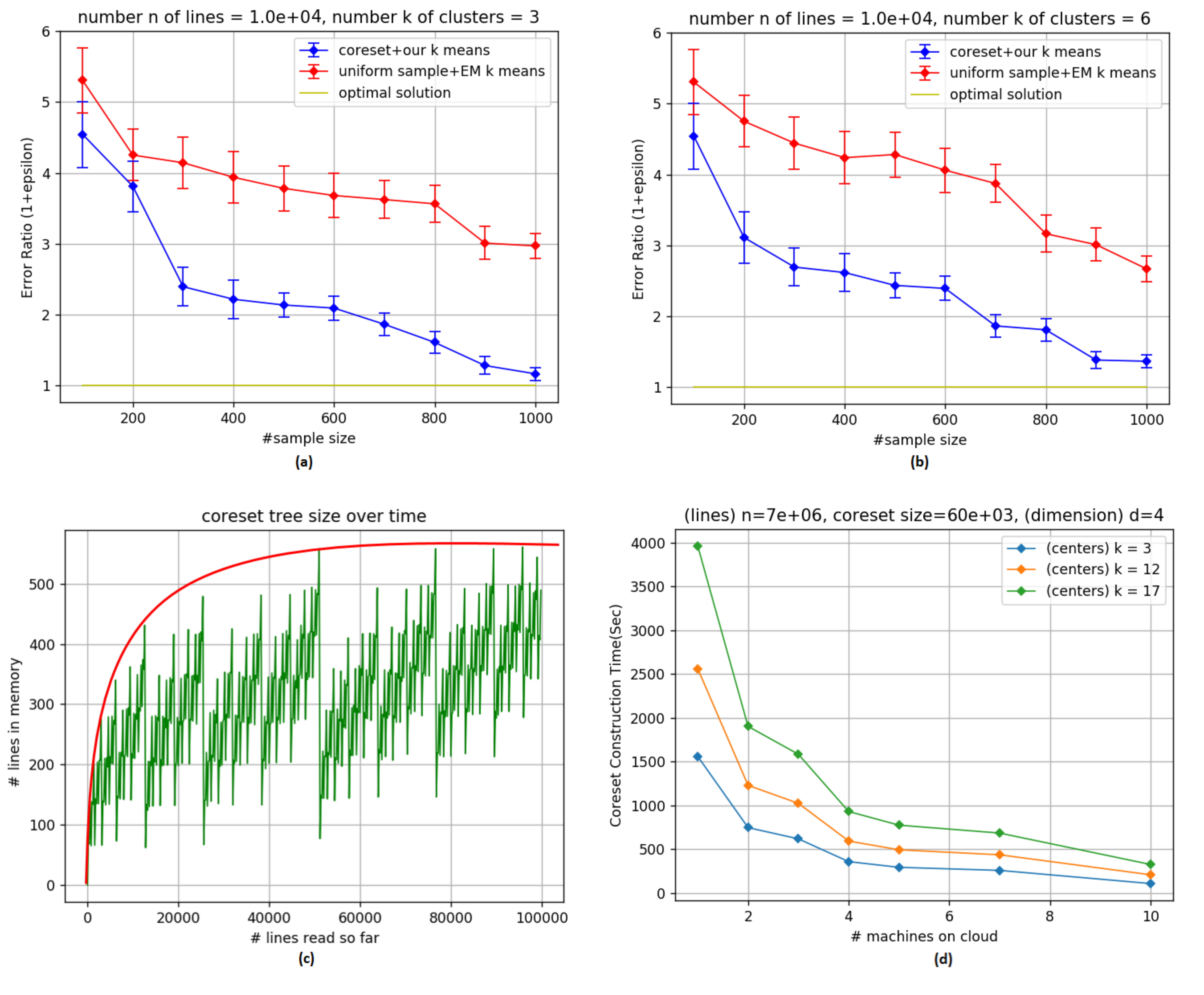}
		\caption{\textbf{$k$-line median experimental results.} Graphs (a) and (b) show that larger sample results in a smaller error by coreset and uniform sampling. Graph (c) shows  that the required memory increases only logarithmically with the number of lines that were read so far from the stream. Graph (d) illustrates  how the coreset construction time decreases near-linearly in the number of machines in Amazon EC2 cluster.}
		\centering		
		\label{figure - experimental results}
\end{figure}

\clearpage

\section{Anomalies Detection Experiments}

One of the steps during the sensitivity calculation of each line $\ell \in L$ for a query space of all the set of $k$ points in $\REAL^d$, is to compute the sensitivities of $n$ points in relate to a query space of all the families of $2k$ points in space (Line 6 in Alg. \ref{algorithm - sensitivityoftranslatedlines}). To this end, we implemented the suggested sensitivity by Langebreg and Schulman~\cite{analysincompletedata}, and measured its performance as well on different outliers detection tasks, in order to examine its correctness, quality and other parameters; See Fig. \ref{figure - weighted_centers_results}.

\subsection{The Experiment}
As in the $k$-line median experiments, the first experiment was a comparison between coreset and RANSAC \cite{ransac}, where in each iteration of the experiment, a sample whose size increases in every iteration was taken by coreset and by the competitor Random Sample Consensus - RANSAC \cite{ransac}. Outliers were computed for each sample by a standard EM algorithm - that is running the following procedure iteratively until a tunable stop condition is being achieved: (1) chose and remove randomly $m$ points (later to be the outliers) out of the input set. (2) partition the remain points in the set into $k$ clusters and (3) sum the distances from each point to its nearest center in the cluster.

The second experiment is being applied as the first one above, but now the error is being measured as the number of outliers each sample (coreset and RANSAC) yields on average on 1000 iterations.

In the third experiment we visualized the quality of the outliers that we found with coreset and with RANSAC using the EM algorithm.

\subsection{Results}

In Plot (a) in Graph~\ref{figure - weighted_centers_results}, we can see in one of the measurements that we get an error of $\varepsilon = 0.4$ for a sample size of $m=1000=\frac{400}{\varepsilon}$ points out of $40000$ given data points, while the  error we got from the competitor algorithm was $1.8$. 

In Plot (b) in Graph~\ref{figure - weighted_centers_results} we show, for example, that running the EM-algorithm for outliers detection on coreset with size of $m=4000$ - which is $10\%$ of the entire data - yields in average a detection of $3.2$ outliers out of 6 , while running it on the same amount of sample we got from the competitor algorithm detects less than 1.

Plot (c) in Graph~\ref{figure - weighted_centers_results} illustrates the quality of the outliers we get from coreset and from RANSAC - it mainly point that coreset contains the outliers while RANSAC does not.

\begin{figure}
		\centering
		\includegraphics[width=18cm]{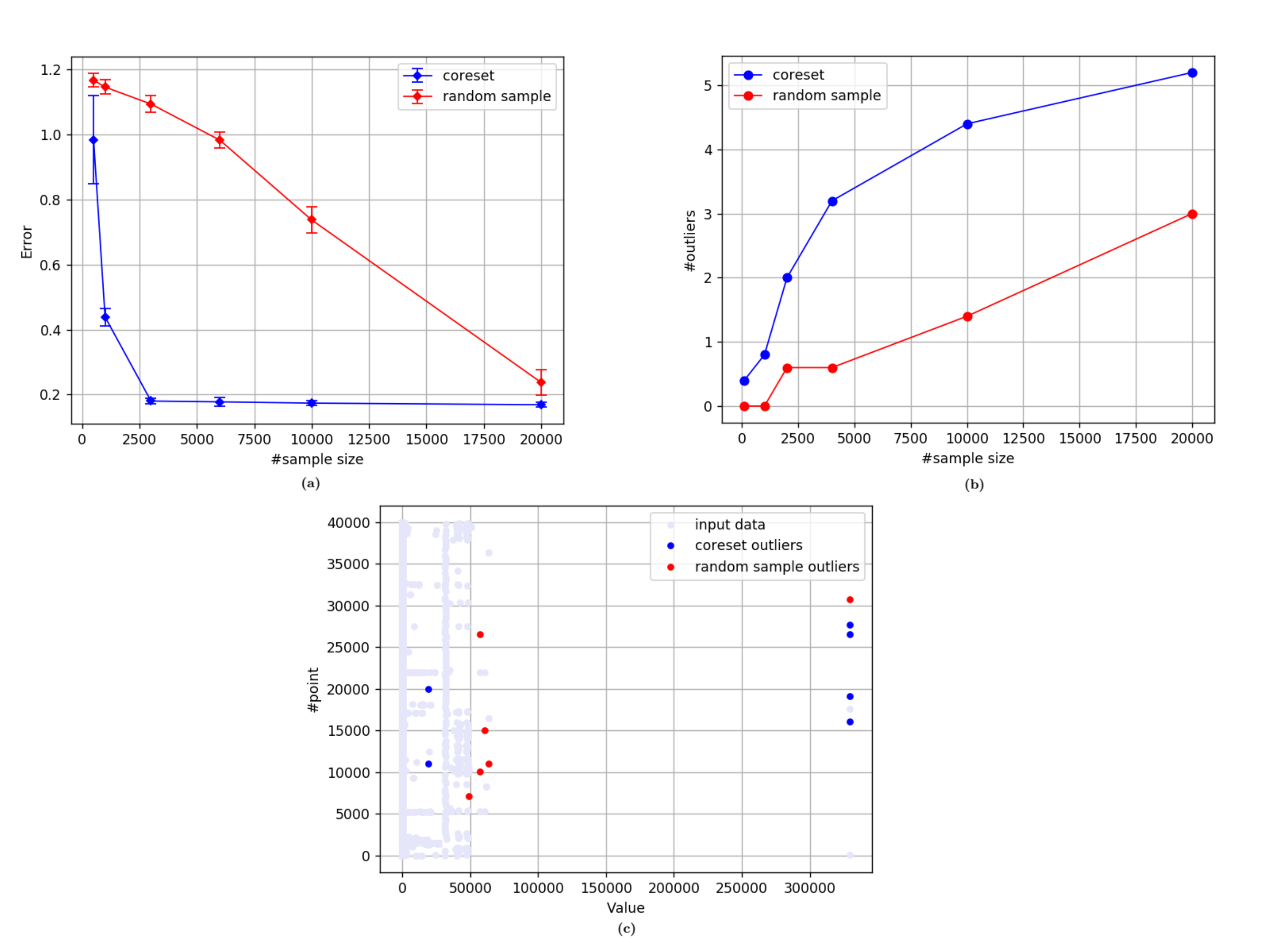}
\caption{\textbf{Anomaly detection experimental results.} Graphs (a) and (b) show how the error decreases as the size of the sample increases. Graph (c) shows the sampling quality - while coreset outliers are dominant outliers - globally in relate to the input set and localy inside the clusters, while RANSAC outliers are much less dominant.
}	
		\label{figure - weighted_centers_results}
\end{figure}

\clearpage
\section{Conclusions and Future Work}
In this thesis we purposed an algorithm that computes an $\eps$-coreset of size near-logarithmic in the input. Moreover, we suggest a streaming algorithm that computes a $(1+\eps)$-approximation for the $k$-line median of any set of lines that is distributed among $M$ machines, where each machine needs to send only near-logarithmic number of input lines to the main server for its computations.
Other future work will consider an input of $j$-dimensional affine sub-spaces in $\REAL^d$ (here input of lines is a private case of $j=1$), where the motivation is multiple missing entries completion.

\begin{singlespace}
\addcontentsline{toc}{chapter}{Bibliography}
\bibliography{references}
\bibliographystyle{plain}
\end{singlespace}

\end{document}